\title{Towards a Theory of Mixing Graphs:
A Characterization of Perfect Mixability}
\titlerunning{Mixing Graphs}
\author{Miguel Coviello Gonzalez}{Department of Computer Science\\ University of California at Riverside}{}{}{}
\author{Marek Chrobak$\star$}{Department of Computer Science\\ University of California at Riverside}{}{}{}
\authorrunning{M. Coviello Gonzalez and M. Chrobak}
\subjclass{%
	\ccsdesc[500]{Discrete Mathematics~Combinatorial Optimization $\bullet$}
	\ccsdesc[500]{Theory of Computation~Network Flows}
}
\keywords{algorithms, graph theory, lab-on-chip, fluid mixing}
\newcommand{\mareksmargincomment}[1]%
    {{%
      \marginpar{{\tiny\begin{minipage}{0.5in}
                       \begin{flushleft}
                          {\color{red}MCh} {#1}
                       \end{flushleft}
                       \end{minipage}
                }}
    }}
\newcommand{\etal}{{\emph{et~al.}}}
\newcommand{\braced}[1]{{ \left\{ #1 \right\} }}
\newcommand\ceil[1]{\lceil#1\rceil}
\newcommand{\assign}{\,{\leftarrow}\,}
\newcommand{\barp}{{\bar{p}}}
\newcommand{\barpi}{{\bar{\pi}}}
\newcommand{\calA}{{\cal A}}
\newcommand{\hatmu}{{\hat{\mu}}}
\newcommand{\tildee}{{\tilde{e}}}
\newcommand{\tildeE}{{\tilde{E}}}
\newcommand{\uC}{\check{C}}
\newcommand{\umu}{\check{\mu}}
\newcommand{\myparagraph}[1]{{\bigskip\noindent\textbf{#1}.}}
\newcommand{\mycase}[1]{{\underline{Case~#1}:}}
\newcommand{\NP}{{\mathbb{NP}}}
\newcommand{\ingroundset}{{\,\sqsubset\,}}
\newcommand{\integers}{{\mathbb Z}}
\newcommand{\posintegers}{{\mathbb{Z}_{> 0}}}
\newcommand{\nonnegintegers}{{\mathbb{Z}_{\ge 0}}}
\newcommand{\binrationals}{{\mathbb Q}_0}
\newcommand{\intermC}{{\uC}}
\newcommand{\intermmu}{{\umu}}
\newcommand{\nofdroplets}[1]{{|#1|}}
\newcommand{\nofconcentrations}[1]{{\|#1\|}}
\newcommand{\mysum}{{\textsf{sum}}}
\newcommand{\average}{{\textsf{ave}}}
\newcommand{\precision}{{\textsf{prec}}}
\newcommand{\mygcd}{{\textsf{gcd}}}
\newcommand{\diameter}{{\textsf{diam}}}
\newcommand{\potential}{{\Psi}}
\newcommand{\MixCond}{{\textrm{(MC)}}}
\newcommand{\MixProducibility}{{\textsc{MixProducibility}}}
\newcommand{\MixReachability}{{\textsc{MixReachability}}}
\newcommand{\PerfectMixability}{{\textsc{PerfectMixability}}}
\newcommand{\Partition}{{\textsc{Partition}}}
\newcommand{\LoC}{\text{LoC}}
\newcommand{\MinMix}{\texttt{Min-Mix}}
\newcommand{\DMRW}{\texttt{DMRW}}
\newcommand{\REMIA}{\texttt{REMIA}}
\newcommand{\WARA}{\texttt{WARA}}
\newcommand{\half}{{\textstyle\frac{1}{2}}}
\newcommand{\onehalf}{{\textstyle\frac{1}{2}}}
\newcommand{\onefourth}{{\textstyle\frac{1}{4}}}
\newcommand{\threefourths}{{\textstyle\frac{3}{4}}}
\newcommand{\oneeighth}{{\textstyle\frac{1}{8}}}
\newcommand{\seveneighths}{{\textstyle\frac{7}{8}}}
\newcommand{\onesixteenth}{{\textstyle\frac{1}{16}}}
\newcommand{\threesixteenths}{{\textstyle\frac{3}{16}}}
\newcommand{\fivesixteenths}{{\textstyle\frac{5}{16}}}
\newcommand{\sevensixteenths}{{\textstyle\frac{7}{16}}}
\newcommand{\ninesixteenths}{{\textstyle\frac{9}{16}}}
\newcommand{\seventhirtytwos}{{\textstyle\frac{7}{32}}}
\newcommand{\eleventhirtytwos}{{\textstyle\frac{11}{32}}}
\newtheorem{observation}{Observation}
\begin{document}
	
\maketitle

\begin{abstract}
	Some microfluidic lab-on-chip devices contain modules whose function is to mix two
	fluids, called reactant and buffer, in desired proportions. 
	In one of the technologies for fluid mixing the process
	can be represented by a directed acyclic graph whose nodes represent micro-mixers
	and edges represent micro-channels. 
	A micro-mixer has two input channels and two output channels;
	it receives two fluid droplets, one from each input, mixes them perfectly, and produces two
	droplets of the mixed fluid on its output channels. Such a mixing graph converts a
	set $I$ of input droplets into a set $T$ of output droplets, where the droplets are specified
	by their reactant concentrations. 
	The most fundamental algorithmic question related to mixing graphs is to determine,
	given an input set $I$ and a target set $T$, whether there is a mixing graph
	that converts $I$ into $T$. We refer to this decision problem as \emph{mix-reachability}.
	While the complexity of this problem remains open, we provide a 
	solution to its natural sub-problem, called \emph{perfect mixability}, 
	in which we ask whether, given a collection $C$ of droplets,
	there is a mixing graph that mixes $C$ perfectly,
	producing only droplets whose concentration is the average concentration of $C$.
	We provide a complete characterization of such perfectly mixable sets and 
	an efficient algorithm for testing perfect mixability. Further, we prove
	that any perfectly mixable set has a perfect-mixing graph of polynomial size, and that
	this graph can be computed in polynomial time.
\end{abstract}



\section{Introduction}
\label{sec: introduction}



\emph{Microfluidics} is an area of science and engineering dedicated to the study and 
manipulation of very small (picoliter to nanoliter~\cite{hsieh2012reagent}) amounts of 
fluids. Research advances in microfluidics led to the development of 
\emph{lab-on-chip} ($\LoC$) devices that integrate on a small chip
various functions of bulky and costly biochemical systems, including 
dispensing, mixing, and filtering of fluids, particle separation, and detection of chemicals.
$\LoC$s play increasingly important roles in applications that include
cancer research~\cite{li2013probing}, 
environment monitoring~\cite{marle2005microfluidic}, 
protein analysis~\cite{xu2010defect}, 
drug discovery~\cite{einav2008discovery} and physiological 
sample analysis~\cite{srinivasan2004integrated}.
The importance of $\LoC$ devices will soon scale up with the introduction of 
\emph{cloud laboratories}~\cite{hayden2014automated}, 
which give researchers access to state-of-the-art equipment and data analysis tools and
allow them to carry out their experiments remotely.

One of the most fundamental functions of $\LoC$ devices is \emph{mixing} of different fluids.
In particular, in applications related to sample preparation,
the objective is to produce desired volumes of pre-specified mixtures of fluids.
In typical applications only two fluids are involved, in which case
the process of mixing is often referred to as \emph{dilution}. The
fluid to be diluted is called \emph{reactant} and the diluting fluid is called \emph{buffer}.
For example, in clinical diagnostics common reactants include blood, serum, plasma and urine, while
phosphate buffered saline is often used as buffer~\cite{srinivasan2004integrated}.

There is a variety of different technologies that can be used to manufacture 
microfluidic devices for fluid mixing. In our work, we consider microfluidic
chips that involve a collection of tiny components called \emph{micro-mixers} connected
by \emph{micro-channels}. In such chips, input fluids are injected into the chip using fluid dispensers,
then they travel, following appropriate micro-channels, through a sequence of micro-mixers
in which they are subjected to mixing operations, and are eventually discharged into output reservoirs. 
We focus on \emph{droplet-based} chips, where fluids are manipulated in discrete units called \emph{droplets}. 
In such chips each micro-mixer has exactly two input and two output channels. 
It receives one droplet of fluid from
each input, mixes them perfectly, producing two identical droplets on its outputs.
Specifically, if the input droplets have (reactant) concentrations
$a,b$, then the produced droplets will have concentration $\half(a+b)$. 
It follows that all droplets flowing through the chip have concentrations
of the form $c/2^d$, where $c$ and $d\ge 0$ are integers. 
This simply means that their binary representations are finite, and we will
refer to such numbers simply as \emph{binary numbers}.
Throughout the paper we will assume (often tacitly) that
all concentration values are binary numbers. In this representation,
the number $d$, called \emph{precision}, is the number of
fractional bits (assuming $c$ is odd when $d\ge 1$). 

Processing of droplets on such chips can be naturally represented by a
directed acyclic graph $G$ that we call a \emph{mixing graph}. The edges of $G$
represent micro-channels. Source vertices (with in-degree $0$ and out-degree $1$) represent
dispensers, internal vertices (with in-degree and out-degree $2$)
represent micro-mixers, and sink vertices (with in-degree $1$ and out-degree $0$) represent
output reservoirs. Given a set $I$ of input droplets injected into the source nodes,
$G$ will convert it into a set $T$ of droplets in its sink nodes. We refer to this set $T$
as a \emph{target set}. 
An example of a mixing graph is shown in Figure~\ref{fig: mixing graph example}.
Here, and elsewhere in the paper, we represent each
droplet by its reactant concentration (which uniquely determines the 
buffer concentration, as both values add up to $1$).

	
\begin{figure}[ht]
	\begin{center}
		\includegraphics[width = 3.2in]{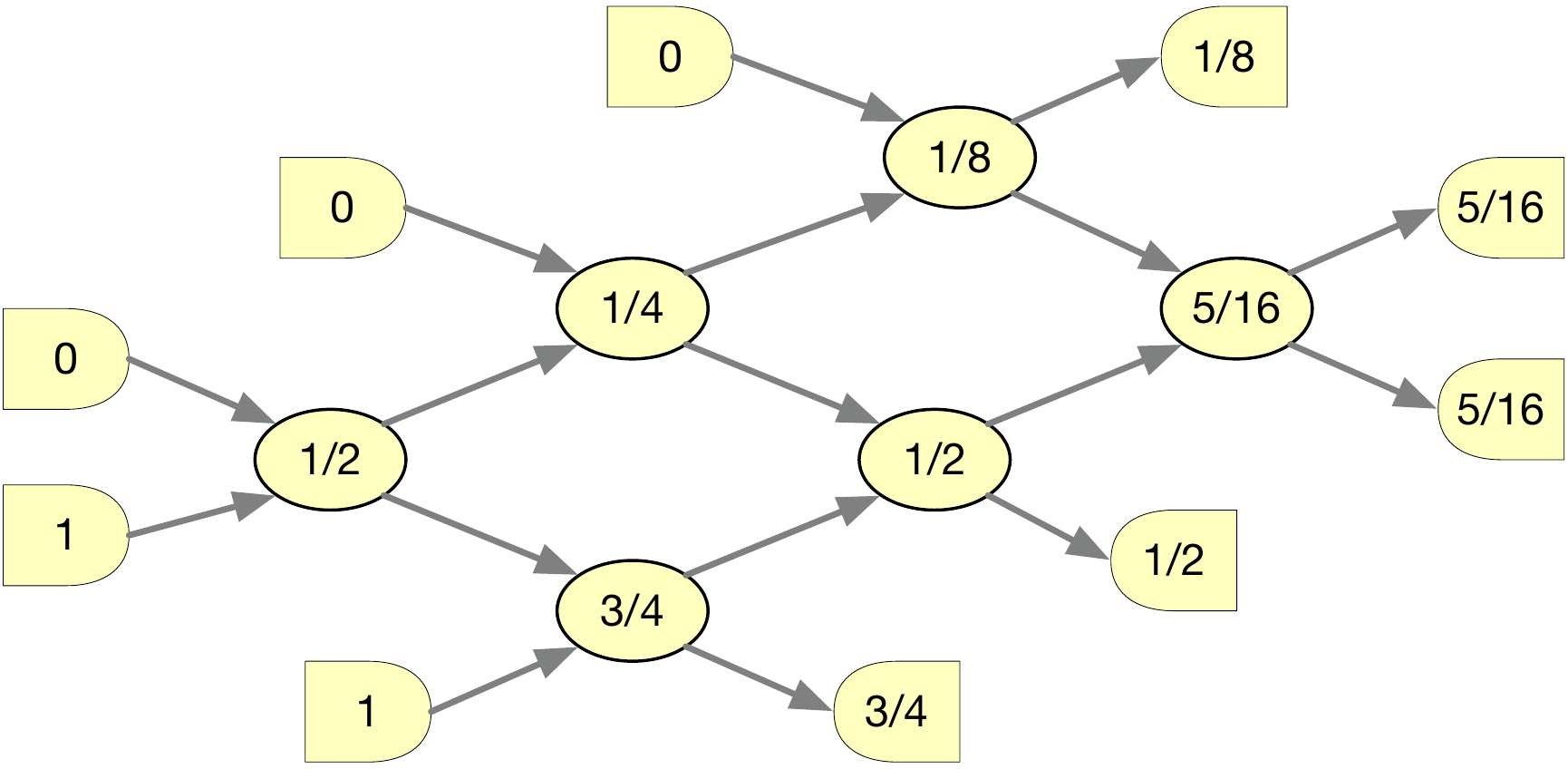}
		\caption{A mixing graph that produces 
		target set $T = \braced{\oneeighth,\fivesixteenths,\fivesixteenths,\onehalf,\threefourths}$
		from input set $I = \braced{0,0,0,1,1}$.
		Numbers on the micro-mixers (internal nodes) represent droplet concentrations produced
		by these mixers.}
		\label{fig: mixing graph example}
	\end{center}
\end{figure}

There is growing literature in the embedded systems and bioengineering communities
on designing microfluidic chips  represented by such mixing graphs.
The most fundamental algorithmic problem emerging in this
area is the following:

\begin{description}
	\item{{\MixReachability:}}
Given an input set $I$ and a target set $T$ of droplets with given reactant concentrations, 
design (if at all possible) a mixing graph that converts $I$ into $T$.
\end{description}

If there is a mixing graph that converts $I$ into $T$ then we say that 
$T$ is \emph{mix-reachable}, or just \emph{reachable}, from $I$.  
For $T$ to be reachable from $I$, clearly, $I$ and $T$ must have the
same cardinality and equal reactant volumes. However,  these conditions
are not sufficient. For example, $T = \braced{\onefourth,\threefourths}$
is not reachable from $I = \braced{0,1}$ (or from any other input set consisting only
of pure buffer and reactant droplets), because producing $\onefourth$ 
requires at least
two buffer droplets and one reactant droplet, but $T$ itself contains only two droplets.
 
In typical applications the input set $I$ consists of pure
reactant and buffer droplets (that is, only $0$'s and $1$'s). 
We denote this variant by {\MixProducibility}, and target sets reachable from such input
sets are caled \emph{mix-producible}, or just \emph{producible}.
{\MixProducibility} is not likely to be computationally easier than  {\MixReachability}. 
If $\calA$ is an algorithm that solves {\MixProducibility} then, via a simple linear
mapping, it can also solve the variant of {\MixReachability} where the input set
has droplets of \emph{any} two given concentrations (instead of $0$ and $1$). 
Further, consider the family $\braced{C_i}$ of concentration sets produced by up to 
some constant number $k$ of mixing operations from $I$. Then $\calA$ in fact determines whether $T$ is reachable
from at least one $C_i$, yet these sets $C_i$ have up to $k+2$ different
concentrations. While we do not have a formal reduction, these properties indicate that
taking advantage of
the special form of inputs in {\MixProducibility} may be challenging.


\myparagraph{Related work} 
The previous work in the literature focuses on designing mixing graphs that dilute the reactant
to produce some desired concentrations -- that is on the {\MixProducibility} problem.
To generate target sets that are not producible, one can consider mixing
graphs that besides a target set $T$ also produce some amount of superfluous fluid
called \emph{waste}. If we allow waste then, naturally, {\MixProducibility} can be extended 
to an optimization problem where the objective is to design a
mixing graph that generates $T$ while minimizing waste. Alternative
objective functions have been studied, for example minimizing the
reactant waste, minimizing the number of micro-mixers, and other.

Most of the previous papers on this topic study designing mixing graphs using
heuristic approaches. Earlier studies
focused on producing \emph{single-concentration targets}, where only one droplet of some 
desired concentration is needed. 
This line of research was pioneered by Thies~{\etal}~\cite{thies2008abstraction},
who proposed an algorithm  called $\MinMix$ 
that constructs a mixing graph for a single target droplet.
Roy~{\etal}~\cite{roy2010optimization} developed a single-droplet
algorithm called $\DMRW$ that considered waste reduction and the number of mixing operations.
Huang~{\etal}~\cite{huang2012reactant} and Chiang~{\etal}~\cite{chiang2013graph}
proposed single-droplet algorithms designed to minimize reactant usage.

Many applications, however, require target sets with multiple concentrations
(see, for example, 
\cite{xu2008automated,xu2010defect,srinivasan2004droplet,srinivasan2004integrated,hsieh1998automated}).
Target sets that arise in sample preparation typically involve concentration values that form
arithmetic or geometric sequences
(referred to, respectively, as ``linear'' and ``logarithmic'' in some literature -- see, for example~\cite{kim2008serial}), 
but the special form of such sets does not seem to facilitate the design of mixing graphs.
For multiple-concentration targets, Huang~{\etal}~\cite{huang2013reactant} proposed an 
algorithm called $\WARA$, which is an extension of Algorithm~$\REMIA$ from~\cite{huang2012reactant}.
Mitra~{\etal}~\cite{mitra2012chip} model the problem of producing multiple concentrations
as an instance of the Asymmetric TSP on a de Brujin graph.

The papers cited above describe heuristic algorithms with no formal performance
guarantees. Dinh~{\etal}~\cite{dinh2014network} take a more rigorous approach.
They model the problem as an equal-split flow problem \cite{meyers2009integer} on a
``universal'' graph that contains all possible mixing graphs of depth at most $d$ as subgraphs,
where $d$ is the maximum precision in $T$. 
By assigning appropriate capacity and cost values to edges, the problem of 
extracting a mixing subgraph that minimizes waste
can be represented as an integer linear program, resulting
in an algorithm that is doubly exponential in $d$. Unfortunately, contrary to the claim in~\cite{dinh2014network},
their algorithm does not necessarily produce mixing graphs with minimum waste. The
reason is, as we show in Appendix~\ref{sec: counter-example for algorihtm of Dinh},
that there are target sets with maximum precision $d$ that require mixing graphs of depth
larger than $d$ to be produced without waste.


\myparagraph{Our results}
To our knowledge,  the computational complexity of {\MixReachability} is open; in fact,
(given the flaw in~\cite{dinh2014network} mentioned above) it is not even known
whether the {\MixProducibility} variant is decidable. This paper
reports partial progress towards resolving this problem.
We consider the following sub-problem of {\MixReachability}: 

\begin{description}
	\item{{\PerfectMixability:}}
	Given a set $C$ of $n$ droplets with binary concentrations
	and binary average value $\mu = (\sum_{c\in C}c)/n$,
	is there a mixing graph that mixes $C$ perfectly, converting $C$ into
	the set of $n$ droplets of concentration $\mu$?
\end{description}

	
\begin{figure}[ht]
	\begin{center}
		\includegraphics[width = 3.7in]{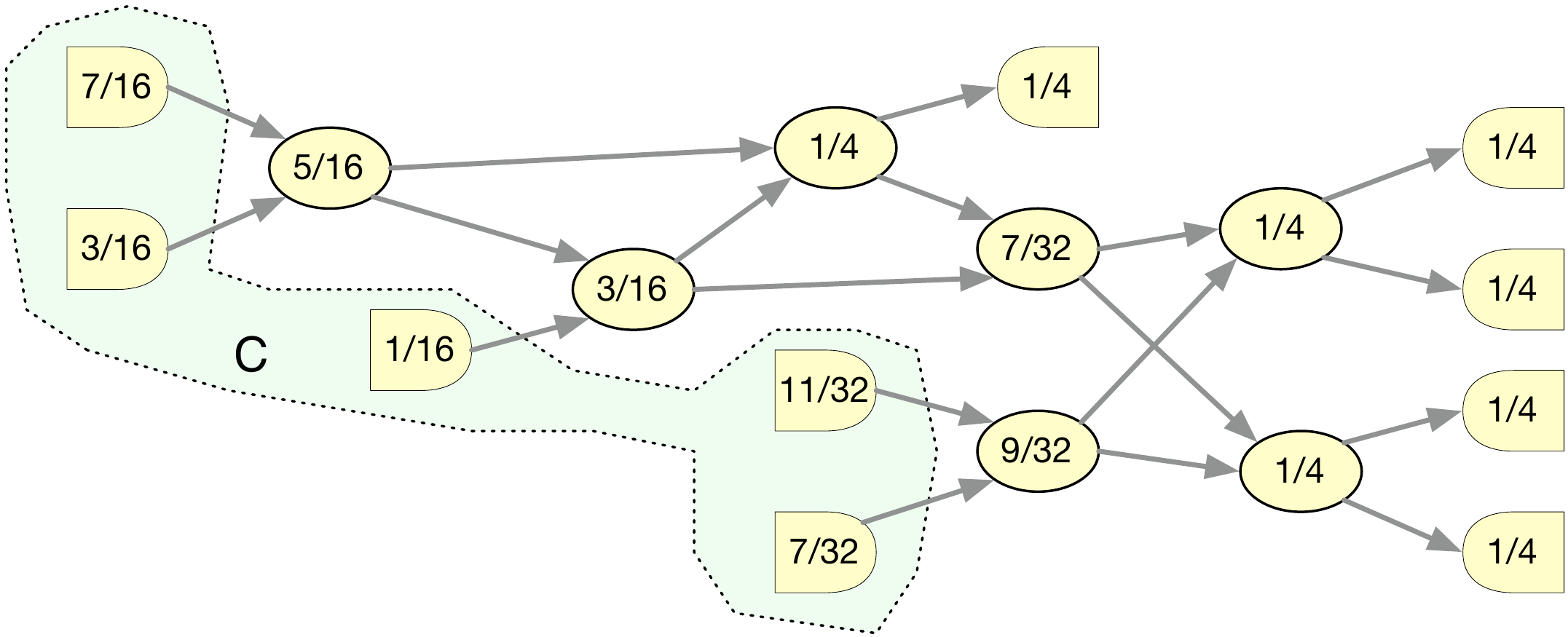}
		\caption{A mixing graph that perfectly mixes
		set $C = \braced{\onesixteenth, \threesixteenths,\seventhirtytwos,\eleventhirtytwos,\sevensixteenths}$.}
		\label{fig: dilution example}
	\end{center}
\end{figure}

Figure~\ref{fig: dilution example} shows an example of a perfect-mixing graph.
As an example of a set that is not perfectly mixable,
consider $D = \braced{0,\threesixteenths,\ninesixteenths}$. After any (non-zero) number of
mixing operations the resulting set of concentrations
will have the form $D' = \braced{a,a,b}$ for $a\neq b$, so no finite mixing graph will
convert $D$ into its perfect mixture $\braced{\onefourth,\onefourth,\onefourth}$.

In this paper, addressing the {\PerfectMixability} problem, 
we provide a complete characterization of perfectly mixable
droplet sets (with binary concentration values), and show that there is a polynomial-time 
algorithm that tests whether a given set is perfectly mixable,
and if so, constructs a polynomial-size perfect-mixing graph for it.

We represent droplet sets as multisets of concentration values.
First, we observe that without loss of generality we can assume that 
$C\cup\braced{\mu}\ingroundset \integers$, for otherwise we can simply 
rescale all values by an appropriate power of $2$. 
($\integers$ is the set of integers; $\posintegers$ and $\nonnegintegers$ are the sets of positive
and non-negative integers, respectively. Symbol $\ingroundset$ is used to specify a ground set of a multiset.)
For any finite multiset $A\ingroundset\integers$ and $b\in\posintegers$, we define
$A$ to be \emph{$b$-congruent} if $x\equiv y \pmod{b}$ for all $x,y\in A$.
(Otherwise we say that $A$ is \emph{$b$-incongruent}.)

We say that $C$ satisfies \emph{Condition~$\MixCond$} if, 
for each odd $b\in\posintegers$, if 
$C$ is $b$-congruent then $C\cup \braced{\mu}$ is $b$-congruent as well, 
where $\mu = \average(C)$.
The following theorem summarizes our results.


\begin{theorem}\label{thm: miscibility characterization}
Assume that $n\ge 4$ and $C\cup \braced{\mu}\ingroundset\integers$, where 
$\mu = \average(C)$. Then:
\begin{description}
	\item[(a)] $C$ is perfectly mixable if and only if $C$ satisfies Condition~$\MixCond$.
	\item[(b)] If $C$ satisfies Condition~{\MixCond} then it can be perfectly mixed
	with precision at most $1$ and in a polynomial number of steps. 
	(That is, $C$ has a perfect-mixing graph of polynomial size where 
	all intermediate concentration values are half-integral.)
	\item[(c)] There is a polynomial-time algorithm that tests whether $C$ 
	is perfectly mixable and, if so, computes a polynomial-size perfect-mixing graph for $C$.
\end{description}
\end{theorem}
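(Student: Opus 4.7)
I would organize the proof into three pieces. The easy piece is the necessity direction of~(a). For any odd $b\in\posintegers$, the integer $2$ is a unit in $\integers/b\integers$, so the mixing operation $(a,c)\mapsto(a+c)/2$ is well-defined on dyadic rationals mod~$b$ and preserves common residues: if $a\equiv c\equiv r\pmod{b}$ then $(a+c)/2\equiv r\pmod{b}$. By induction on the depth of any perfect-mixing graph for $C$, every droplet it carries is $\equiv r\pmod{b}$; the constant output $\mu$ therefore satisfies $\mu\equiv r\pmod{b}$, establishing $\MixCond$.

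The heart of the argument is the constructive sufficiency. After scaling so $C\cup\{\mu\}\subseteq\integers$, I would work by induction on $n$, with $n=2$ a single mixer and $n=4$ handled by an explicit case analysis. The inductive step would be driven by a \emph{reduction lemma}: if $|C|\ge 6$ and $C$ satisfies $\MixCond$, there is a gadget of $O(1)$ mixers, using only half-integral intermediates, that produces two sinks of value $\mu$ while leaving a residual multiset $C''$ of size $n-2$, still with average~$\mu$ and still satisfying~$\MixCond$; recursion on $C''$ finishes the construction. The first attempt at the gadget is the direct one---find $c_i,c_j\in C$ with $c_i+c_j=2\mu$ and mix them---and if no such pair exists, use an auxiliary mix (or a short pair of mixes) to create one while respecting the half-integer precision constraint, which in particular forces intermediate mixers to combine only values of matching parity-type, i.e.\ two integers or two half-integers.

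To see that such a gadget always exists and preserves~$\MixCond$, I would recast~$\MixCond$ via gcds: letting $g=\gcd\{\,c-c'\suchthat c,c'\in C\,\}$ with odd part $g_{\mathrm{odd}}$, $\MixCond$ is equivalent to $g_{\mathrm{odd}}\mid(\mu-c)$ for every $c\in C$. This reformulation pins down $C$ modulo $g$ tightly enough that a counting / extremal argument over the residues of the deviations $c-\mu$ modulo small powers of~$2$ times $g_{\mathrm{odd}}$ should exhibit either a pair summing to $2\mu$ or a short mixing sequence that creates one. Preservation of~$\MixCond$ in $C''$ reduces to showing that the odd part of the gcd-of-differences of $C''$ equals $g_{\mathrm{odd}}$, which I would argue by observing that a mixing operation can only affect the $2$-adic component of that gcd. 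Given the reduction lemma, (c) is straightforward: $\MixCond$ is tested by computing $g$, $g_{\mathrm{odd}}$, and $\gcd\{\,\mu-c\suchthat c\in C\,\}$ and checking divisibility, all polynomial-time, and iterating the reduction yields a graph with $O(n)$ stages of $O(1)$ mixers, so of polynomial size with precision at most~$1$.

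The principal obstacle I anticipate is the reduction lemma itself. $\MixCond$ is a global modular condition that does not obviously translate into the existence of a local pairing, and the precision budget is unforgiving because mixing a half-integer with an integer already violates it. The most delicate cases are those in which $n$ shares a nontrivial odd factor with $g$, since the averaging identity $n\mu=\sum c_i$ then allows $\MixCond$ to impose constraints that genuinely restrict which pair-sums can equal $2\mu$. I expect a joint case analysis on the $2$-adic valuations and the odd-modular residues of the deviations $c-\mu$, together with an ad hoc gadget for each case, to absorb most of the work.
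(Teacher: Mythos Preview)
Your necessity argument for~(a) is essentially the paper's: mixing preserves residues modulo any odd~$b$, so the output~$\mu$ inherits the common residue.

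The sufficiency plan has a genuine gap. Your reduction lemma rests on the claim that after the gadget the odd part $g_{\mathrm{odd}}$ of the gcd-of-differences is unchanged, and you justify this by saying a mixing operation ``can only affect the $2$-adic component.'' That is false in the direction that matters: a mix can \emph{increase} $g_{\mathrm{odd}}$ (e.g.\ $\{0,2,6\}\to\{1,1,6\}$ sends $g_{\mathrm{odd}}$ from~$1$ to~$5$), and, more to the point, the \emph{removal} of the two produced $\mu$-droplets can increase it as well. Concretely, take $C=\{0,0,0,0,4,5,5\}$ with $n=7$ and $\mu=2$; here $g_{\mathrm{odd}}(C)=1$, so $\MixCond$ holds. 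The only pair summing to $2\mu=4$ is $(0,4)$; mixing it and deleting the two resulting $2$'s leaves $C''=\{0,0,0,5,5\}$, which is $5$-congruent while $\mu=2\not\equiv 0\pmod{5}$, so $\MixCond$ fails for the size-$5$ residual and your recursion dies. A working gadget for this instance does exist, but it needs several carefully chosen mixes; you have provided no mechanism to find one in general, and your stated reason why one should exist is incorrect. Separately, the recursion $n\to n-2$ lands on $n=5$, which you never treat; the paper shows $n\in\{5,6\}$ genuinely require separate arguments.

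The paper avoids all of this by not peeling off droplets at all. It first passes to an equivalent configuration that is $p$-incongruent for every odd prime $p\mid n$, and then repeatedly mixes same-parity pairs so as to maintain an invariant on the \emph{whole} $n$-element configuration (integrality, $p$-incongruence for all such~$p$, and at least two non-singleton values). A case analysis shows such a pair always exists unless the configuration has become ``near-final'' (a disjoint union of power-of-$2$ blocks each averaging~$\mu$), which is then mixed directly; a variance potential $\sum(e-\mu)^2$ gives termination, and choosing far-apart safe pairs yields the polynomial bound. The key conceptual difference is that the paper controls incongruence only with respect to the \emph{fixed} primes dividing~$n$, whereas your scheme would have to re-establish $\MixCond$ against a new set of primes (those dividing $n-2$) at every stage.
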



Part~(b) implies that, in general (if the concentrations in $C\cup\braced{\mu}$ are arbitrary
binary values), if $C$ is perfectly mixable at all then
$C$ can be mixed perfectly with precision at most
$d+1$, where $d$ is the maximum precision in $C\cup\braced{\mu}$; 
in other words, at most one extra bit of precision is needed
in the intermediate nodes of a perfect-mixing graph for $C$.

This extra 1-bit of precision in part~(b) of Theorem~\ref{thm: miscibility characterization}
is necessary. For example, $C = \braced{0,0,0,3,7}$ (for which $\mu = 2$) cannot be
mixed with precision $0$. If we mix $3$ and $7$, we will obtain multiset
$\braced{0,0,0,5,5}$ which is not perfectly mixable, as it violates Condition~{\MixCond}.
Any other mixing creates fractional values.
However, $C$ does have a mixing graph where the intermediate precision is at most $1$
--- see Figure~\ref{fig: precision 1 example}.

	
\begin{figure}[ht]
	\begin{center}
		\includegraphics[width = 3.7in]{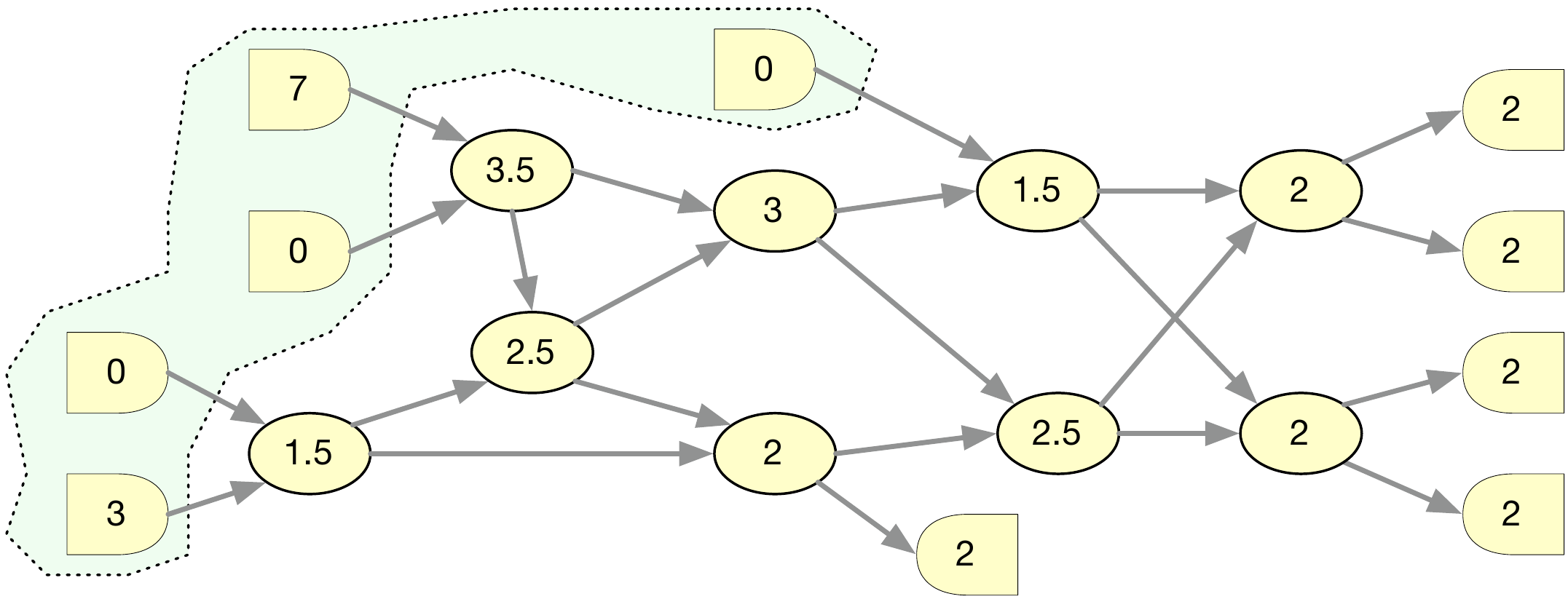}
		\caption{A perfect mixing graph for $C = \braced{0,0,0,3,7}$ with precision $1$.}
		\label{fig: precision 1 example}
	\end{center}
\end{figure}

Note that in Theorem~\ref{thm: miscibility characterization} we assume that $n = |C|\ge 4$.
Regarding smaller values of $n$, for $n\leq 2$, trivially, all configurations $C$ with at most two droplets
are perfectly mixable with precision $0$.
The case $n=3$ is exceptional, as in this case
Theorem~\ref{thm: miscibility characterization} is actually false. (For example,
consider configuration $C = \braced{0,1,5}$, for which $\mu = 2$. This configuration
is $b$-incongruent for all odd $b>1$, so it satisfies condition~{\MixCond}, but is
not perfectly mixable.) 
Nevertheless, for $n=3$, perfectly mixable configurations are easy to
characterize: Let $C = \braced{a,b,c}$, where $a\le b \le c$.
Then $C$ is perfectly mixable if and only if $b = \half (a+c)$.
Further, if this condition holds, $C$ is perfectly mixable with precision $0$.
(That this condition is sufficient is obvious. That it is also necessary can
be proven by following the argument 
for the example configuration $D$ right after the definition of {\PerfectMixability}.)

To clarify the polynomial bounds in Theorem~\ref{thm: miscibility characterization},
we assume that the input configuration $C$ is specified by listing the concentration
values of individual droplets, and its size is the total number of bits
used in this representation. For the sake of concretness, we will
assume that $C$ is already rescaled to consist only of integers,
and we will define the input size as $s(C) = \sum_{c\in C}\log(|c|+2)$.
This value is within a small constant factor of the actual number of
bits representing $C$. 
Consequently, the polynomial bounds in Theorem~\ref{thm: miscibility characterization} 
are with respect to $s(C)$.


\myparagraph{Overview of the paper}
The proof of Theorem~\ref{thm: miscibility characterization}
is given in several sections. The necessity of Condition~{\MixCond} in 
Theorem~\ref{thm: miscibility characterization}(a)
is relatively simple to show; the proof appears in Section~\ref{sec: necessity of mc}. 
The proof that Condition~{\MixCond} is sufficient is more challenging. We first
show in Section~\ref{sec: some auxiliary lemmas} (see Corollary~\ref{cor: mc prime power factors})
that, in essence, in Condition~{\MixCond} it is sufficient to consider only
the values of $b$ that do not exceed the maximum concentration in $C$ and
are powers of prime factors of $n$.
This property is used in Section~\ref{sec: sufficiency of condition mc}
to show that any set $C$ that satisfies Condition~{\MixCond} has 
a perfect-mixing graph, completing the proof of Theorem~\ref{thm: miscibility characterization}(a).
The mixing graph constructed in Section~\ref{sec: sufficiency of condition mc} 
has precision at most $1$, proving the frst part of Theorem~\ref{thm: miscibility characterization}(b).
The second part, showing the existence of a perfect-mixing graph of polynomial
size is established in Section~\ref{sec: polynomial proof}.
The proof of Theorem~\ref{thm: miscibility characterization}(c) is divided into
two parts: that testing Condition~{\MixCond} can be done in polynomial
time follows directly from Corollary~\ref{cor: mc prime power factors} 
in Section~\ref{sec: some auxiliary lemmas}, while a polynomial-time 
algorithm for constructing a perfect-mixing graph is described
in Sections~\ref{sec: polynomial proof} and~\ref{sec: polynomial running time}.


\section{Preliminaries}
\label{sec: preliminaries}



Let $\binrationals$ be the set of  binary numbers.
For $c\in\binrationals$, we denote by $\precision(c)$ the precision of $c$, that is the
number of fractional bits in the binary representation of $c$, assuming there are no trailing $0$'s.
In other words, $\precision(c)$ is the smallest $d\in\nonnegintegers$ such that $c = a/2^d$ for some $a\in\integers$. 
If $c = a/2^d$ represents actual fluid concentration, then we have $0\le a \le 2^d$. However,  
it is convenient to relax this restriction and allow ``concentration values'' that are
arbitrary binary numbers, even negative. In fact, as we show shortly, 
it will be convenient to work with integral values.

By a \emph{configuration} we mean a multiset of $n$ binary numbers, called droplets or concentrations.
In the literature, multisets are often represented by their characteristic functions (that specify
the multiplicity of each element of the ground set). In this paper we will generally use set-theoretic
terminology, with its natural interpretation. For example, for a configuration $C$ and a concentration $a$, 
$a\in C$ means that the multiplicity of $a$ in $C$ is strictly positive, while $a\notin C$ means that it's zero.
Or, $C - \braced{a} = C'$ means that the multiplicity of $a$ in $C'$ is one less than in $C$, while
other multiplicities are the same.
The number of droplets in $C$ is denoted $\nofdroplets{C}=n$, 
while the number of different concentrations is denoted $\nofconcentrations{C}=m$.
We will typically denote a configuration by
$C = \braced{f_1:c_1,f_2: c_2,...,f_m:c_m} \ingroundset \binrationals$, where each $c_i$ represents
a (different) concentration value and $f_i$ denotes the
multiplicity of $c_i$ in $C$, so that $\sum_{i=1}^m f_i = n$. 
Occasionally, if it does not lead to confusion, we may say
``droplet $c_i$'' or ``concentration $c_i$'', referring to some droplet with concentration $c_i$.
If $f_i=1$, we shorten ``$f_i:c_i$'' to just ``$c_i$''.
If $f_i = 1$ we say that droplet $c_i$ is a \emph{singleton}, 
if $f_i = 2$ we say that droplet $c_i$ is a \emph{doubleton}
and if $f_i\geq 2$ we say that droplet $c_i$ is a \emph{non-singleton}.
By $\mysum(C)$ we denote the sum of $C$, that is $\mysum(C) = \sum_{c\in C} c$.
$\average(C) = \mysum(C)/n$ is the average value of the concentrations in $C$ and will
be typically denoted by $\mu$.
(Later, we will typically deal with configurations $C$ such that
$C\cup\braced{\mu}\ingroundset\integers$.)

Mixing graphs were defined in the introduction. As we are not concerned in this paper with the
topological properties of mixing graphs, we will often identify a mixing graph $G$ with a
corresponding \emph{mixing sequence}, which is a sequence (not necessarily unique) of mixing operations
that convert $C$ into its perfect mixture. In other words, a mixing sequence is a sequence
of mixing operations in a topological ordering of a mixing graph.

Of course in a perfect-mixing graph (or sequence) $G$ for $C$, all concentrations in $G$,
including those in $C\cup\braced{\mu}$,
must have finite precision (that is, belong to $\binrationals$);
in fact, the maximum concentration in $G$ is at least $\max\braced{\precision(C),\precision(\mu)}$.
In addition to the basic question about finding a perfect-mixing graph for $C$, 
we are also interested in bounding the precision required to do so.

For $x\in\binrationals$, define multisets $C+x = \braced{c+x \mid c\in C}$,
$C-x = C+(-x)$, and $C\cdot x = \braced{c\cdot x \mid c\in C}$.
The next observation says that offsetting all values in $C$
does not affect perfect mixability, as long as the offset value's precision
does not exceed that of $C$ or $\mu$.

\begin{observation}
\label{obs: offsetting does not affect reachability}
Let $\mu = \average(C)$ and $x\in\binrationals$. 
Also, let $d\in\posintegers$ be such that $d\ge \max\braced{\precision(C),\precision(\mu),\precision(x)}$.
Then $C$ is perfectly mixable with precision $d$ if and only if
$C' = C+x$ is perfectly mixable with precision $d$.
\end{observation}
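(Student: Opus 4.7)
The plan is to exploit the fact that the mixing operation commutes with translation: for any $a,b \in \binrationals$ and $x \in \binrationals$, we have $\half((a+x)+(b+x)) = \half(a+b)+x$. This identity says that shifting both inputs of a micro-mixer by $x$ shifts its output by $x$ as well. Consequently, given a perfect-mixing graph $G$ for $C$, I would define $G'$ on the same underlying DAG as $G$ but with every edge-label (droplet concentration) increased by $x$. By induction over a topological ordering of $G$, the labels of $G'$ are exactly the labels of $G$ shifted by $x$, so in particular the source labels form $C+x$ and each sink label equals $\mu + x = \average(C+x)$. Thus $G'$ is a perfect-mixing graph for $C' = C+x$.

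For the precision claim, I would use the hypothesis $d \ge \max\braced{\precision(C),\precision(\mu),\precision(x)}$. Every intermediate concentration $c$ in $G$ has precision at most $d$ by assumption, and $c+x$ is the sum of two binary numbers of precision at most $d$, hence has precision at most $d$ as well. Therefore $G'$ witnesses perfect mixability of $C+x$ with the same precision bound $d$.

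The reverse implication follows by the same construction applied with $-x$ in place of $x$: since $\precision(-x) = \precision(x) \le d$ and $\average(C+x) = \mu+x$ has precision at most $d$, shifting a perfect-mixing graph for $C' = C+x$ by $-x$ produces a perfect-mixing graph for $(C+x)-x = C$ of precision at most $d$.

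There is no real obstacle here: the only content is the commutation of $\half(a+b)$ with translation, plus a one-line precision check to confirm that the shifted labels still lie in the set of binary numbers of precision at most $d$. The observation is purely a reduction, and will be useful later for normalizing configurations (e.g.\ assuming $0 \in C$ or $\mu$ has a particular form).
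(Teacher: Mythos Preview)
Your proposal is correct and follows essentially the same approach as the paper: run the same mixing graph (sequence) on the shifted input, observe that each intermediate value is shifted by $x$ via the identity $\half((a+x)+(b+x)) = \half(a+b)+x$, check that $\precision(z+x)\le\max\braced{\precision(z),\precision(x)}\le d$, and handle the converse by applying the same argument with $-x$. The paper's proof is almost identical, only slightly terser in its presentation.
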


\begin{proof}
$(\Rightarrow)$
Suppose that $G$ is a perfect-mixing sequence for $C$ with precision $d$.
Run the same sequence $G$ on input $C'$. If some mixing step in $G$ produces a value $z$ when
the input is $C$ then on $C'$ its value is $z+x$,
and $\precision(z+x)\le \max\braced{\precision(z),\precision(x)} \le d$.
Thus the maximum precision in $G$ for input $C'$ is at most $d$. 

$(\Leftarrow)$
The proof for this implication follows from noting that
$\mu' = \average(C') = \mu+x$,
$\max\braced{\precision(C'),\precision(\mu')}\le d$, and by applying the
above argument to $-x$ instead of $x$.
\end{proof}

\begin{observation}
\label{obs: power of 2 rescaling}
Let $\mu = \average(C)$, $\delta = \max\braced{\precision(C),\precision(\mu)}$, 
$C' = C\cdot 2^\delta$ with $\mu'=\average(C')= 2^\delta\mu$. 
	(Thus $C'\cup\braced{\mu'}\ingroundset\integers$.)
	Then $C$ is perfectly mixable with precision $d\ge\delta$
	if and only if $C'$ is perfectly mixable with precision $d' = d-\delta$. 
\end{observation}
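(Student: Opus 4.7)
The plan is to verify the claim by a direct rescaling argument: since mixing operations are linear (each output is the average of its two inputs), multiplying every concentration by a positive scalar commutes with applying any fixed mixing sequence, and the precision of intermediate values shifts in a predictable way under multiplication by $2^\delta$.

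First I would prove the forward direction $(\Rightarrow)$. Let $G$ be a perfect-mixing sequence for $C$ with precision $d \ge \delta$. Run the same sequence $G$ on the scaled input $C'$. By induction on the position of a mixing step in the topological ordering of $G$, if $z$ is the value produced at that step when the input is $C$, then the value produced at the same step when the input is $C'$ is $2^\delta \cdot z$. In particular, the final outputs on $C'$ are all $2^\delta \mu = \mu'$, so $G$ perfectly mixes $C'$. For the precision bound, note that for any $z\in\binrationals$ one has $\precision(2^\delta z) = \max\braced{0,\precision(z)-\delta}$, so every intermediate value of $G$ on $C'$ has precision at most $d-\delta = d'$. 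Thus $C'$ is perfectly mixable with precision $d'$.

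For the reverse direction $(\Leftarrow)$, I would apply the symmetric argument: given a perfect-mixing sequence $G'$ for $C'$ with precision $d'$, run it on $C$. Each intermediate value $z'$ on input $C'$ becomes $z'/2^\delta$ on input $C$, again by linearity of mixing, so the final outputs are $\mu'/2^\delta = \mu$. For precision, $\precision(z'/2^\delta) \le \precision(z') + \delta \le d' + \delta = d$, showing that $C$ is perfectly mixable with precision $d$.

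No step here should be a real obstacle; the only point to state carefully is the linearity of a fixed mixing sequence under scalar multiplication, which follows immediately from the fact that every internal node computes a half-sum of its two inputs. The observation can be viewed as a convenient normalization that, combined with Observation~\ref{obs: offsetting does not affect reachability}, justifies the standing assumption $C\cup\braced{\mu}\ingroundset\integers$ made in the statement of Theorem~\ref{thm: miscibility characterization}.
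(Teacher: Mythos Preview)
Your proposal is correct and follows essentially the same approach as the paper: both directions run the given mixing sequence on the rescaled input, use linearity of the half-sum operation to see that every intermediate value is multiplied by $2^{\pm\delta}$, and apply the identities $\precision(2^\delta z)=\max\{0,\precision(z)-\delta\}$ and $\precision(z'/2^\delta)\le\precision(z')+\delta$ to bound the precision. Your version just makes the induction and the linearity more explicit than the paper does.
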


\begin{proof}
$(\Rightarrow)$
Let $G$ be a perfect-mixing sequence for $C$, with precision $d$.
Run the same sequence $G$ on input $C'$.
If some node in $G$ produces a value $z$ on input $C$, then its value on input $C'$ will be
$z 2^\delta$, and $\precision(z 2^\delta) = \max\braced{\precision(z)-\delta,0} \le d-\delta = d'$.

$(\Leftarrow)$
Let $G'$ be a perfect-mixing sequence for $C'$ with precision $d'$.
Run $G'$ on input $C$. If some node in $G'$ produces a value $y$ on input $C'$, then
its value on input $C$ will be $y/2^\delta$, and
$\precision(y/2^\delta) \leq \precision(y)+\delta \le d'+\delta = d$.
\end{proof}


\myparagraph{Integral configurations}
Per Observation~\ref{obs: power of 2 rescaling}, we can restrict our attention to configurations with integer
values and average, that is, we will be assuming that $C\cup\braced{\mu}\ingroundset \integers$.

For $x\in\posintegers$, if each $c\in C$ is a multiple of $x$,
let $C/x = \braced{c/x \mid c\in C}$.
For integral configurations, we can extend Observation~\ref{obs: power of 2 rescaling}
to also multiplying $C$ by an odd integer or dividing it by a common odd factor of
all concentrations in $C$. 

\begin{observation}
\label{obs: odd integer rescaling}
Assume that $C\cup\braced{\mu}\ingroundset \integers$ and let $x\in\posintegers$ be odd.
\begin{description}
\item{\emph{(a)}} Let $C' = C\cdot x$. Then
	$C$ is perfectly mixable with precision $0$ if and only if
	$C'$ is perfectly mixable with precision $0$.	
\item{\emph{(b)}} Suppose that $x$ 
is a divisor of all concentrations in $C\cup\braced{\mu}$.
Then $C$ is perfectly mixable with precision $0$ if and only if
$C/x$ is perfectly mixable with precision $0$.
\end{description}	
\end{observation}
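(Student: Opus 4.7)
The plan is to prove part (a) directly by a scaling argument and then deduce part (b) as a corollary. Throughout, the key idea is that a mixing sequence (a topological ordering of the mixing graph) can be ``replayed'' on a different input, and the values at each node transform in a predictable way. Observation~\ref{obs: power of 2 rescaling} already handled rescaling by powers of two, so the new content lies in understanding how odd multiplicative factors interact with the integrality constraint imposed by precision~$0$.

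For part (a), the forward direction $(\Rightarrow)$ is routine. Given a perfect-mixing sequence $G$ for $C$ with precision $0$, I would run the same sequence on $C'$. A straightforward induction along the topological order of $G$ shows that each node's value in the run on $C'$ is exactly $x$ times the corresponding value in the run on $C$. Since the original values are integers and $x$ is an integer, the new values are integers as well, and the output configuration is $\braced{n:\mu x}$, which is the perfect mixture of $C'$.

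The backward direction $(\Leftarrow)$ is the main obstacle, because replaying a mixing sequence for $C'$ on the smaller input $C$ requires all intermediate values of that run to be integers, i.e.\ every intermediate value of the original run on $C'$ must be divisible by $x$. I would establish the following claim by induction over the topological order: in any precision-$0$ mixing sequence $G'$ for $C' = C\cdot x$, every value produced at a node is a multiple of $x$. The base case is immediate since all elements of $C'$ are multiples of $x$. For the inductive step, suppose a mixing node receives inputs $u = ax$ and $v = bx$ with $a,b\in\integers$. The precision-$0$ hypothesis forces $(u+v)/2 = (a+b)x/2$ to be an integer, so $(a+b)x$ is even. This is precisely where the oddness of $x$ is essential: it forces $a+b$ to be even, so the output $((a+b)/2)x$ is again a multiple of $x$. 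Once the claim is proved, I replay $G'$ on input $C$ and obtain at each node exactly $1/x$ of the original value; all such values are integers, and the output is $\braced{n:\mu}$.

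For part (b), I would simply invoke part (a) with a change of variable. Let $C'' = C/x$, which is a well-defined integral configuration whose average is $\mu/x\in\integers$ by the divisibility hypothesis. Since $C = C''\cdot x$, part (a) applied to $C''$ gives the desired equivalence: $C$ is perfectly mixable with precision $0$ if and only if $C/x$ is. Thus (b) is essentially a restatement of (a) after renaming, and no new argument is required.
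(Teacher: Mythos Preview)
Your proposal is correct and follows essentially the same approach as the paper: prove (a) by replaying mixing sequences in both directions, using oddness of $x$ in the $(\Leftarrow)$ direction to show all intermediate values remain multiples of $x$, and then derive (b) from (a). If anything, your inductive step for $(\Leftarrow)$ is spelled out more explicitly than in the paper, which simply asserts that oddness of $x$ propagates divisibility through mixing.
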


\begin{proof}
Part~(b) follows from~(a), so we only prove part~(a).
Any sequence $G$ of mixing operations for $C$ can be applied to $C\cdot x$. 
By simple induction,
if some intermediate value in $G$ was an integer $z$, now its value will be $z x$,
also an integer. This shows the $(\Rightarrow)$ implication.

To justify the $(\Leftarrow)$ implication, 
suppose that $G'$ is a sequence of mixing operations for $C'$ and that all
concentrations in $G'$ are integer. 
Since $x$ is odd and all concentrations in $C'$ are multiples of $x$, 
every concentration in $G'$ will be also a multiple of $x$, including  $\average(C')$.  
Thus, if  we run $G'$ on $C$ instead of $C'$, if some node's concentration
was $c x$, now it will be $c$.
Thus, the $(\Leftarrow)$ implication holds.
\end{proof}



\section{Necessity of Condition~(MC)}
\label{sec: necessity of mc}



In this section we prove that Condition~{\MixCond} in 
Theorem~\ref{thm: miscibility characterization}(a) is necessary for perfect mixability.
So let $C\cup \braced{\mu}\ingroundset\integers$, where $\mu = \average(C)$, and assume that
$C$ is perfectly mixable. Let $G$ be a graph (or a sequence) that mixes $C$ perfectly.
We want to prove that $C$ satisfies Condition~{\MixCond}.

Suppose that $C$ is $b$-congruent for some odd $b\in\posintegers$.
Consider an auxiliary configuration
$C' = C\cdot 2^\delta$, where $\delta$ is sufficiently large, so that all
intermediate concentrations in $G$ when applying $G$ to $C'$ are integral. 
This $C'$ is $b$-congruent, and starting from $C'$, $G$ produces a perfect mixture of $C'$,
that is $\braced{n:\mu'}$, for $\mu' = 2^\delta\mu$.

Since $C'$ is $b$-congruent, there is $\beta\in\braced{0,...,b-1}$ such that
for each $x\in C'$ we have $x\equiv \beta \pmod{b}$.
We claim that this property is preserved as we apply mixing operations to
droplets in $C'$. Indeed, suppose that we mix two droplets with concentrations
$x,y\in C'$, producing two droplets with concentration $z$.  
Since $x\equiv \beta \pmod{b}$ and $y\equiv \beta \pmod{b}$,
we have $x = \alpha b + \beta$ and $y = \alpha' b + \beta$, for some $\alpha, \alpha'\in \integers$,
so $z = \half(x+y) = (\half(\alpha+\alpha'))b + \beta$. As $b$ is odd
(and $z$ is integer), $\alpha+\alpha'$ must be even, and therefore
$z\equiv \beta \pmod{b}$, as claimed. Eventually $G$ produces $\mu'$,
so this must also hold for $z = \mu'$. This implies that 
$C'\cup\braced{\mu'}$ is $b$-congruent.

Finally, since $C'\cup\braced{\mu'}$ is $b$-congruent, for
all $2^\delta x, 2^\delta y\in C'\cup\braced{\mu'}$ it holds
that $2^\delta x \equiv 2^\delta y \pmod{b}$. 
But this implies that  $x \equiv y \pmod{b}$, because
$b$ is odd. So we can conclude that 
$C\cup\braced{\mu}$ is $b$-congruent, thus proving that
$C$ satisfies Condition~{\MixCond}.


\section{Some Auxiliary Lemmas}
\label{sec: some auxiliary lemmas}




In this section we show that, for Condition~{\MixCond},
we only need to consider $b$'s that are odd-prime factors of $n$
and satisfy $b\leq c_{max}$,
where $c_{max}$ denotes the maximum absolute value of concentrations in $C$.
These properties play an important role in the sufficiency proof of Condition~{\MixCond} in 
Theorem~\ref{thm: miscibility characterization}(a).
Additionally, they lead to an efficient algorithm for testing perfect mixability
(see Section~\ref{sec: polynomial running time}), 
thus showing the first part of Theorem~\ref{thm: miscibility characterization}(c). 


\begin{lemma}\label{lem: uniform wrt coprime numbers}
Let $b,c\in\posintegers$ and $A\ingroundset \integers$.
(a) If $A$ is $bc$-congruent then $A$ is also $b$-congruent.
(b) If $\mygcd(b,c) = 1$ and $A$ is both $b$-congruent and $c$-congruent
	then $A$ is $bc$-congruent.
\end{lemma}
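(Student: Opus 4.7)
The plan is to prove both parts directly from the definition of $b$-congruence, reducing the problem to elementary divisibility facts. Recall that a multiset $A \ingroundset \integers$ is $b$-congruent if and only if for all $x, y \in A$, $b$ divides $x - y$.

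For part~(a), I would simply observe that $bc$-congruence of $A$ means $bc \mid (x - y)$ for all $x, y \in A$. Since $b \mid bc$, this immediately implies $b \mid (x - y)$, establishing $b$-congruence of $A$. This is a one-line argument with no real difficulty.

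For part~(b), the assumption gives that both $b \mid (x-y)$ and $c \mid (x-y)$ hold for all $x, y \in A$. The key step is to invoke the standard fact that when $\mygcd(b,c) = 1$, simultaneous divisibility by $b$ and $c$ implies divisibility by their product: if $b \mid k$ and $c \mid k$ and $\mygcd(b,c) = 1$, then $bc \mid k$. Applying this to $k = x-y$ yields $bc \mid (x-y)$, i.e., $x \equiv y \pmod{bc}$, so $A$ is $bc$-congruent.

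Neither direction presents a genuine obstacle; both parts are essentially restatements of standard facts about integer divisibility (the second being a routine consequence of Bezout's identity or unique factorization). The only thing worth being careful about is making the quantification clear --- the congruences must hold uniformly over all pairs $x, y \in A$, but this is automatic from the definitions. No induction, case analysis, or nontrivial construction is required.
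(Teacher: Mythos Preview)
Your proposal is correct and follows essentially the same approach as the paper's own proof: both parts are handled by direct divisibility arguments, with part~(a) being trivial and part~(b) using the standard fact that coprime divisors of an integer have a product that also divides it. There is nothing to add.
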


\begin{proof}
Part~(a) is trivial, because $x\equiv y\pmod{bc}$ implies that
$x\equiv y\pmod{b}$.
Part~(b) is also simple: Suppose that $x\equiv y\pmod{b}$
and $x\equiv y\pmod{c}$. This means that $b|(x-y)$ and $c|(x-y)$.
This, since $b,c$ are co-prime, implies that $(bc)|(x-y)$, 
which is equivalent to $x\equiv y\pmod{bc}$.
\end{proof}


\begin{lemma}\label{lem: mc powers of primes}
If Condition~{\MixCond} holds for all $b\in\posintegers$ that are a power of an odd prime
then it holds for all odd $b\in\posintegers$.
\end{lemma}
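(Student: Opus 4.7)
The plan is to use the two parts of Lemma~\ref{lem: uniform wrt coprime numbers} together with the prime-power factorization of $b$. Given an odd $b\in\posintegers$, write $b = p_1^{a_1}p_2^{a_2}\cdots p_k^{a_k}$, where $p_1,\ldots,p_k$ are distinct odd primes and $a_1,\ldots,a_k\in\posintegers$. The goal is to check Condition~{\MixCond} for this $b$ assuming it holds for every prime power $p_i^{a_i}$.

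First, suppose $C$ is $b$-congruent. Since each $p_i^{a_i}$ divides $b$, Lemma~\ref{lem: uniform wrt coprime numbers}(a) implies that $C$ is $p_i^{a_i}$-congruent for every $i$. By hypothesis, Condition~{\MixCond} holds for each prime power $p_i^{a_i}$, so $C\cup\braced{\mu}$ is $p_i^{a_i}$-congruent for every $i$.

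Next, I would combine these congruences via Lemma~\ref{lem: uniform wrt coprime numbers}(b). Since $p_1,\ldots,p_k$ are distinct primes, the prime powers $p_1^{a_1},\ldots,p_k^{a_k}$ are pairwise coprime. A straightforward induction on $k$ then yields that $C\cup\braced{\mu}$ is $(p_1^{a_1}\cdots p_k^{a_k})$-congruent, i.e.\ $b$-congruent, establishing Condition~{\MixCond} for $b$.

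There is no real obstacle here: the lemma is essentially a Chinese-Remainder-style bookkeeping argument, and Lemma~\ref{lem: uniform wrt coprime numbers} has already packaged the two nontrivial directions. The only small point to be careful about is that the coprimality required for the inductive step follows from the primes being \emph{distinct}, and that the base case $k=1$ is exactly the hypothesis.
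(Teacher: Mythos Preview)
Your proposal is correct and follows essentially the same argument as the paper: factor $b$ into prime powers, use Lemma~\ref{lem: uniform wrt coprime numbers}(a) to pass from $b$-congruence of $C$ to $p_i^{a_i}$-congruence, invoke the hypothesis for each prime power, and then reassemble via repeated application of Lemma~\ref{lem: uniform wrt coprime numbers}(b). The paper phrases the last step as ``repeated application'' rather than an explicit induction on $k$, but the content is identical.
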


\begin{proof}
Assume that condition~{\MixCond} holds for all $b$ that are odd prime powers. 
Let $b'\in\posintegers$ be odd, with factorization
$b' = p_1^{\gamma_1} ... p_k^{\gamma_k}$,
for different odd primes $p_1 , ...,p_k$, and suppose that $C$ is $b'$-congruent.
Then, by Lemma~\ref{lem: uniform wrt coprime numbers}(a),
$C$ is also  $p_i^{\gamma_i}$-congruent for all $i$.
Since condition~{\MixCond} holds for $p_i^{\gamma_i}$,
this implies that $C\cup\braced{\mu}$ is $p_i^{\gamma_i}$-congruent for all $i$.
By repeated application of Lemma~\ref{lem: uniform wrt coprime numbers}(b), we then obtain that
$C\cup\braced{\mu}$ is $b'$-congruent as well. So Condition~{\MixCond} holds for $b'$.
\end{proof}


\begin{lemma}\label{lem: b's co-prime to n}
Let $b\in\posintegers$ be odd. If $\gcd(b,n) = 1$ then
Condition~{\MixCond} holds for $b$.
\end{lemma}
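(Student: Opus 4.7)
The plan is a direct modular arithmetic argument. Suppose $C$ is $b$-congruent, so there is some residue $\beta \in \braced{0,1,\dots,b-1}$ with $c \equiv \beta \pmod{b}$ for every $c \in C$. I want to show that $\mu \equiv \beta \pmod{b}$ as well, which immediately gives $b$-congruence of $C \cup \braced{\mu}$.

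First, I would sum the congruences $c \equiv \beta \pmod{b}$ over all $c \in C$ to obtain $\mysum(C) \equiv n\beta \pmod{b}$. Since we are working in the setting where $C \cup \braced{\mu}\ingroundset\integers$, we have $\mu \in \integers$ and $\mysum(C) = n\mu$, so $n\mu \equiv n\beta \pmod{b}$, i.e.\ $n(\mu - \beta) \equiv 0 \pmod{b}$.

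The only place where the hypothesis of the lemma is used is the next step: because $\gcd(b,n) = 1$, the integer $n$ is invertible modulo $b$, so we may cancel it to conclude $\mu \equiv \beta \pmod{b}$. Combined with $c \equiv \beta \pmod{b}$ for every $c \in C$, this shows that every pair of elements in $C \cup \braced{\mu}$ are congruent modulo $b$, establishing Condition~{\MixCond} for this $b$.

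There is no real obstacle here; the lemma is essentially a one-line consequence of the cancellation property of $n$ modulo $b$ when $\gcd(n,b) = 1$. The only thing to be careful about is invoking integrality of $\mu$, which is justified by the standing assumption $C \cup \braced{\mu} \ingroundset \integers$ (arranged via Observation~\ref{obs: power of 2 rescaling}). Note also that the lemma is stated only for odd $b$, but oddness is not actually used in the argument itself; it is natural in this paper because Condition~{\MixCond} is only formulated for odd moduli.
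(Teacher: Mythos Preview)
Your proof is correct and follows essentially the same modular-arithmetic idea as the paper's. The only cosmetic difference is that the paper first offsets $C$ by some $c_1\in C$ (via Observation~\ref{obs: offsetting does not affect reachability}) so that the common residue becomes $0$, and then argues divisibility; you instead keep the general residue $\beta$ and cancel $n$ directly using $\gcd(b,n)=1$. Both amount to the same one-line cancellation argument.
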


\begin{proof}
Assume that $C$ is $b$-congruent. 
By Observation~\ref{obs: offsetting does not affect reachability}, without loss of generality
we can assume that all numbers in $C$ are multiples of $b$. 
(Otherwise we can consider $C' = C-c$, for an arbitrary $c\in C$, 
because Condition~{\MixCond} is not affected by offsetting $C$.)
Thus $\mysum(C) = b\beta$, for some $\beta\in\integers$,
which gives us that $\mu = \mysum(C)/n = b\beta/n$.
As $\mu$ is integer and $\gcd(b,n) = 1$, $\beta$ must be a multiple of $n$.
We can thus conclude that $\mu$ is a multiple of $b$.
This means that $C\cup \braced{\mu}$ is $b$-congruent, proving that
Condition~{\MixCond} holds for $b$.
\end{proof}


\begin{corollary}\label{cor: mc prime power factors}
	Let $c_{max}$ be the maximum absolute value of concentrations in $C$. 
	If Condition~{\MixCond} holds for all $b\in\posintegers$ with $b\leq c_{max}$
	that are powers of odd prime factors of $n$,
	then Condition~{\MixCond} holds for all odd $b$.
\end{corollary}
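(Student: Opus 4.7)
The plan is to combine Lemma~\ref{lem: mc powers of primes} with Lemma~\ref{lem: b's co-prime to n} and a direct bounded-range argument, reducing the task to a three-way case analysis on odd prime powers. By Lemma~\ref{lem: mc powers of primes}, establishing Condition~\MixCond{} for every odd $b$ reduces to establishing it for every $b = p^k$ with $p$ an odd prime and $k \ge 1$. So I would fix such a prime power and split into three cases.

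First, if $p$ does not divide $n$, then $\gcd(p^k, n) = 1$, and Lemma~\ref{lem: b's co-prime to n} immediately yields Condition~\MixCond{} for $b$, independently of the hypothesis. Second, if $p$ is an odd prime factor of $n$ and $p^k \le c_{max}$, then Condition~\MixCond{} for $b$ is exactly part of the hypothesis. The only remaining situation, and the one I expect to be the main obstacle, is when $p$ is an odd prime factor of $n$ but $p^k > c_{max}$.

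For this last case, the plan is to show that $b$-congruence of $C$ is so restrictive that Condition~\MixCond{} holds trivially. The key observation is that every $c \in C$ satisfies $|c| \le c_{max} < b$, so all of $C$ is confined to an interval strictly narrower than $b$ (after at most a harmless translation by an element of $C$, which preserves both $b$-congruence and Condition~\MixCond{} by Observation~\ref{obs: offsetting does not affect reachability}). In such an interval, two integers with the same residue modulo $b$ must actually be equal, because consecutive integers in any residue class modulo $b$ differ by exactly $b$. Hence $C$ is a constant configuration, $\mu$ coincides with its single value, and $C \cup \{\mu\}$ is trivially $b$-congruent.

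The main hurdle is exactly this final case, specifically the careful bookkeeping of the translation step so that the resulting interval really has length strictly less than $b$. The cleanest route is to first offset $C$ into a nonnegative configuration sitting in $[0, c_{max}]$ and then apply the unique-representative property of residues modulo $b$ in an interval of length $<b$. Everything else is then immediate from the preceding lemmas of this section.
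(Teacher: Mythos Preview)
Your overall strategy is exactly the paper's: reduce to odd prime powers via Lemma~\ref{lem: mc powers of primes}, dispatch the case $p\nmid n$ via Lemma~\ref{lem: b's co-prime to n}, invoke the hypothesis when $p^k\le c_{max}$, and argue directly when $p^k>c_{max}$. The paper's own justification (the paragraph immediately following the corollary) handles the last case by observing that for $b>c_{max}$ distinct concentrations in $C$ have distinct residues modulo $b$, so $C$ is $b$-incongruent unless constant, and \MixCond{} holds vacuously.

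There is, however, a genuine gap in your third case. From $|c|\le c_{max}<b$ you conclude that $C$ sits in an interval of width strictly less than $b$, and that a translation places $C$ inside $[0,c_{max}]$. Neither follows: $C\subseteq[-c_{max},c_{max}]$ has width $2c_{max}$, and translation preserves the diameter $\max(C)-\min(C)$, which can equal $2c_{max}$. Concretely, take $C=\{3{:}(-3),\,4{:}4\}$ with $n=7$, $\mu=1$, $c_{max}=4$. No power of $7$ is $\le 4$, so the hypothesis is vacuous; yet $C$ is $7$-congruent (both values are $\equiv 4\pmod 7$) while $\mu=1\not\equiv 4\pmod 7$, so \MixCond{} fails at $b=7$. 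No translation shrinks this $C$ to width $<7$, and it is certainly not constant.

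The paper's one-line argument (``$c\bmod b=c$ for all $c\in C$'') has the same hidden assumption: it is only valid when all concentrations are nonnegative, which is the situation after rescaling actual fluid concentrations from $[0,1]$ to integers. Under that standing assumption $C\subseteq[0,c_{max}]$ already, your translation is unnecessary, and the interval argument goes through. If you want the statement to hold for arbitrary integer configurations as allowed in Section~\ref{sec: preliminaries}, you would need to replace the bound $b>c_{max}$ by $b>\diameter(C)$ (or add the nonnegativity assumption explicitly).
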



To substantiate Corollary~\ref{cor: mc prime power factors},
note that $C$ satisfies Condition~$\MixCond$ 
for all odd $b\in\posintegers$ larger than $c_{max}$.
This holds because for each such $b$, $c\mod b=c$ for all $c\in C$, 
so the remainders of the (different) concentrations in $C$ modulo $b$ are all different. 
(In the trivial case where 
all concentrations in $C$ are equal, $\mu$ is also equal and thus
$C$ satisfies Condition~{\MixCond}; such $C$ is actually perfectly mixed already.)


\section{Sufficiency of Condition~(MC)}
\label{sec: sufficiency of condition mc}



In this section we prove that Condition~{\MixCond} in Theorem~\ref{thm: miscibility characterization}(a) 
is sufficient for perfect mixability. A perfect-mixing graph constructed in our argument has
precision at most $1$, showing also the first part of Theorem~\ref{thm: miscibility characterization}(b).

Assume that $n\ge 4$. Let $C$ with $\mu = \average(C)$ and 
$C\cup\braced{\mu}\ingroundset\integers$ be the input configuration,
and assume that $C$ satisfies Condition~{\MixCond}. 
The outline of our proof is as follows:
\begin{itemize}
	\item First we prove that $C$ is perfectly mixable with precision $0$ when $n$ is a power of $2$.
		This easily extends to configurations $C$ called \emph{near-final}, 
		which are disjoint unions of multisets with the same average and cardinalities being powers of $2$. 
		In particular, this proves Theorem~\ref{thm: miscibility characterization}(a) for $n=4$.
		
	\item Next, we give a proof for $n\ge 7$. The basic idea of the proof is
		to define an invariant~(I) and show that any configuration that
		satisfies~(I) has a pair of droplets whose mixing either preserves
		invariant~(I) or produces a near-final configuration. 
		Condition~(I) is stronger than~{\MixCond} (it implies~{\MixCond}, but not
		vice versa), but we show that
		that any configuration that satisfies Condition~{\MixCond} can be
		modified to satisfy~(I).
		
	\item We then give separate proofs for $n=5,6$.
		The proof for $n=5$ is similar to the case $n\ge 7$, but
		it requires a more subtle invariant.
		The proof for $n=6$ is derived by minor modifications to the proof for $n=5$.
\end{itemize}


\subsection{Perfect Mixability of Near-Final Configurations}
\label{subsec: near-final configurations}



Let $C\ingroundset\integers$ be a configuration with
$\nofdroplets{C} = n = \sigma 2^\tau$, for some odd $\sigma\in\posintegers$ and $\tau\in\nonnegintegers$,
with $\average(C) = \mu\in\integers$.
We say that $C$ is \emph{near-final} if it can be partitioned
into disjoint multisets $C_1,C_2,...,C_k$, such that, for each $j$,
$\average(C_j) = \mu$ and $\nofdroplets{C_j}$ is a power of $2$.
In this sub-section we show (Lemma~\ref{lem: mixability for near-final} below)
that near-final configurations are perfectly mixable with precision $0$.
We also show that configurations with
only two different concentrations that satisfy Condition~{\MixCond} are near-final,
and thus perfectly mixable.

Define $\potential(C) = \sum_{c\in C}(c - \mu)^2$, which can be thought of as the
un-normalized variance of $C$. Obviously $\potential(C) \in \nonnegintegers$,
$\potential(C)=0$ if and only if $C$ is a perfect mixture,
and, by a straightforward calculation, mixing any two different same-parity concentrations in $C$ 
decreases the value of $\potential(C)$ by at least $1$.


\begin{lemma}\label{lem: mixability for near-final} 
	If $C$ is near-final
	then $C$ is perfectly mixable with precision $0$.
\end{lemma}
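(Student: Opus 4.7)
My plan is to first reduce the lemma to the case of a single block, i.e.\ where $|C|$ is itself a power of $2$. If $C = C_1 \sqcup \cdots \sqcup C_k$ witnesses near-finality, the subconfigurations $C_j$ are disjoint, share the same average $\mu$, and each has $|C_j| = 2^{\tau_j}$. A perfect-mixing graph for $C$ is then obtained by stacking in parallel, as independent subgraphs, one perfect-mixing graph for each $C_j$. So it suffices to prove the following claim: every integer configuration $C$ with $|C| = 2^\tau$ and integer average $\mu$ is perfectly mixable with precision $0$.

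I would prove this claim by induction on $\tau$. The base $\tau = 0$ is trivial since $C = \braced{\mu}$. For $\tau \ge 1$, the key observation is a parity argument: $\mysum(C) = 2^\tau \mu$ is even, so the number of odd-valued droplets in $C$ is even. Hence the droplets of $C$ can be grouped into $2^{\tau-1}$ disjoint pairs in which each pair consists of two odd values or two even values; each such pair $(a,b)$ has integer average $(a+b)/2$. Mixing every pair produces, for each $i = 1,\ldots,2^{\tau-1}$, two identical integer-valued droplets of some value $v_i$. Now split the resulting $2^\tau$ droplets into two groups $A, B$ of size $2^{\tau-1}$ by putting one copy from the $i$-th mix into $A$ and the other into $B$. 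Both $A$ and $B$ consist of integers and have sum $\sum_i v_i = \tfrac12 \mysum(C) = 2^{\tau-1}\mu$, hence integer average $\mu$. The inductive hypothesis applied to $A$ and $B$ yields perfect-mixing (sub)graphs for each with precision $0$, which, concatenated after the first round of pairwise mixings, gives a perfect-mixing graph for $C$ whose intermediate values are all integers.

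The whole argument stays within precision $0$ because every mixing step combines two same-parity integers, and the reduction to blocks never introduces fractional values. There is no real obstacle here beyond noticing the parity invariant that guarantees the existence of an integer-preserving pairing at each level of the recursion; once that is in place, the induction is routine and the overall mixing graph has size linear in $n$ (within each block, twice the size of the recursive graphs plus $2^{\tau-1}$ first-round mixers).
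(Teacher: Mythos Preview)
Your proof is correct. The reduction to a single power-of-two block is the same as in the paper, but from there the two arguments diverge.

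The paper proceeds \emph{greedily}: as long as $C$ is not yet uniform it exhibits two distinct same-parity concentrations (the only nontrivial case being $\nofconcentrations{C}=2$, handled by a short divisibility argument using that $n$ is a power of $2$), mixes them, and invokes the potential $\potential(C)=\sum_{c\in C}(c-\mu)^2\in\nonnegintegers$ for termination. Your argument is instead a structured \emph{divide-and-conquer}: one parity-respecting round of $2^{\tau-1}$ mixes, then recursion on the two identical halves, each of which inherits the integer average $\mu$. The parity observation (``$\mysum(C)=2^\tau\mu$ is even, hence evenly many odd droplets'') is exactly what is needed to keep every intermediate value integral, and splitting only \emph{after} this round is what guarantees both halves again have integer average~$\mu$. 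Amusingly, the paper remarks after its proof that a naive divide-and-conquer (split first, recurse, then merge) can blow up precision; your variant sidesteps that objection by mixing before splitting.

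What each approach buys: the paper's greedy/potential argument is looser about which pair gets mixed, and that freedom is reused later (Sections~\ref{sec: polynomial - near final}--\ref{sec: polynomial - arbitrary n}) when a specific ``far-apart'' pair must be chosen to get polynomial bounds. Your recursion gives a cleaner, self-contained construction with no potential function needed. One small correction: the recurrence $S(2^\tau)=2^{\tau-1}+2S(2^{\tau-1})$ solves to $\Theta(n\log n)$ mixers per block, not linear; this does not affect the lemma, which makes no size claim.
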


\begin{proof}
It is sufficient to prove the lemma for the case when $n$ is a power
of $2$. (Otherwise, we can apply it separately to each set $C_j$ in the 
partition of $C$ from the definition of near-final configurations.)	

So assume that $n$ is a power of $2$.
It is sufficient to show that if $\nofconcentrations{C} = m \neq 1$ (that is,
$C$ is not yet perfectly mixed) then $C$ contains
	two different concentrations with the same parity. 
	(Each such mixing strictly decreases $\potential(C)$, so a finite sequence
	of such mixing operations will perfectly mix $C$.)
	This is trivially true when $m\geq 3$,
	so it is sufficient to prove it for $m=2$,
	that is for $C = \braced{f_1:c_1, f_2:c_2}$. Without loss of generality, 
	by Observation~\ref{obs: offsetting does not affect reachability},
	we can assume that $c_2 = 0$,
	and then we claim that $c_1$ is even.
	We have $\average(C) = \mu = f_1 c_1/n$.
	As $\mu\in\integers$, $f_1 < n$ and $n$ is a power of $2$,
	we have that $c_1$ must be even, as claimed.
\end{proof}


\begin{lemma}\label{lem: mixability m = 2}
	Assume that $\nofconcentrations{C}=2$, say $C=\braced{f_1:c_1,f_2:c_2}$,
	and that $C$ satisfies Condition~{\MixCond}.
	Then $\sigma$ divides $f_1$ and $f_2$.
	Consequently, we have that
	$n$ is not prime and $C$ is near-final.
\end{lemma}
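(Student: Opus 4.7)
The plan is to first offset $C$ by $-c_2$, using Observation~\ref{obs: offsetting does not affect reachability}, so that one may assume $c_2 = 0$; this preserves $f_1$, $f_2$, $\sigma$, and Condition~{\MixCond}. The identity
\[
n(\mu - c_2) \;=\; f_1(c_1 - c_2),
\]
obtained by rearranging $n\mu = f_1 c_1 + f_2 c_2$ and independent of any offset, will drive the rest of the argument.

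The main step, which I expect to be the only nontrivial part, is to prove $\sigma \mid f_1$; then $\sigma \mid f_2$ follows from $f_2 = n - f_1$ and $\sigma \mid n$. Since $\sigma$ is odd, it suffices to show $v_p(f_1) \geq v_p(\sigma)$ for each odd prime $p$ dividing $\sigma$, where $v_p$ denotes the $p$-adic valuation. Because $\nofconcentrations{C} = 2$ forces $c_1 \neq c_2$, the quantity $\ell := v_p(c_1 - c_2)$ is a well-defined nonnegative integer, and $C$ is $p^\ell$-congruent by construction. Applying Condition~{\MixCond} with the odd modulus $b = p^\ell$ then yields $\mu \equiv c_1 \equiv c_2 \pmod{p^\ell}$, i.e.\ $v_p(\mu - c_2) \geq \ell$. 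Taking $p$-adic valuations in the displayed identity, together with $v_p(n) = v_p(\sigma)$ (because $p$ is odd and $n = \sigma 2^\tau$), I obtain $v_p(f_1) \geq v_p(\sigma)$ as desired. The one delicate point here is this extraction of $v_p(\mu - c_2) \geq \ell$ from Condition~{\MixCond}; everything else is routine valuation arithmetic.

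To finish, write $f_i = \sigma k_i$, so that $k_1 + k_2 = 2^\tau$. For the claim that $n$ is not prime: if $\sigma \geq 3$, then $f_1, f_2 \geq \sigma$ forces $\tau \geq 1$, hence $n = \sigma \cdot 2^\tau$ factors nontrivially; if $\sigma = 1$, then $n = 2^\tau \geq 4$ is itself a nontrivial power of two. For near-finality, I would partition $C$ into $\sigma$ disjoint copies of the multiset $\braced{k_1 : c_1,\, k_2 : c_2}$; each copy has cardinality $k_1 + k_2 = 2^\tau$ (a power of two) and average $(k_1 c_1 + k_2 c_2)/2^\tau = \mu$, exactly matching the definition of a near-final configuration.
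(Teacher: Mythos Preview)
Your proof is correct and follows essentially the same approach as the paper: offset so that $c_2=0$, use Condition~{\MixCond} to force the odd part of $c_1$ to divide $\mu$, deduce $\sigma\mid f_1$ from the identity $n\mu=f_1c_1$, and then partition $C$ into $\sigma$ identical blocks of size $2^\tau$. The only cosmetic difference is that the paper applies Condition~{\MixCond} once with the single modulus $b=\alpha$ equal to the full odd part of $c_1$, whereas you work prime by prime via $p$-adic valuations; your version is slightly more verbose but equivalent.
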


\begin{proof}
Without loss of generality, assume that $c_2 = 0$. (Otherwise consider
$C' = C-c_2$ instead. This does not affect Condition~~{\MixCond} and the property of being near-final.)
Let $c_1 = \alpha 2^{\gamma}$, for some odd $\alpha\in\posintegers$. 
Then $\mu = f_1c_1/n = f_1\alpha 2^{\gamma}/(\sigma 2^\tau)$. 
Since $\alpha$ divides $c_1$ and $c_2$, Condition~{\MixCond} implies that 
$\alpha$ must also divide $\mu$. In other words, $\mu/\alpha = f_1 2^{\gamma}/(\sigma 2^\tau)$
is integer. This implies, in turn, that $f_1$ is a multiple of $\sigma$, as claimed.
Since $f_2 = n-f_1$, it also gives us that $f_2$ is a multiple of $\sigma$.    

This immediately implies that $n$ cannot be prime, for $n=f_1+f_2$ is a sum of
two non-zero multiples of $\sigma$.

To prove the last claim,
let $f_1 =\sigma f'_1$ and $f_2 = \sigma f'_2$,  for some $f'_1, f'_2 \in\posintegers$.
Partition $C$ into $\sigma$ sub-multisets of the form $C_j = \braced{f_1':c_1,f'_2:0}$, for $j=1,2,...,\sigma$.
The cardinality of each $C_j$ is $f'_1+f'_2 = n/\sigma = 2^\tau$ and
its average is $\average(C_j) = f'_1c_1/(f'_1+f'_2) = (f_1/\sigma)c_1/(n/\sigma) = f_1c_1/n = \mu$.
Therefore $C$ is near-final, as claimed.
\end{proof}

We remark that when $n$ is a power of $2$ there is an alternative way to
perfectly mix $C$, using a divide-and-conquer approach: partition $C$ into
two equal-size multisets $C', C''$, mix each of these recursively, obtaining
$n/2$ droplets with concentration $\mu' = \average(C')$ and $n/2$ droplets
with concentration $\mu'' = \average(C'')$, 
and then mix $n/2$ disjoint pairs of droplets, one $\mu'$ and the other $\mu''$,
producing $n$ droplets with concentration $\mu = \average(C)$.
This approach, however, produces mixing graphs where the intermediate precision 
could be quite large, so it is not sufficient for our purpose.


\subsection{Proof for arbitrary $n \geq 7$}
\label{subsec: proof for n greater-equal than 7}



In this sub-section we prove that Condition~{\MixCond} in Theorem~\ref{thm: miscibility characterization}(a)
is sufficient for perfect mixability when $n\geq 7$. 
Let $C$ be a configuration that satisfies Condition~{\MixCond},
where  $C\cup \braced{\mu} \ingroundset \integers$ and $\nofdroplets{C} = n$.  
Also, let the factorization of $n$ be $n = 2^{\tau_0}p_1^{\tau_1} p_2^{\tau_2} ... p_s^{\tau_s}$, 
where $\braced{p_1,p_2,...,p_s} = \barp$ is the set of 
the odd prime factors of $n$ and $\braced{\tau_1,\tau_2,...,\tau_s}$ are their corresponding multiplicities.

If $A\ingroundset\integers$ is a configuration with $\nofdroplets{A} = n$ 
(where $n$ is as above) and $\average(A)\in\integers$, 
we will say that $A$ is \emph{$\barp$-incongruent} if $A$ is $p_r$-incongruent for all $r$.
If $A$ is $\barp$-incongruent then, by Lemma~\ref{lem: uniform wrt coprime numbers}(a),
it is $b$-incongruent for all $b$ that are powers of $p_r$'s, which,
by Corollary~\ref{cor: mc prime power factors}, implies that $A$ satisfies Condition~{\MixCond}.
Further, if $A$ is also not near-final
then Lemma~\ref{lem: mixability m = 2} implies that $\nofconcentrations{A}\ge 3$.
We summarize these observations below. (They will be often used in this section
without an explicit reference.)

\begin{observation}
	\label{obs: A mod-nouniform properties}
	Assume that a configuration $A\ingroundset \integers$ with $\average(A)\in\integers$ is $\barp$-incongruent.
	Then
	\textrm{(a)} $A$ satisfies Condition~{\MixCond}, and
	\textrm{(b)} if $A$ is not near-final then $\nofconcentrations{A}\ge 3$.
\end{observation}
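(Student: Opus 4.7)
The plan is to derive both parts essentially by citation of results already proved in Section~\ref{sec: some auxiliary lemmas} and Section~\ref{subsec: near-final configurations}, together with one small observation upgrading $p_r$-incongruence to incongruence modulo higher powers of $p_r$.

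For part~(a), I would aim to apply Corollary~\ref{cor: mc prime power factors}, so it is enough to verify Condition~{\MixCond} for every $b$ that is a power of an odd prime factor of $n$ (and $b \le c_{max}$). Fix such a $b = p_r^k$ with $p_r \in \barp$ and $k \ge 1$. By hypothesis $A$ is $p_r$-incongruent; and if $A$ were $p_r^k$-congruent, then Lemma~\ref{lem: uniform wrt coprime numbers}(a), applied with the pair $(p_r, p_r^{k-1})$, would force $A$ to be $p_r$-congruent, a contradiction. Hence $A$ is $p_r^k$-incongruent, so the premise of Condition~{\MixCond} is vacuously false for $b = p_r^k$ and the implication holds trivially. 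Corollary~\ref{cor: mc prime power factors} then lifts this to all odd $b\in\posintegers$.

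For part~(b), I would argue the contrapositive: if $\nofconcentrations{A} \le 2$, then $A$ is near-final. In the trivial case $\nofconcentrations{A} = 1$ every droplet already equals $\mu$, so $A$ is the disjoint union of $n$ singleton multisets, each of cardinality $2^0 = 1$ with average $\mu$, and the definition of near-final is satisfied. In the case $\nofconcentrations{A} = 2$, part~(a) tells us that $A$ satisfies Condition~{\MixCond}, and Lemma~\ref{lem: mixability m = 2} then applies directly to conclude that $A$ is near-final.

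I do not expect any real obstacle: both parts reduce to invocations of results already in place. The only mildly non-trivial step is the promotion of $p_r$-incongruence to $p_r^k$-incongruence for arbitrary $k \ge 1$, which is a one-line consequence of Lemma~\ref{lem: uniform wrt coprime numbers}(a).
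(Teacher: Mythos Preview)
Your proposal is correct and is essentially the same argument the paper gives in the paragraph immediately preceding the Observation: use Lemma~\ref{lem: uniform wrt coprime numbers}(a) to promote $p_r$-incongruence to $p_r^k$-incongruence and then invoke Corollary~\ref{cor: mc prime power factors} for part~(a), and apply Lemma~\ref{lem: mixability m = 2} (together with the trivial $\nofconcentrations{A}=1$ case) for part~(b). Your write-up is simply a slightly more detailed version of what the paper sketches.
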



\myparagraph{Proof outline}
The outline of the sufficiency proof for $n\ge 7$ is as follows (see Figure~\ref{fig: proof outline n >= 7}):
Assume that $C$ is not perfectly mixed.
Instead of dealing with $C$ directly,
we will consider a $\barp$-incongruent configuration $\intermC\ingroundset\integers$ with $\intermmu = \average(\intermC)\in\integers$
that is ``equivalent'' to $C$ in the sense
that $C$ is perfectly mixable with precision at most $1$ if and only if $\intermC$ is perfectly mixable
with precision $0$. 

It is thus sufficient to show that $\intermC$ is perfectly mixable with precision $0$.
To this end, we first apply some mixing operations to $\intermC$, producing only integer concentrations,
that convert $\intermC$ into a configuration $E$ such that:
\begin{description}
	\item{(I.0)} $E\ingroundset \integers$ and $\average(E) = \intermmu$,
	\item{(I.1)} $E$ has at least $2$ distinct non-singletons, and
	\item{(I.2)} $E$ is $\barp$-incongruent.
\end{description}

We refer to the three conditions above as Invariant~(I).
Then we show that any configuration $E$ that satisfies Invariant~(I)
has a pair of different concentrations whose mixing either preserves Invariant~(I)
or converts $E$ into a near-final configuration.
We can thus repeatedly mix such pairs, preserving Invariant~(I), until 
we produce a near-final configuration, that, by the previous section,
can be perfectly mixed with precision $0$. 

\begin{figure}[ht]
	\begin{center}
		\includegraphics[width = 4.2in]{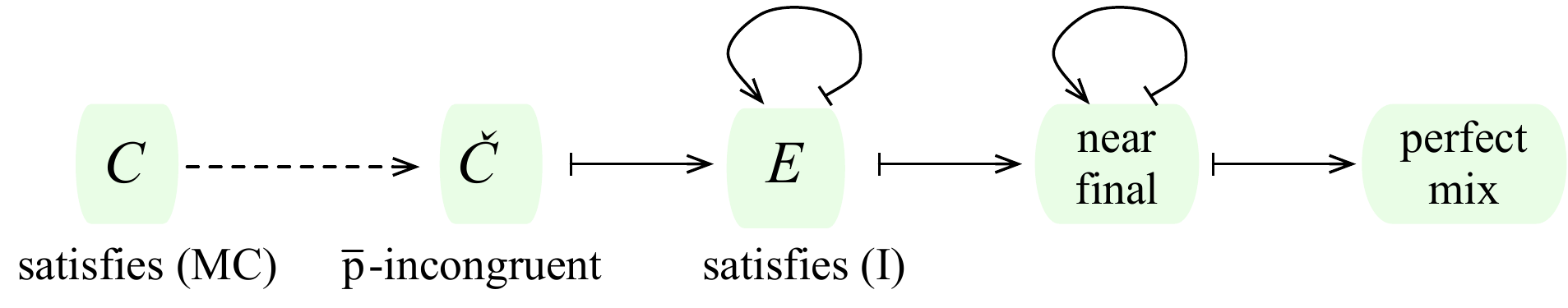}
		\caption{Proof outline for $n\ge 7$. The first dashed arrow represents
			replacing $C$ by $\intermC$. Solid arrows represent mixing operations.}
		\label{fig: proof outline n >= 7}
	\end{center}
\end{figure}


\myparagraph{Replacing $C$ by $\intermC$}
We now explain how to modify $C$. We will do it in steps.
First, let $C' = C-c_1$, for some arbitrarily chosen $c_1\in C$. 
Note that $\mu' = \average(C') = \mu-c_1 \in\integers$, that $0\in C'$, 
 and that $C'$ satisfies Condition~(MC).
By Observation~\ref{obs: offsetting does not affect reachability},
$C$ is perfectly mixable if and only if $C'$ is perfectly mixable (with the same precision), 
so it is sufficient to show that $C'$ is perfectly mixable.

Then, let $\theta\in\posintegers$ be the maximum odd integer that divides
all concentrations $c\in C'$ (that is, the greatest common odd divisor of $C'$).
Let $C'' = C'/\theta$. By Observation~\ref{obs: odd integer rescaling}(b) and the paragraph above, 
$C$ is perfectly mixable if and only $C''$ is perfectly mixable (with the same precision), so from now
on we can replace $C$ by $C''$.

By Condition~(MC) applied to $C'$,
$\theta$ is a divisor of $\mu'$, so $\mu'' = \average(C'') = \mu'/\theta\in \integers$.
Next, we claim that $C''$ is $\barp$-incongruent. To show this, we argue
by contradiction. Suppose that $C''$ is $p_r$-congruent for some $r$.
This means that there is $\beta\in\braced{0,1,...,p_r-1}$ such that
$c\equiv \beta \pmod{p_r}$ for all $c\in C''$. Since $0\in C''$ (because $0\in C'$),
we must have $\beta = 0$. In other words, all $c\in C''$ are multiples of $p_r$.
That would imply, however, that all $c\in C'$ are multiples of $\theta p_r$,
which contradicts the choice of $\theta$, completing the proof.

Finally, let $\intermC = 2\cdot C''$ and $\intermmu = 2\mu''= \average(\intermC)$. 
All concentrations in $\intermC$ are even and, since multiplying all concentrations by $2$
does not affect $\barp$-incongruence, $\intermC$ is $\barp$-incongruent. 
By Observation~\ref{obs: power of 2 rescaling}, and the properties of $C''$ established above,
$C$ is perfectly mixable with precision at most $1$ if and only if
$\intermC$ is perfectly mixable with precision $0$. Therefore, from now
on, it is sufficient to show a mixing sequence with all integral concentration values that converts
$\intermC$ into its perfect mixture $\braced{n:\intermmu}$.


\myparagraph{Converting $\intermC$ into $E$}
Let $\intermC$ be the configuration constructed above. We now show that
with at most two mixing operations, producing only integer values, we can
convert $\intermC$ into a configuration $E$ that satisfies Invariant~(I).
We start with an auxiliary lemma (Lemma~\ref{lem: at most one mod-unsafe pairs} below).

Let $A\ingroundset\integers$ be a configuration with $\average(A)\in\integers$ and $\nofdroplets{A} = n$. 
Assume that $A$ is $\barp$-incongruent.
For different concentrations $a, a' \in A$ with the same parity, we say that
the pair $(a,a')$ is \emph{$p_r$-safe} if mixing $a$ and $a'$ converts $A$
into a $p_r$-incongruent configuration; in other words, there is $a''\in A-\braced{a,a'}$
that satisfies $a'' \not\equiv \half(a+a') \pmod{p_r}$. 
(Otherwise, we say that the pair $(a,a')$ is \emph{$p_r$-unsafe}.)
We will also say that $(a,a')$ is \emph{$\barp$-safe} if it is $p_r$-safe for all $r$,
and we call it \emph{$\barp$-unsafe} otherwise.

For example, let $n=15$ and $A = \braced{11:3,10,16,18,28}$, for which $\average(A) = 7$. 
We have $\barp = \braced{p_1,p_2}$, where $p_1 = 3$ and $p_2 = 5$. 
The pair $(10,16)$ is $5$-unsafe, because mixing these two droplets produces
two droplets with concentration $13$, hence producing a $5$-congruent configuration
(all concentrations will have residue $3$ modulo $5$).
Thus the pair $(10,16)$ is $\barp$-unsafe.
All other pairs $(a,a')$ of same-parity concentrations from $A$
are $\barp$-safe.

\begin{lemma}
	\label{lem: at most one mod-unsafe pairs} 
	Let $A$ be a $\barp$-incongruent configuration with $\average(A)\in\integers$ and $\nofdroplets{A} = n$. 
	(Recall that $n\ge 7$.)
	Then
\begin{description}
	\item{\textrm{(a)}} For each $r$, there is at most one $p_r$-unsafe pair in $A$.
	\item{\textrm{(b)}}	There are at most $n-5$ droplets involved
	in same-parity concentration pairs that are $\barp$-unsafe.
	\item{\textrm{(c)}} If a concentration $a\in A$ is a non-singleton (has multiplicity at least $2$)
		then for any $b\in A$ with $b\neq a$ and the same parity as $a$, the pair $(a,b)$ is $\barp$-safe.
\end{description}
\end{lemma}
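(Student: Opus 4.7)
The plan is to prove (c), then (a), then (b), since (b) relies on both. In each part the strategy is the same: fix an odd prime factor $p_r$ of $n$ and, under the negation of the claim, derive that every element of $A$ has the same residue modulo $p_r$, contradicting the $p_r$-incongruence of $A$.

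For (c), I assume $a$ is a non-singleton and that some same-parity pair $(a,b)$ with $b\neq a$ is $p_r$-unsafe. Then every element of $A-\braced{a,b}$ is congruent to $\gamma := \half(a+b)$ modulo $p_r$. Since $a$ has multiplicity at least two, $A-\braced{a,b}$ still contains a copy of $a$, so $a\equiv \gamma \pmod{p_r}$, which (using that $p_r$ is odd, so $2$ is invertible modulo $p_r$) gives $a\equiv b \pmod{p_r}$ and $\gamma \equiv a \pmod{p_r}$. Hence every element of $A$ has residue $a$ modulo $p_r$, contradicting the $p_r$-incongruence of $A$.

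For (a), I suppose $(a,a')$ and $(b,b')$ are two distinct $p_r$-unsafe pairs, and let $\gamma_1 \equiv \half(a+a')$ and $\gamma_2 \equiv \half(b+b') \pmod{p_r}$ be the residues shared by $A-\braced{a,a'}$ and $A-\braced{b,b'}$, respectively. If the pairs are disjoint, then $a,a'\in A-\braced{b,b'}$ gives $a\equiv a' \equiv \gamma_2 \pmod{p_r}$; substituting, $\gamma_1 = \half(a+a') \equiv \gamma_2$, and every element of $A$ has residue $\gamma_1$. If they share one concentration, WLOG $a=b$ and $a'\neq b'$, then any $z\in A-\braced{a,a',b'}$ --- which exists since $n\geq 4$ --- has both residues $\gamma_1$ and $\gamma_2$, so $\gamma_1 \equiv \gamma_2$; combining with $a'\equiv \gamma_2$ (because $a'\in A-\braced{a,b'}$) and $b'\equiv \gamma_1$ (because $b'\in A-\braced{a,a'}$) forces $a'\equiv a \equiv b' \pmod{p_r}$, and once more every element of $A$ is congruent to $a$. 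Both cases contradict the $p_r$-incongruence of $A$.

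For (b), part (c) ensures that only singletons can lie in $\barp$-unsafe pairs, so the number of involved droplets equals the number of involved distinct concentrations. By (a), each of the $s := |\barp|$ odd prime factors of $n$ contributes at most two such concentrations, for a total bound of $2s$. The inequality $2s \leq n-5$ then follows by cases: $s=0$ needs only $n\geq 5$; $s=1$ needs $n\geq 7$, which is the hypothesis; and for $s\geq 2$ we have $n\geq p_1 p_2\cdots p_s \geq 3^s \geq 2s+5$. The main technical care lies in the overlapping-pair subcase of (a), where one must invoke $n\geq 4$ to secure a ``witness'' element $z\in A-\braced{a,a',b'}$ lying simultaneously in $A-\braced{a,a'}$ and $A-\braced{a,b'}$; once $z$ is in hand, the residue tracking to force $a'\equiv a\equiv b' \pmod{p_r}$ is routine.
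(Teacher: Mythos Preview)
Your proof is correct and follows essentially the same residue-tracking approach as the paper. The only notable differences are organizational: you prove (c) first and invoke it explicitly in (b) to equate droplets with concentrations (the paper leaves this step implicit via the structure established in its proof of (a)), and for (a) you argue by contradiction through a case split on whether two hypothetical $p_r$-unsafe pairs overlap, whereas the paper instead characterizes the residue pattern forced by a single unsafe pair (both members $\not\equiv\beta$, everything else $\equiv\beta$) and then checks directly that any other pair remains $p_r$-safe after mixing.
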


\begin{proof}
(a)
Suppose that some pair $(a_1,a_2)$ of concentrations with $a_1\neq a_2$ and same parity is $p_r$-unsafe,
and let $\beta = (\half (a_1+a_2)) \bmod p_r$. The assumption about $a_1,a_2$ implies that 
$b\equiv \beta \pmod{p_r}$ for all $b\in A- \braced{a_1,a_2}$, and the assumption that 
$A$ is $\barp$-incongruent implies that $a_i \not\equiv \beta \pmod{p_r}$ for at
least one $i\in\braced{1,2}$. We claim that this must in fact hold for
\emph{both} $i\in\braced{1,2}$. Indeed, say that $a_1 \not\equiv \beta \pmod{p_r}$
but $a_2 \equiv \beta \pmod{p_r}$. This means that
$p_r\nmid (a_1-\beta)$ and $p_r| (a_2-\beta)$, which implies that
$p_r\nmid (\half(a_1+a_2) - \beta)$, contradicting the definition of $\beta$.
Thus $a_i \not\equiv \beta \pmod{p_r}$ for both $i\in\braced{1,2}$, as claimed.

It remains to show that any other pair of concentrations is $p_r$-safe. 
Fix three arbitrary concentrations $\braced{b_1,b_2,b_3}\subseteq A - \braced{a_1,a_2}$, 
so that we have $b_j\equiv\beta\pmod{p_r}$ for $j\in\braced{1,2,3}$.
Consider any two different same-parity concentrations $c_1,c_2\in A$ with $\braced{c_1,c_2}\neq\braced{a_1,a_2}$, and
let $A'$ be obtained from $A$ by mixing droplets $c_1$ and $c_2$.
Then $A'$ must still contain some droplet $b_j$ and, since $\braced{c_1,c_2}\neq\braced{a_1,a_2}$, 
$A'$ will also contain some droplet $a_i$. 
As we have $a_i\not\equiv b_j \pmod{p_r}$, $A'$ is $p_r$-incongruent, 
and thus $(c_1,c_2)$ is $p_r$-safe.
	
(b) By part~(a),
	the	number of concentrations involved in same-parity $\barp$-unsafe pairs is at most
	$2s$, where $s$ is the number of distinct odd prime factors of $n$,
	so it remains to show that  $2s \leq n-5$.
	Indeed,	if $n$ equals either $7$ or $8$ (for which $s=1$ or $0$, respectively), then the inequality holds. 
	For $n\ge 9$, using the fact that $s\le \log_3{n}$, it is sufficient to show that
	$2\log_3{n} \le n-5$. This is true, because for $n=9$ the equality holds,
	and for $n\ge 9$ the left-hand side grows slower than the right-hand side.	
	
(c) Fix some factor $p_r$ of $n$. As $A$ is $p_r$-incongruent, there is
a concentration $c\in A$ with $c\not\equiv a \pmod{p_r}$. We have two cases. 
If $b\equiv a\pmod{p_r}$ then $b\neq c$, so after mixing the new configuration
$A'$ will contain $c$ and $\half(a+b)$, where $\half(a+b) \equiv a \pmod{p_r}$,
so $c\not\equiv \half(a+b) \pmod{p_r}$.
On the other hand, if $b\not\equiv a\pmod{p_r}$, then $A'$ will contain $a$ and $\half(a+b)$,
and $a\not\equiv \half(a+b) \pmod{p_r}$. Thus $(a,b)$ is $p_r$-safe.
As this holds for all $r$, $(a,b)$ is $\barp$-safe.
\end{proof}

The configuration $\intermC$ constructed earlier contains only even concentration values,
already satisfies $\intermC\cup\braced{\intermmu}\ingroundset\integers$ and is $\barp$-incongruent 
(that is, it satisfies conditions~(I.0) and~(I.2) for $E$).  
It remains to show that there are mixing operations involving only droplets 
already present in $\intermC$ (and thus of even value, to assure that 
Condition~(I.0) holds) that preserve condition~(I.2), and such that the resulting
configuration $E$ satisfies condition~(I.1).
If $\intermC$ already has two or more non-singletons, we can take $E = \intermC$ and we are done, 
so assume otherwise, namely that there is either exactly one non-singleton in $\intermC$ or none.
We consider three cases.

\medskip\noindent
{\mycase{1}} $\intermC$ has one non-singleton $a$ and its multiplicity is $f\ge 3$.
Mix $a$ with any singleton $b$ and let $E$ be the resulting configuration.
In $E$ we have two non-singletons and condition~(I.2) will be satisfied, 
by Lemma~\ref{lem: at most one mod-unsafe pairs}(c). Thus $E$ satisfies Invariant~(I).

\medskip\noindent
{\mycase{2}} $\intermC$ has one non-singleton $a$ and its multiplicity is $2$.
By 	Lemma~\ref{lem: at most one mod-unsafe pairs}(b),
there are at least $5$ droplets in $\intermC$ not involved in any
$\barp$-unsafe pair. Thus there are at least $3$ singletons, say $b,c,d$,
that are not involved in any $\barp$-unsafe pair.
Mixing one of pairs $(b,c)$ or $(b,d)$ produces a concentration other than
$a$. Mix this pair, and let $E$ be the resulting configuration.
Then $E$ satisfies Invariant~(I).

\medskip\noindent
{\mycase{3}} $\intermC$ has only singletons.
By Lemma~\ref{lem: at most one mod-unsafe pairs}(b),
there is a singleton, say $b\in \intermC$, that is not involved in any
$\barp$-unsafe pair (in fact, there are at least five, but we need just one here).
Let $c\in \intermC-\braced{b}$ be a singleton nearest to $b$, that is one
that minimizes $|c-b|$. By the choice of $b$, the pair $(b,c)$ is
$\barp$-safe. Let $\intermC' = \intermC - \braced{b,c} \cup\braced{a,a}$,
for $a=\half(b+c)$, be the configuration obtained by mixing this pair.
$\intermC'$ is $\barp$-incongruent and
in $\intermC'$ we have only one non-singleton $a$ and its multiplicity is $2$.
We can thus apply Case~2 above to $\intermC'$, converting it to $E$.
(Note that, unlike for the $\intermC$ in Case~2, our $a$ may be odd. But since we
do not mix $a$ in Case~2, the argument is still valid.)


\myparagraph{Preserving Invariant~(I)}
We now present the last part of the proof, following the outline given at the beginning
of this section. Let $E\ingroundset\integers$ be the configuration, say
$E = \braced{f_1:e_1, f_2:e_2, ... , f_m:e_m}$, with $\average(E)=\intermmu$,
obtained from $\intermC$ by a sequence of mixing operations, as described earlier.
If $E$ is near-final then $E$ has a perfect-mixing sequence 
by Lemma~\ref{lem: mixability for near-final}.
Otherwise, we show that $E$ has a pair of concentrations 
whose mixing produces a configuration that either preserves Invariant~(I) or is near-final.

Let $e_i,e_j\in E$ be two different concentrations. Let $e = \half(e_i+e_j)$, and denote by
$E' = E- \braced{e_i,e_j}\cup \braced{e,e}$ the configuration obtained from $E$ by mixing $e_i$ and $e_j$.
The following two notions will be useful in our analysis:
\begin{itemize}
	\item  Pair $(e_i,e_j)$ will be called \emph{(I)-safe} if $E'$ satisfies Invariant~(I).
	We will be always choosing $e_i$ and $e_j$ with the same parity, which is a sufficient
	and necessary condition for $E'$ to satisfy condition~(I.0). Also,
	for $E'$ to satisfy condition~(I.2), the pair $(e_i,e_j)$ must be $\barp$-safe.
	\item Pair $(e_i,e_j)$ will be called \emph{near-final} if $E'$ is near-final.
	Note that it is possible for $(e_i,e_j)$ to be both (I)-safe and near-final.
\end{itemize}
We next prove that if configuration $E$ satisfies Invariant~(I) 
then it must contain a pair of different concentrations that is either (I)-safe
or near-final. This will show that we can repeatedly mix $E$, 
maintaining Invariant~(I), until we turn $E$ into a near-final configuration,
which we can then perfectly mix using Lemma~\ref{lem: mixability for near-final}.

\begin{lemma}
	\label{lem: invariant, two same parity, fi >= 3}
	Assume that $E$ contains two different concentrations $e_i,e_j\in E$
	with the same parity and $f_i\ge 3$.
	If $E$ satisfies Invariant~(I) then the pair $(e_i,e_j)$ is (I)-safe.
\end{lemma}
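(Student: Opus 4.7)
The plan is to verify each of the three conditions of Invariant~(I) for the configuration $E'$ obtained from $E$ by mixing $e_i$ and $e_j$, exploiting the three hypotheses that $e_i$ and $e_j$ have the same parity, $e_i\ne e_j$, and $f_i\ge 3$.

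First, because $e_i$ and $e_j$ share a parity, the produced value $e=\half(e_i+e_j)$ lies in $\integers$, so $E'\ingroundset\integers$; and since a mixing operation preserves the total sum, $\average(E')=\average(E)=\intermmu$. This gives~(I.0). Next, condition~(I.2) (that $E'$ is $\barp$-incongruent) follows directly from Lemma~\ref{lem: at most one mod-unsafe pairs}(c): the hypothesis $f_i\ge 3\ge 2$ makes $e_i$ a non-singleton of~$E$, and since $e_j$ has the same parity as $e_i$ with $e_j\ne e_i$, that lemma guarantees the pair $(e_i,e_j)$ is $\barp$-safe, which is exactly the statement that the mixing leaves the configuration $\barp$-incongruent.

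The only condition that requires a short independent argument is~(I.1), that $E'$ contains at least two distinct non-singletons. After the mixing, the multiplicity of $e_i$ in $E'$ is $f_i-1\ge 2$, so $e_i$ is still a non-singleton in~$E'$. Meanwhile, the two new droplets of value $e=\half(e_i+e_j)$ give $e$ multiplicity at least $2$ in $E'$, so $e$ is a non-singleton as well. Because $e_i\ne e_j$, one has $e\ne e_i$ (and $e\ne e_j$), so $e_i$ and $e$ are two distinct non-singletons of $E'$, establishing~(I.1).

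There is no real obstacle: the lemma is essentially a bookkeeping check that the three invariants survive a single mixing step, and the nontrivial ingredient, the $\barp$-safety of the pair $(e_i,e_j)$, has already been isolated in Lemma~\ref{lem: at most one mod-unsafe pairs}(c). The role of the hypothesis $f_i\ge 3$ is specifically to ensure both that $e_i$ remains a non-singleton after mixing (for~(I.1)) and that $e_i$ is a non-singleton in $E$ to begin with (so that part~(c) of that lemma applies for~(I.2)).
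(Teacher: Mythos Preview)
Your proof is correct and follows essentially the same approach as the paper: both verify (I.2) via Lemma~\ref{lem: at most one mod-unsafe pairs}(c) using that $e_i$ is a non-singleton, and both establish (I.1) by observing that $e_i$ retains multiplicity $f_i-1\ge 2$ while the new value $e=\half(e_i+e_j)\neq e_i$ has multiplicity at least~$2$. Your treatment of (I.0) is slightly more explicit than the paper's, which simply notes beforehand that same-parity mixing always preserves (I.0).
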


\begin{proof}
	Let $e = \half(e_i+e_j)$ and let $E' = E-\braced{e_i,e_j}\cup\braced{e,e}$ be obtained from
	mixing $e_i$ and $e_j$. 	
	Since $f_i > 1$, Lemma~\ref{lem: at most one mod-unsafe pairs}(c) implies that
	condition (I.2) holds for $E'$.	In $E'$ we will still have at least two droplets of
	concentration $e_i$ and at least two droplets of concentration $e\neq e_i$.
	So condition (I.1) holds as well.
	(We remark that we could end up with $\nofconcentrations{E'} = 2$, which can happen if $f_j = 1$
	and $\nofconcentrations{E}=3$ with $e\in E$.
	If so, since $E'$ satisfies (I.2), it must also satisfy condition~{\MixCond}, and
	therefore, by Lemma~\ref{lem: mixability m = 2}, in this case $E'$ is actually near-final;
	that is, $(e_i,e_j)$ is a near-final pair.)
\end{proof}

\begin{lemma}
	\label{lem: invariant, three same parity, fi, fj >= 2}
	Assume that $E$ contains three different concentrations $e_i,e_j, e_k\in E$
	with the same parity and $f_i, f_j\ge 2$.
	If $E$ satisfies Invariant~(I) then one of $(e_i,e_k)$, $(e_j,e_k)$ is (I)-safe.
\end{lemma}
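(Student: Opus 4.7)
The plan is to verify each of (I.0), (I.1), and (I.2) for the configuration $E'$ obtained by mixing one of the two candidate pairs $(e_i,e_k)$ or $(e_j,e_k)$. Conditions (I.0) and (I.2) will hold for \emph{either} choice, and for (I.1) I will argue that it can fail for at most one of the two pairs; the only bad case (both failing) forces $e_i=e_j$, which is forbidden.

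First I would dispose of (I.0) and (I.2). Since $e_i,e_k$ share a parity, $\half(e_i+e_k)\in\integers$, and the average is preserved by any mix; the same holds for $(e_j,e_k)$, so (I.0) is automatic. For (I.2), I invoke Lemma~\ref{lem: at most one mod-unsafe pairs}(c): because $f_i\ge 2$, the pair $(e_i,e_k)$ is $\barp$-safe, so mixing it preserves $\barp$-incongruence; analogously, $f_j\ge 2$ makes $(e_j,e_k)$ $\barp$-safe. Thus (I.2) holds in both resulting configurations.

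The real content is (I.1). Write $e=\half(e_i+e_k)$ and $e'=\half(e_j+e_k)$. Mixing $(e_i,e_k)$ creates at least two copies of $e$ in $E'$, while $e_j$ retains its full multiplicity $f_j\ge 2$; hence $e$ and $e_j$ are two distinct non-singletons provided $e\neq e_j$. Symmetrically, mixing $(e_j,e_k)$ leaves $e_i$ as a non-singleton and produces $e'$ as a new non-singleton, so (I.1) holds as long as $e'\neq e_i$.

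It remains to rule out the simultaneous collision $e=e_j$ and $e'=e_i$. These equalities unfold to $2e_j=e_i+e_k$ and $2e_i=e_j+e_k$; subtracting gives $2(e_j-e_i)=e_i-e_j$, i.e. $3(e_j-e_i)=0$, which forces $e_i=e_j$, contradicting the assumption that $e_i,e_j,e_k$ are pairwise distinct. Therefore at least one of $(e_i,e_k)$, $(e_j,e_k)$ is (I)-safe. I don't expect any genuine obstacle here: the work is already done by Lemma~\ref{lem: at most one mod-unsafe pairs}(c) and by the one-line linear algebra ruling out the double collision.
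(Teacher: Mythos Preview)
Your proof is correct and follows essentially the same approach as the paper: both invoke Lemma~\ref{lem: at most one mod-unsafe pairs}(c) for (I.2), and both reduce (I.1) to ruling out the coincidence $\half(e_i+e_k)=e_j$. The only cosmetic difference is that the paper selects the working pair up front by assuming $|e_i-e_k|\le|e_j-e_k|$ (which immediately forces $\half(e_i+e_k)\neq e_j$), whereas you argue symmetrically that the two bad collisions $\half(e_i+e_k)=e_j$ and $\half(e_j+e_k)=e_i$ cannot both occur; the linear-algebra step and the distance argument are equivalent one-liners.
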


\begin{proof}
	Without loss of generality, assume that $|e_i-e_k| \le |e_j-e_k|$ (otherwise
	swap $i$ and $j$). We show that $(e_i,e_k)$ is (I)-safe.
	Let $e = \half(e_i+e_k)$ and let $E' = E-\braced{e_i,e_k}\cup\braced{e,e}$ be obtained from
	mixing $e_i$ and $e_k$. 
	Since $f_i > 1$, Lemma~\ref{lem: at most one mod-unsafe pairs}(c) implies that
	condition (I.2) holds for $E'$.
	From $|e_i-e_k| \le |e_j-e_k|$, we have that $e\neq e_j$.
	So in $E'$ we will have at least two droplets of
	concentration $e_j$ and at least two droplets of concentration $e\neq e_j$.
	This means that condition (I.1) holds as well.
\end{proof}

\begin{lemma}
	\label{lem: invariant, m >= 4, three same parity, fi >= 2, fj = fk = 1}
	Assume that $\nofconcentrations{E}\ge 4$ and that
	 $E$ contains three different concentrations $e_i,e_j, e_k\in E$
	with the same parity such that $f_i\ge 2$ and $f_j = f_k = 1$.
	If $E$ satisfies Invariant~(I),
	then one of $(e_i,e_j)$, $(e_i,e_k)$ is (I)-safe.
\end{lemma}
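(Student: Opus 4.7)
The plan is to check each condition of Invariant~(I) for the configuration $E'$ produced by one of the two candidate mixes, invoking the choice between $e_j$ and $e_k$ only when it becomes necessary. Both candidate pairs share the same parity as $e_i$, so condition~(I.0) is automatic. Since $f_i\ge 2$, Lemma~\ref{lem: at most one mod-unsafe pairs}(c) guarantees that both $(e_i,e_j)$ and $(e_i,e_k)$ are $\barp$-safe, so condition~(I.2) holds for either pair. The whole argument thus reduces to verifying condition~(I.1): that $E'$ contains at least two distinct non-singletons.

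Set $e^{(j)}=\half(e_i+e_j)$ and $e^{(k)}=\half(e_i+e_k)$. The small but crucial fact is that, since $e_j\neq e_k$, these two potential outputs satisfy $e^{(j)}\neq e^{(k)}$. If $f_i\ge 3$, then after either mix $e_i$ retains multiplicity $\ge 2$ and the new value $e^{(\cdot)}\neq e_i$ has multiplicity $\ge 2$, so condition~(I.1) holds for either pair. The subtler case is $f_i=2$: after the mix, $e_i$ becomes a singleton in $E'$, and since $f_j=f_k=1$ means neither $e_j$ nor $e_k$ was a non-singleton of $E$, the non-singletons of $E'$ are precisely $e^{(\cdot)}$ together with the non-singletons of $E$ other than $e_i$.

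By Invariant~(I.1) applied to $E$, at least one additional non-singleton $e_\ell$ of $E$ exists. If $E$ has two or more additional non-singletons, then even a collision $e^{(\cdot)}=e_\ell$ still leaves two distinct non-singletons in $E'$. The main obstacle is the remaining subcase, when $e_\ell$ is the unique additional non-singleton; there we must avoid $e^{(\cdot)}=e_\ell$, and this is precisely where the freedom to choose between $(e_i,e_j)$ and $(e_i,e_k)$ matters: since $e^{(j)}\neq e^{(k)}$, at most one of them coincides with $e_\ell$, so the other pair is (I)-safe. I expect this last collision case to be the only genuinely delicate part of the proof; everything else follows from bookkeeping on multiplicities and from the two prior lemmas.
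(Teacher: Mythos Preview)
Your proof is correct and follows essentially the same approach as the paper: both use Lemma~\ref{lem: at most one mod-unsafe pairs}(c) for condition~(I.2), locate a second non-singleton $e_\ell\neq e_i$ guaranteed by~(I.1), and then exploit $e^{(j)}\neq e^{(k)}$ to choose the mix that avoids the collision $e^{(\cdot)}=e_\ell$. The paper's version is slightly leaner in that it does not split into the subcases $f_i\ge 3$ versus $f_i=2$ (nor into the ``two or more additional non-singletons'' subcase); it simply picks $e_\ell\in E-\{e_i,e_j,e_k\}$ with $f_\ell\ge 2$, assumes without loss of generality that $\half(e_i+e_j)\neq e_\ell$, and observes that $E'$ then contains the two distinct non-singletons $e_\ell$ and $\half(e_i+e_j)$---which covers all your cases at once.
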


\begin{proof}
	By condition (I.1), there is another concentration $e_l\in E - \braced{e_i,e_j,e_k}$
	with $f_l\ge 2$.
	Without loss of generality, we can assume that $e = \half(e_i+e_j)\neq e_l$ (otherwise we can
	use $e_k$ instead of $e_j$).
	Mixing $e_i$ and $e_j$ produces $E' = E - \braced{e_i,e_j}\cup \braced{e,e}$.
	Since $f_i\ge 2$, condition (I.2) is satisfied.
	In $E'$ there are at least two droplets with concentration $e_l$
	and at least two droplets with concentration $e\neq e_l$, so
	condition (I.1) is satisfied as well.
\end{proof}

\begin{lemma}
	\label{lem: m = 3, preserves invariant}
	Assume that $\nofconcentrations{E}=3$.
	If $E$ satisfies Invariant~(I) then $E$ has a pair of concentrations
	that is either (I)-safe or near-final.
\end{lemma}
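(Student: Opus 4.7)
I would split the argument on the parities of the three distinct concentrations $e_1, e_2, e_3$ in $E$ (with multiplicities $f_1, f_2, f_3$ summing to $n \geq 7$); by~(I.1), at least two of the $f_i$ are $\geq 2$. If all three $e_i$ share a common parity then every pair is mixable, and I would apply Lemma~\ref{lem: invariant, three same parity, fi, fj >= 2} directly, taking the two guaranteed non-singletons as $e_i, e_j$ and the third concentration as $e_k$, to obtain an (I)-safe pair.

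The substantive case is when only two, say $e_1$ and $e_2$, share a parity while $e_3$ has the other parity, so the only mixable pair is $(e_1, e_2)$. Let $e = \half(e_1+e_2)$ and $E' = E - \braced{e_1, e_2} \cup \braced{e,e}$. At least one of $f_1, f_2$ must be $\geq 2$ (otherwise $e_3$ would be the only potential non-singleton in $E$, contradicting~(I.1)), so Lemma~\ref{lem: at most one mod-unsafe pairs}(c) gives that $(e_1, e_2)$ is $\barp$-safe, and condition~(I.2) holds for $E'$. What remains is to check~(I.1) for $E'$ or else to argue that $E'$ is near-final.

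When $e \neq e_3$, a second non-singleton in $E'$ is always visible: either $e_3$ itself (if $f_3 \geq 2$), or one of $e_1, e_2$ (when $f_3 \leq 1$, since then $f_1 + f_2 \geq 6$ forces $f_1 \geq 3$ or $f_2 \geq 3$); in either case $(e_1, e_2)$ is (I)-safe. When $e = e_3$ and additionally $f_1 \geq 3$ or $f_2 \geq 3$, the analogous second non-singleton persists alongside $e_3$ (whose multiplicity becomes $f_3+2 \geq 3$ in $E'$), so $(e_1, e_2)$ is again (I)-safe.

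The hard remaining sub-case is $e = e_3$ with $f_1, f_2 \leq 2$, where~(I.1) in $E'$ can legitimately fail; note that~(I.1) rules out $f_1 = f_2 = 1$. If exactly one of $f_1, f_2$ equals $1$, say $f_1 = 1$ so that $f_2 = 2$ and $f_3 \geq 2$ by~(I.1), then $E' = \braced{1:e_2,\, (f_3+2):e_3}$ has $\nofconcentrations{E'}=2$; being $\barp$-incongruent it satisfies Condition~{\MixCond} via Corollary~\ref{cor: mc prime power factors}, so Lemma~\ref{lem: mixability m = 2} makes $E'$ near-final. If $f_1 = f_2 = 2$, then $e_1 + e_2 = 2e = 2e_3$ combined with $\mysum(E) = n\intermmu$ forces $\intermmu = e_3$; I would then partition $E' = \braced{1:e_1,\, 1:e_2,\, (f_3+2):e_3}$ into the size-$2$ block $\braced{e_1, e_2}$ (average $e_3 = \intermmu$) plus a binary-expansion decomposition of the $f_3+2$ copies of $e_3$ into power-of-$2$ blocks, exhibiting near-finality. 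This final arithmetic coincidence $\intermmu = e_3$ is the main obstacle and the only point in the argument where preservation of~(I) is unavailable and one must instead directly exhibit a near-final partition.
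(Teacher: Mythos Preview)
Your argument is correct, and it reaches the same conclusion as the paper's proof, but the organization differs. The paper first orders the concentrations so that $f_1\ge f_2\ge f_3$ and (by the odd/even symmetry) takes $e_1$ to be even; from $n\ge 7$ this gives $f_1\ge 3$, so whenever a second even concentration exists the paper can invoke Lemma~\ref{lem: invariant, two same parity, fi >= 3} immediately. Only in the remaining situation ($e_2,e_3$ both odd) does the paper mix the unique same-parity pair and analyze the outcome. You instead split purely on the parity pattern: when all three parities agree you call Lemma~\ref{lem: invariant, three same parity, fi, fj >= 2}, and when exactly two agree you analyze the unique mixable pair directly without ever appealing to the ordering or to Lemma~\ref{lem: invariant, two same parity, fi >= 3}.

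The substantive case---mixing the unique same-parity pair---is handled similarly in both proofs, though your sub-case breakdown (on $e=e_3$ versus $e\neq e_3$, and then on whether some $f_i\ge 3$) is slightly finer than the paper's (on $\nofconcentrations{E'}=2$, then $e=e_1$ versus $e\neq e_1$, then $f_2=f_3$ versus $f_2>f_3$). One minor point: in your final $f_1=f_2=2$ case you decompose the $f_3+2$ copies of $e_3$ via binary expansion, but singletons $\braced{e_3}$ already suffice since each has average $\intermmu=e_3$; this is how the paper handles its analogous $f_2=f_3$ case. Your route avoids the multiplicity ordering at the cost of a few extra sub-cases; the paper's route is more compact because the ordering lets Lemma~\ref{lem: invariant, two same parity, fi >= 3} absorb more situations up front.
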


\begin{proof}
	Let $E = \braced{f_1:e_1, f_2:e_2, f_3:e_3}$. Reorder $E$ so that
	$f_1\geq f_2 \geq f_3$. From $f_1+f_2+f_3 = n \ge 7$
	we have that $f_1\ge 3$ and $f_2\ge 2$. By symmetry, we can
	also assume that $e_1$ is even. 
	If either $e_2$ or $e_3$ is even, then the existence of an (I)-safe pair follows from
	Lemma~\ref{lem: invariant, two same parity, fi >= 3}.
	So we can assume that $e_2,e_3$ are odd.
	
	Let $e = \half(e_2+e_3)$ and let $E' = E-\braced{e_2,e_3}\cup\braced{e,e}$ 
	be obtained from mixing $e_2$ and $e_3$. 	
	Since $f_2 \ge 2$, Lemma~\ref{lem: at most one mod-unsafe pairs}(c) implies that
	condition (I.2) holds for $E'$.
	This, and Observation~\ref{obs: A mod-nouniform properties}(a) imply that if 
	$\nofconcentrations{E'} = 2$ then, by Lemma~\ref{lem: mixability m = 2},
	$E'$ is near-final and thus $(e_2,e_3)$ is a near-final pair.
	So for the rest of the proof we assume that $\nofconcentrations{E'}\ge 3$.
	(For $(e_2,e_3)$ to be (I)-safe, it is now sufficient to prove that $E'$ satisfies (I.1).)
	
	If $e\neq e_1$, in $E'$
	we have at least three droplets with concentration $e_1$ and at least
	two with concentration $e$, so $E'$ satisfies (I.1).
	Otherwise, $e = e_1$. 
	Now, $f_2 = f_3$ implies that $E'$ is near-final 
	(by partitioning $E'$ into singletons $\braced{e_1}$ and pairs $\braced{e_2,e_3}$),
	and thus $(e_2,e_3)$ is a near-final pair.
	Instead, assume that $f_2 > f_3$.
	As $\nofconcentrations{E'}\ge 3$ (and $e = e_1$), $f_3 \ge 2$ and thus $f_2\ge 3$. 
	This implies that in $E'$ there are at least five droplets with concentration $e_1$ and at least
	two droplets with concentration $e_2$, so $E'$ satisfies (I.1).
\end{proof}

\begin{lemma}
	\label{lem: m = 4, preserves invariant}
	Assume that $\nofconcentrations{E}=4$.
	If $E$ satisfies Invariant~(I) then $E$ has an (I)-safe pair.
\end{lemma}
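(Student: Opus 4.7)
The plan is to case-split on the parity distribution of $e_1, e_2, e_3, e_4$ and invoke the three preceding invariant-preservation lemmas wherever possible, handling the residual 2-2 parity split by direct inspection.

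If three of $e_1, \ldots, e_4$ share a common parity, say $e_i, e_j, e_k$ with the remaining concentration $e_\ell$ of opposite parity, then I would first observe that Invariant~(I.1) forces at least one of $f_i, f_j, f_k$ to be $\ge 2$: otherwise the only candidate for a non-singleton of $E$ would be $e_\ell$, giving just one non-singleton and contradicting (I.1). If two of $f_i, f_j, f_k$ are $\ge 2$, Lemma~\ref{lem: invariant, three same parity, fi, fj >= 2} immediately yields an (I)-safe pair; otherwise exactly one of them is $\ge 2$ while the other two are singletons, and Lemma~\ref{lem: invariant, m >= 4, three same parity, fi >= 2, fj = fk = 1} applies (using $\nofconcentrations{E} = 4 \ge 4$).

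The remaining case is a 2-2 parity split; relabel so that $e_1, e_2$ are even and $e_3, e_4$ are odd, leaving $(e_1, e_2)$ and $(e_3, e_4)$ as the only candidate same-parity pairs. If some $f_i \ge 3$, I would apply Lemma~\ref{lem: invariant, two same parity, fi >= 3} to whichever of these two pairs contains $e_i$. Otherwise all $f_i \le 2$, and since $\sum_i f_i = n \ge 7$, this forces either $n = 8$ with all $f_i = 2$, or $n = 7$ with three multiplicities equal to $2$ and one equal to $1$. In both scenarios every same-parity pair contains a non-singleton, so Lemma~\ref{lem: at most one mod-unsafe pairs}(c) immediately gives condition~(I.2) for both pairs, leaving only condition~(I.1) to verify.

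This last scenario is the main obstacle, since the earlier invariant-preservation lemmas do not apply. The key observation I would use is that mixing a same-parity pair leaves the two concentrations in the opposite parity class completely untouched. I would pick the same-parity pair in the parity class containing the unique singleton (or, when $n = 8$, any pair). Because $e = \half(e_i + e_j)$ is the average of two distinct inputs it cannot equal $e_i$ or $e_j$, so $e$ is either a brand-new concentration or coincides with one of the two concentrations in the opposite parity class. In the former case $E'$ contains at least three distinct non-singletons, namely the two untouched opposite-class doubletons and $e$ itself; in the latter case $e$ merges with one opposite-class concentration (producing a quadrupleton) while the other opposite-class doubleton survives, still yielding two distinct non-singletons. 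Either way (I.1) holds, completing the argument.
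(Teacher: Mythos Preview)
Your argument is correct, with one small omission in the case split: when you write ``three of $e_1,\ldots,e_4$ share a common parity, say $e_i,e_j,e_k$ with the remaining concentration $e_\ell$ of opposite parity,'' you have implicitly excluded the possibility that \emph{all four} concentrations share a parity. That case is trivial (condition~(I.1) gives two non-singletons among the four, so Lemma~\ref{lem: invariant, three same parity, fi, fj >= 2} applies directly), but it should be stated, since as written your two cases do not exhaust the parity distributions.

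Your decomposition differs from the paper's. The paper first orders the multiplicities $f_1\ge f_2\ge f_3\ge f_4$ (taking $e_1$ even by symmetry) and splits on whether $f_1\ge 3$ or $f_1=2$; parity enters only as a secondary sub-case. You instead split first on the parity pattern ($\ge 3$ same parity versus a $2$--$2$ split) and examine multiplicities afterwards. Both routes funnel most sub-cases into the three preceding invariant-preservation lemmas, and both isolate as the residual ``hard'' case the $2$--$2$ parity split with all multiplicities at most $2$; each proof then dispatches it by the same direct verification (mix the pair in the parity class containing the possible singleton, and observe that the two opposite-parity doubletons survive untouched, giving~(I.1), while Lemma~\ref{lem: at most one mod-unsafe pairs}(c) gives~(I.2)). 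Your observation that this residual case forces $n\in\{7,8\}$ makes explicit something the paper leaves implicit. The paper's multiplicity-first split has the mild advantage that Lemma~\ref{lem: invariant, two same parity, fi >= 3} is invoked more systematically when $f_1\ge 3$, but your parity-first organization is equally valid and arguably cleaner.
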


\begin{proof}
	Let $E = \braced{f_1:e_1, f_2:e_2, f_3:e_3, f_4:e_4}$.
	By symmetry and reordering, respectively, we can assume that $e_1$ is even and that
	$f_1\ge f_2\ge f_3\ge f_4$. This, and condition~(I.1) imply that $f_1\geq f_2\geq 2$.
	We consider two cases, depending on the value of $f_1$.
	
	\smallskip\noindent
	\mycase{1} $f_1\ge 3$.
	If at least one of $e_2,e_3,e_4$ is even, then the existence of an
	(I)-safe pair follows from Lemma~\ref{lem: invariant, two same parity, fi >= 3}.
	
	So assume now that $e_2,e_3,e_4$ are all odd.
	If $f_3\ge 2$, we obtain an (I)-safe pair from
	Lemma~\ref{lem: invariant, three same parity, fi, fj >= 2}.
	Otherwise, $f_3 = f_4 = 1$, and we obtain an (I)-safe pair
	from Lemma~\ref{lem: invariant, m >= 4, three same parity, fi >= 2, fj = fk = 1}.
	
	\smallskip\noindent
	\mycase{2} $f_1 = 2$. Then $n\ge 7$ implies that $f_2 = f_3 = 2$ as well.
	If two concentrations among $e_2,e_3,e_4$ are even,
	or if $e_2,e_3,e_4$ are all odd,
	the existence of an (I)-safe pair follows from
	Lemma~\ref{lem: invariant, three same parity, fi, fj >= 2}.
	
	Otherwise, one of $e_2,e_3,e_4$ is even and two are odd.
	We then want to mix $e_4$ with the one of $e_1,e_2,e_3$ that has the same parity as $e_4$.
	For concreteness, assume that $e_2$ is even and $e_3,e_4$ are odd.
	(The argument in all other cases is the same.)
	We claim that $(e_3,e_4)$ is (I)-safe. 
	Indeed, let $E'= E-\braced{e_3,e_4}\cup \braced{e,e}$,
	for $e = \half (e_3+e_4)$.
	Since $f_3>1$, Lemma~\ref{lem: at most one mod-unsafe pairs}(c)
	implies that condition~(I.2) holds.
	In $E'$ we have at least two droplets with concentration $e$.
	If $e\neq e_1$, then $E'$ has
	two droplets with concentration $e_1\neq e$;
	otherwise, if $e = e_1$, then $E'$ has
	two droplets with concentration $e_2\neq e$.
	Thus condition~(I.1) holds for $E'$.
\end{proof}

\begin{lemma}
	\label{lem: m => 5, preserves invariant}
	Assume that $\nofconcentrations{E}\ge 5$.
	If $E$ satisfies Invariant~(I) then $E$ has an (I)-safe pair.
\end{lemma}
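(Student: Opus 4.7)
The plan is a multiplicity-driven case analysis that reduces the claim to Lemmas~\ref{lem: invariant, two same parity, fi >= 3}--\ref{lem: invariant, m >= 4, three same parity, fi >= 2, fj = fk = 1} in every case but one. Reorder $E$ so that $f_1 \ge f_2 \ge \cdots \ge f_m$; condition~(I.1) then gives $f_1, f_2 \ge 2$.

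I would first handle $f_1 \ge 3$. If $e_1$ shares its parity with some other $e_j$, Lemma~\ref{lem: invariant, two same parity, fi >= 3} immediately yields the (I)-safe pair $(e_1, e_j)$. Otherwise $e_2, \ldots, e_m$ all carry the parity opposite to $e_1$'s, giving at least four same-parity concentrations since $m \ge 5$. Depending on whether $f_3 \ge 2$, or $f_3 = 1$ (which then forces $f_3 = \cdots = f_m = 1$), Lemma~\ref{lem: invariant, three same parity, fi, fj >= 2} or Lemma~\ref{lem: invariant, m >= 4, three same parity, fi >= 2, fj = fk = 1}, respectively, applies to the triple $e_2, e_3, e_4$.

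When $f_1 = 2$, every multiplicity lies in $\braced{1,2}$, and $m \ge 5$ forces one parity class $S_\pi$ to contain at least three concentrations. If $S_\pi$ contains two or more non-singletons, Lemma~\ref{lem: invariant, three same parity, fi, fj >= 2} applies; if exactly one, Lemma~\ref{lem: invariant, m >= 4, three same parity, fi >= 2, fj = fk = 1} applies. If $S_\pi$ contains none, both non-singletons $e_1, e_2$ lie in $S_{\barpi}$, and whenever $|S_{\barpi}| \ge 3$ the two preceding subcases apply with $\pi$ and $\barpi$ interchanged.

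The main obstacle is the residual sub-sub-case $|S_{\barpi}| = 2$, in which $S_{\barpi} = \braced{e_1, e_2}$ are the only non-singletons and the $m - 2 \ge 3$ concentrations of $S_\pi$ are all singletons (so $n = m + 2$). Here none of the three preservation lemmas applies, so I would argue directly: mixing any same-parity singleton pair $(e_i, e_j)$ from $S_\pi$ yields $e = \half(e_i + e_j)$ with multiplicity two while leaving $e_1, e_2$ as distinct non-singletons, so~(I.1) is preserved. It remains to choose such a pair that is also $\barp$-safe so that~(I.2) survives. By Lemma~\ref{lem: at most one mod-unsafe pairs}(a) the number of $\barp$-unsafe pairs in $E$ is at most $s$, the number of distinct odd prime factors of $n = m + 2$, whereas $S_\pi$ supplies $\binom{m-2}{2}$ candidate pairs. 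Since $s \le \log_3 n$ and $\binom{m-2}{2} \ge 3$ for $m \ge 5$, a routine estimate gives $\binom{m-2}{2} > s$, so a $\barp$-safe singleton pair exists and the argument is complete.
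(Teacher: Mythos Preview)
Your proof is correct. The overall architecture matches the paper's---a split on whether $f_1\ge 3$ or $f_1=2$, with the first case dispatched by Lemmas~\ref{lem: invariant, two same parity, fi >= 3}--\ref{lem: invariant, m >= 4, three same parity, fi >= 2, fj = fk = 1} exactly as in the paper---but the organization under $f_1=2$ and the handling of the residual sub-case differ. The paper fixes $e_1$ even and branches on the value of $f_3$ and then on the parity of $e_2$; you instead branch on how many non-singletons lie in the majority parity class $S_\pi$, which is a cleaner, parity-symmetric decomposition that arrives at the same residual situation ($S_{\barpi}=\{e_1,e_2\}$ with $f_1=f_2=2$, all of $S_\pi$ singletons, $n=m+2$). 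For that residual case the paper invokes Lemma~\ref{lem: at most one mod-unsafe pairs}(b), using that $|S_\pi|=n-4$ exceeds the bound $n-5$ on droplets involved in $\barp$-unsafe pairs to find an $e_i\in S_\pi$ all of whose pairs are safe; you instead count pairs directly via Lemma~\ref{lem: at most one mod-unsafe pairs}(a), comparing $\binom{m-2}{2}$ against $s\le\log_3(m+2)$. Both counts work; yours avoids the somewhat ad hoc inequality $2s\le n-5$ underlying part~(b), at the cost of needing the easy check $\binom{m-2}{2}>\log_3(m+2)$ for $m\ge 5$.
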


\begin{proof}
Let $E = \braced{f_1:e_1, f_2:e_2, ... , f_m:e_m}$, for $m\ge 5$.
	By symmetry and reordering, respectively, we can assume that $e_1$ is even and that
	$f_i\ge f_{i+1}$, for $i=1,2,....,m-1$. By condition~(I.1), we have $f_1, f_2\ge 2$.
	We consider several cases.
	
	\smallskip\noindent
	\mycase{1} $f_1\ge 3$. The same	argument as in Case~1 in the proof of
			Lemma~\ref{lem: m = 4, preserves invariant} applies here.
	
	\smallskip\noindent
	\mycase{2} $f_1 = 2$. Then $f_2 = 2$ as well. We have some sub-cases.
	
	\vspace{-0.1in}
	
	\begin{description}
		\item{\mycase{2.1}} $f_3 = 2$. In this case, choose three concentrations
			among $e_1,e_2,e_3,e_4,e_5$ with the same parity. This will give us
 			three concentrations $e_i,e_j,e_k$ with the same parity and with
			$f_i = 2$. If $f_j=2$ or $f_k = 2$,
			we obtain an (I)-safe pair from Lemma~\ref{lem: invariant, three same parity, fi, fj >= 2},
			otherwise we obtain an (I)-safe pair
			from Lemma~\ref{lem: invariant, m >= 4, three same parity, fi >= 2, fj = fk = 1}.
			
		\item{\mycase{2.2}}	$f_3=...=f_m = 1$ and $e_2$ is odd. Among $e_3,e_4,e_5$ there
			are either two even or two odd concentrations. By symmetry, we can assume
			$e_3,e_4$ are even.	This gives us three concentrations $e_i,e_j,e_k$ that
			satisfy the assumptions of 
			Lemma~\ref{lem: invariant, m >= 4, three same parity, fi >= 2, fj = fk = 1},
			so we obtain an (I)-safe pair by applying this lemma.
		
		 \item{\mycase{2.3}} $f_3=...=f_m = 1$ and $e_2$ is even.
		 	If any concentration among $e_3,e_4,...,e_m$ is even, then we obtain an (I)-safe pair
		 	from Lemma~$\ref{lem: invariant, three same parity, fi, fj >= 2}$.
			Otherwise, $e_3, e_4, \ldots, e_m$ are all odd. This set has $m-2 = n-4$ droplets.
			Thus, by Lemma~\ref{lem: at most one mod-unsafe pairs}(b), 
			there is at least one concentration $e_i$, for $i\in\braced{3,4,...,m}$, 
			such that any pair $(e_i,e_j)$,
			for $j\in\braced{3,4,...,m}-\braced{i}$, is $\barp$-safe.
			Let $E' = E - \braced{e_i,e_j}\cup \braced{e,e}$ be obtained from mixing $e_i$ and $e_j$,
			for $e = \half(e_i+e_j)$.
			By the choice of $e_i$, $E'$ satisfies (I.2).
			$E'$ satisfies (I.1) because it has at least two droplets with concentration $e_1$ and 
			at least two droplets with concentration $e_2$.
	\end{description}	
\end{proof}


\myparagraph{Completing the proof}
We are now ready to complete the proof that Condition~{\MixCond} in 
Theorem~\ref{thm: miscibility characterization}(a) is sufficient 
for perfect mixability when $n\geq 7$. The argument follows
the outline given at the beginning of this section and depicted in
Figure~\ref{fig: proof outline n >= 7}.

Assume that $C$ satisfies Condition~{\MixCond}. 
If $C$ is already perfectly mixed then we are done.
Otherwise, as described earlier in the proof, 
we first replace $C$ by configuration $\intermC \ingroundset\integers$
such that (i) $\intermmu = \average(\intermC)\in\integers$,
all values in $\intermC$ are even, and $\intermC$ is $\barp$-incongruent,
and (ii) $C$ is perfectly mixable with precision at most $1$ if and only if 
$\intermC$ is perfectly mixable with precision $0$. 

Then we show that $\intermC$ has a perfect-mixing sequence (with precision $0$), converting
$\intermC$ into its perfect mixture $\braced{n:\intermmu}$.
To this end, we first perform some mixing operations (at most two)
that convert $\intermC$ into a configuration $E$
that either satisfies Invariant~(I) or is near-final.
If this $E$ is near-final, we can complete the mixing
sequence using Lemma~\ref{lem: mixability for near-final}.
If this $E$ is not near-final, then condition~(I.2) implies that $E$ satisfies Condition~{\MixCond}
which, in turn, by Lemma~\ref{lem: mixability m = 2}, implies that $\nofconcentrations{E}\ge 3$.
Therefore, depending on the value of $\nofconcentrations{E}$, we can apply
one of Lemmas~\ref{lem: m = 3, preserves invariant}, \ref{lem: m = 4, preserves invariant},
or \ref{lem: m => 5, preserves invariant}, 
to show that $E$ has a pair of concentrations that is either (I)-safe or near-final.
We can thus apply the above argument repeatedly to $E$.
As in Section~\ref{subsec: near-final configurations}, 
each mixing decreases the value of $\potential(E) = \sum_{e\in E}(e - \intermmu)^2$.
Thus after a finite number of steps
we eventually convert $E$ into a near-final configuration
(as in the cases for Lemma~\ref{lem: m = 3, preserves invariant}), 
that has a mixing sequence by Lemma~\ref{lem: mixability for near-final}.


\subsection{Proof for $n=5$}
\label{subsec: proof for n equal 5}



In this sub-section we prove that Condition~$\MixCond$ in 
Theorem~\ref{thm: miscibility characterization}(a) is sufficient 
for perfect mixability when $n=5$.
The overall argument is similar to the case $n \geq 7$ we considered in 
Section~\ref{subsec: proof for n greater-equal than 7}
(and depicted in Figure~\ref{fig: proof outline n >= 7}),
although this time we need a slightly different invariant. 
This is because in the case when $n=5$ there are configurations that
satisfy Invariant~(I) but do not contain any pair of concentrations
whose mixing preserves Invariant~(I).
For example, $E = \braced{0,0,4,4,7}$ with $\average(E)=3$
satisfies Invariant~(I). The only pair of different concentrations with
the same parity is $(0,4)$; however, after mixing these concentrations,
the new configuration will violate condition~(I.1).

Let $A\ingroundset\integers$ be a configuration with $n = \nofdroplets{A} = 5$ and
 $\average(A)\in\integers$.
We say that $A$ is \emph{blocking} if $A=\braced{3:a_1,a_2,a_3}$
where $a_1\neq\half(a_2+a_3)$ and $a_1$ has parity different than $a_2,a_3$.
Otherwise we say that $A$ is \emph{non-blocking}.
For example, $A = \braced{0,0,0,3,7}$, with $\average(A)= 2$, is blocking.
The intuition is that this $A$ has only one pair of same-parity different
concentrations, namely $(3,7)$, but this pair is not $5$-safe --
mixing $3$ and $7$ produces configuration
$A' = \braced{0,0,0,5,5}$ that is $5$-congruent (in fact, it also violates Condition~{\MixCond}).

So, assume that we are given a configuration $C$ with $n = \nofdroplets{C} = 5$ and
$C \cup \braced{\mu} \ingroundset \integers$, that
satisfies Condition~{\MixCond}. 
As in Section~\ref{subsec: proof for n greater-equal than 7},
if $C$ is already perfectly mixed then we are done.
Otherwise we start by converting $C$ into a configuration
 $\intermC\ingroundset\integers$, with $\intermmu = \average(\intermC)\in\integers$,
such that (i) all concentrations in $\intermC$ are even,
(ii) $\intermC$ is $5$-incongruent, and
(iii) $C$ is perfectly mixable with precision at most $1$ if and only if $\intermC$
is perfectly mixable with precision $0$.

We then simply take $E = \intermC$ (unlike in Section~\ref{subsec: proof for n greater-equal than 7}, we 
don't need to modify $\intermC$). It thus remains to show that 
$E$ is perfectly mixable with precision $0$. 
In order to do so, we will have $E$ maintain the following Invariant~(I'):
\begin{description}
	\item{(I.0)} $E\ingroundset \integers$ and $\average(E) = \intermmu$,
	\item{(I.1')} $E$ is non-blocking, and
	\item{(I.2)} $E$ is $5$-incongruent.
\end{description}
By the properties of $\intermC$, the initial set $E$ satisfies conditions (I.0) and (I.2) and, since
all concentrations in $E$ are even, it also satisfies (I.1'). Thus
$E$ satisfies Invariant~(I') and the rest of the proof is devoted to constructing a
sequence of mixing operations that preserve Invariant~(I') until $E$ becomes near-final,
which can be mixed perfectly using Lemma~$\ref{lem: mixability for near-final}$.

At this point, we observe that, although in the previous section we considered the case $n \ge 7$,
the claims in Lemma~$\ref{lem: at most one mod-unsafe pairs}$(a) and~(c)
hold also for $n=5$ (with $\barp = \braced{5}$), and we will be using them in the proof.
(Lemma~$\ref{lem: at most one mod-unsafe pairs}$(b) does not apply to $n=5$, however.)
We will also frequently use Observation~$\ref{obs: A mod-nouniform properties}$ (with $\barp = \braced{5}$)
that follows directly from Lemma~\ref{lem: mixability m = 2}.


\myparagraph{Preserving Invariant~(I')}
Assume that $E$ satisfies Invariant~(I'). We say that a pair of distinct concentrations in $E$ is \emph{(I')-safe}
if the configuration obtained from $E$ by mixing these concentrations satisfies Invariant~(I').
(This is an analogue of the notion of (I)-safe pairs, introduced in Section~\ref{subsec: proof for n greater-equal than 7}.)
We now show that each configuration $E$ that satisfies Invariant~(I') has a pair
of concentrations that is either (I')-safe or near-final.
As we will always choose a pair of concentrations with the same parity for mixing,
condition~(I.0) will be trivially preserved, so in the proofs below,
to show that a pair is (I')-safe, we will focus on
explaining why the other two conditions are preserved.

\begin{lemma}
	\label{lem: m = 3, preserves invariant, n = 5}
	Assume that $\nofconcentrations{E}=3$.
	If $E$ satisfies Invariant~(I') then $E$ has a pair of concentrations
	that is either (I')-safe or near-final.
\end{lemma}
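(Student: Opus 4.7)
The plan is to case-analyze the multiplicity pattern of $E$. Since $n=5$ and $\nofconcentrations{E}=3$, the partition of $5$ into three positive parts is either $(3,1,1)$ or $(2,2,1)$.

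Before entering the cases, I would record one preliminary ``collapse-exclusion'' fact that is specific to $n=5$: if a mixing step produces a configuration $E'$ with $\nofdroplets{E'}=5$, $\nofconcentrations{E'}=2$, that is still $5$-incongruent, then by Observation~\ref{obs: A mod-nouniform properties}(a) it satisfies Condition~{\MixCond}, and Lemma~\ref{lem: mixability m = 2} (with odd part $\sigma=5$) would force both multiplicities to be divisible by~$5$, which is impossible. So whenever we perform a $5$-safe mixing, the outcome cannot have exactly two distinct concentrations (it may only have three, four, or one — the latter being the perfect mixture, hence trivially near-final).

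For pattern $(3,1,1)$, write $E=\braced{3:a,b,c}$. Since $E$ is non-blocking, either $a=\half(b+c)$, or $a$ shares parity with at least one of $b,c$. In the first sub-case, $b,c$ have the same parity and $\intermmu=(3a+b+c)/5=a$, so mixing $(b,c)$ yields $\braced{5:a}=\braced{5:\intermmu}$, a trivially near-final configuration. In the second sub-case, say $a$ shares parity with $b$; mix $(a,b)$. Since $a$ is a non-singleton, Lemma~\ref{lem: at most one mod-unsafe pairs}(c) makes this pair $5$-safe, so condition~(I.2) is preserved. The collapse-exclusion fact above rules out $\nofconcentrations{E'}=2$, so the resulting pattern is $(2,2,1)$, which is automatically non-blocking, and (I.1') is preserved; thus $(a,b)$ is (I')-safe.

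For pattern $(2,2,1)$, write $E=\braced{2:a,2:b,c}$; this is automatically non-blocking because blocking requires a triple. By pigeonhole, some two of $a,b,c$ share parity. If $a,b$ share parity, mix $(a,b)$: Lemma~\ref{lem: at most one mod-unsafe pairs}(c) applies since $a$ is a non-singleton, preserving (I.2). Setting $e=\half(a+b)$, the outcome is $\braced{a,b,2:e,c}$, which either has pattern $(2,1,1,1)$ (four distinct, non-blocking), or, if $e=c$, collapses to $\braced{a,b,3:c\,}$ of pattern $(3,1,1)$; but then $c=\half(a+b)$ by construction, so this triple configuration is non-blocking. If instead the same-parity pair is a singleton-doubleton (say $a,c$), mix $(a,c)$: again $5$-safe by Lemma~\ref{lem: at most one mod-unsafe pairs}(c). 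The collapse-exclusion fact rules out the degeneration $\half(a+c)=b$, so the result has pattern $(1,2,2)$, non-blocking, and (I.1') is preserved. In each sub-case we obtain either an (I')-safe pair or a near-final one.

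The main obstacle is bookkeeping the degenerate outcomes where a mixing reduces $\nofconcentrations{}$ and creates a new triple whose parity could, in principle, be ``blocking.'' The collapse-exclusion fact derived from Lemma~\ref{lem: mixability m = 2} at $n=5$ is the key tool that eliminates the dangerous $\nofconcentrations{E'}=2$ outcomes, while the identity $e=\half(a+b)$ in the remaining $(3,1,1)$ outcome automatically certifies non-blockingness — the two observations together make every case go through cleanly.
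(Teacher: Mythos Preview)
Your proof is correct and follows essentially the same case analysis as the paper: split by multiplicity pattern $(3,1,1)$ versus $(2,2,1)$, then by parity, invoking Lemma~\ref{lem: at most one mod-unsafe pairs}(c) for $5$-safety and Lemma~\ref{lem: mixability m = 2} to rule out the $\nofconcentrations{E'}=2$ collapse. Two small organizational differences are worth noting. First, you isolate the ``collapse-exclusion'' fact upfront, whereas the paper invokes it inline each time. Second, in the $(3,1,1)$ pattern you explicitly treat the sub-case $a=\tfrac12(b+c)$, observing that mixing $(b,c)$ yields the perfect mixture $\braced{5:\intermmu}$ and is therefore a near-final pair; the paper's Case~2 asserts that (I.1') forces one of the two singletons to share parity with $e_1$, which tacitly skips exactly this sub-case (e.g., $E=\braced{3{:}0,\,-3,\,3}$ is non-blocking and $5$-incongruent with both singletons odd). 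So your treatment of the $(3,1,1)$ pattern is in fact slightly more complete than the paper's, while the $(2,2,1)$ analysis matches the paper's Cases~1.1 and~1.2 essentially verbatim.
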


\begin{proof}
	Let $E = \braced{f_1:e_1, f_2:e_2, f_3:e_3}$.
	By symmetry and reordering, respectively, we can assume that $e_1$ is even
	and that $f_1\geq f_2 \geq f_3$.
	We analyze two cases based on $f_1$'s value:
	
	\smallskip\noindent
	\mycase{1} $f_1 = 2$. Thus, $f_2 = 2$ and $f_3 = 1$. We consider two sub-cases.

\vspace{-0.05in}
	
	\begin{description}
	
	\item{\mycase{1.1}} $e_2$ is even.
	We will mix $e_1$ and $e_2$, producing $E'=\braced{e_1,e_2,e_3,2:e}$,
	for $e=\half(e_1+e_2)$. Obviously, $e\notin\braced{e_1,e_2}$. 
	If $e=e_3$ then $E'$ is near-final
	(using the partition of $E'$ into $\braced{e_1,e_2},\braced{e_3},\braced{e_3},\braced{e_3}$)
	and thus $(e_1,e_2)$ is a near-final pair.
	So, assume that $e\neq e_3$.	
    
	Since $f_1 = 2$, by Lemma~$\ref{lem: at most one mod-unsafe pairs}$(c), $E'$
	satisfies condition~(I.2). Further, as $\nofconcentrations{E'} = 4$,
	$E'$ also satisfies condition~(I.1').
	Therefore, $(e_1,e_2)$ is indeed (I')-safe.
	
	\item{\mycase{1.2}} $e_2$ is odd.
	Without loss of generality we can assume that $e_3$ is even (by the odd-even symmetry
	between $e_1$ and $e_2$.)
	We mix $e_1$ and $e_3$, and let $E'$ be the resulting configuration.
	Since $f_1 = 2$, by Lemma~$\ref{lem: at most one mod-unsafe pairs}$(c), 
	pair $(e_1,e_3)$ is $5$-safe, so $E'$ satisfies condition~(I.2).
	This, together with Lemma~\ref{lem: mixability m = 2} (as $n = 5$ is prime),
	implies that $\nofconcentrations{E'} > 2$, which means that
	$\half(e_1+e_3)\neq e_2$,
	implying in turn that $E'$ has two non-singletons.
	Thereby, $E'$ satisfies~(I.1') and thus we can conclude that pair $(e_1,e_3)$ is (I')-safe.
	
	\end{description}
	
	\smallskip\noindent
	\mycase{2} $f_1 = 3$. Thus, $f_2 = f_3 = 1$.
	Since $E$ satisfies~(I.1')
	we have that at least one of $e_2$ and $e_3$ is even. 
	By symmetry, we can assume that $e_2$ is even. We mix $e_1$ and $e_2$,
	and let $E'$ be the new configuration.
	As $f_1 = 3$, by Lemma~$\ref{lem: at most one mod-unsafe pairs}$(c), $E'$ satisfies~(I.2).
	Further, $\half(e_1+e_2)\neq e_3$, because otherwise we would have $\nofconcentrations{E'} = 2$,
	contradicting Lemma~\ref{lem: mixability m = 2}.
	So $E'$ contains two non-singletons, and thus it satisfies condition~(I.1').
	Therefore, pair $(e_1,e_2)$ is (I')-safe.	
\end{proof}


\begin{lemma}
	\label{lem: m = 4, preserves invariant, n = 5}
	Assume that $\nofconcentrations{E}=4$.
	If $E$ satisfies Invariant~(I') then $E$ has an (I')-safe pair.
\end{lemma}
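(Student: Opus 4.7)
The plan is to parametrize $E = \braced{2:e_1, e_2, e_3, e_4}$---which, after reordering, is forced by $\nofconcentrations{E}=4$ together with $n=5$---and, invoking the even/odd symmetry of Invariant~(I'), assume $e_1$ is even. I will then split into cases by how many of $e_2, e_3, e_4$ share parity with $e_1$. Two uniform observations simplify each case: since $f_1 = 2$, Lemma~\ref{lem: at most one mod-unsafe pairs}(c) makes every same-parity pair of the form $(e_1, e_j)$ automatically $5$-safe, so condition~(I.2) is preserved; and for any $5$-safe mixing the result $E'$ cannot have $\nofconcentrations{E'} = 2$, since $5$-incongruence of $E'$ together with Lemma~\ref{lem: mixability m = 2} would force $n=5$ to be composite. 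Consequently the only way to fail (I.1') is by producing the blocking profile $(3,1,1)$ at $\nofconcentrations{E'} = 3$.

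First I handle the case in which none of $e_2, e_3, e_4$ shares $e_1$'s parity, so these three singletons are all odd. Among the three same-parity pairs inside $\braced{e_2, e_3, e_4}$, Lemma~\ref{lem: at most one mod-unsafe pairs}(a) leaves at least two that are $5$-safe, and mixing any of them yields $\braced{2:e_1, e_k, 2:e}$ with multiplicity profile $(2,1,2)$, which is not blocking. Next, if exactly one singleton---say $e_2$---shares $e_1$'s parity, I mix $(e_1, e_2)$: the outcome $E' = \braced{e_1, e_3, e_4, 2:e}$ with $e = \half(e_1+e_2)$ cannot have $e = e_1$, and if $e$ collides with $e_3$ or $e_4$ to create a triple, that triple has parity opposite $e_1$'s and matches the parity of the singleton it hit, so the parity clause in the definition of blocking fails.

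The main obstacle is the remaining case in which at least two singletons, say $e_2$ and $e_3$, share $e_1$'s parity. If the fourth singleton $e_4$ also shares this parity, then every resulting $E'$ has all of its concentrations of the same parity, so blocking is impossible. If instead $e_4$ has the opposite parity, then mixing $(e_1, e_2)$ yields the blocking configuration $\braced{e_1, e_3, 3:e_4}$ only when $\half(e_1+e_2) = e_4$, and similarly mixing $(e_1, e_3)$ yields a blocking configuration only when $\half(e_1+e_3) = e_4$. Both identities together would force $e_2 = e_3$, a contradiction, so at least one of the two $(e_1, \cdot)$ mixings gives a non-blocking $E'$, producing the required (I')-safe pair and completing the proof.
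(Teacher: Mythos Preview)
Your argument follows the same case split as the paper's (by how many of $e_2,e_3,e_4$ share $e_1$'s parity), uses the same two tools (Lemma~\ref{lem: at most one mod-unsafe pairs}(a),(c) together with the $\nofconcentrations{E'}\neq 2$ consequence of Lemma~\ref{lem: mixability m = 2}), and arrives at the same conclusions. One sentence, however, is literally false: in the sub-case where all four concentrations are even, it is \emph{not} true that ``every resulting $E'$ has all of its concentrations of the same parity''---the new value $e=\half(e_1+e_j)$ can certainly be odd (e.g.\ $e_1=0$, $e_j=2$). The conclusion still holds: if $e$ is odd then $e\notin\{e_1,e_2,e_3,e_4\}$, so $\nofconcentrations{E'}=4$ and $E'$ is automatically non-blocking; if $e$ is even and collides with some $e_k$, then the triple and both remaining singletons are all (original, hence even) concentrations, so the parity clause of blocking fails. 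The paper avoids this sub-case altogether by treating ``at least two of $e_2,e_3,e_4$ even'' in one stroke, choosing $j\in\{2,3\}$ so that $\half(e_1+e_j)\neq e_4$ and then checking only whether $e$ hits the even $e_3$.
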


\begin{proof}
	Let $E = \braced{f_1:e_1, f_2:e_2, f_3:e_3, f_4:e_4}$.
	By symmetry and reordering, respectively, we can assume that $e_1$ is even
	and that $f_1 = 2$ and $f_2 = f_3 = f_4 = 1$. We analyze three cases based
	on the parities of $e_2,e_3,e_4$.
	
	\smallskip\noindent
	\mycase{1} At least two of $e_2,e_3,e_4$ are even.
	Assume without loss of generality that $e_2,e_3$ are even.
	Let $e = \half (e_1+e_2)$. Further, assume that $e \neq e_4$
	(for otherwise we can use $e_3$ instead of $e_2$).
	Let $E'  = E - \braced{e_1,e_2} \cup \braced{e,e}$ be obtained by mixing $e_1$ and $e_2$.
	As $f_1 = 2$, $E'$ satisfies~(I.2) by Lemma~$\ref{lem: at most one mod-unsafe pairs}$(c).
	If $e\neq e_3$ then $\nofconcentrations{E'} = 4$, so $E'$ satisfies~(I.1').
	If $e = e_3$ then $E' = \braced{e_1,3:e_3,e_4}$ and $e_1,e_3$ are even, so 
	$E'$ satisfies~(I.1') as well. Thus, pair $(e_1,e_2)$ is (I')-safe.
	
	\smallskip\noindent
	\mycase{2} Exactly one of $e_2,e_3,e_4$ is even.
	Assume without loss of generality that $e_2$ is even. Let $e=\half(e_1+e_2)$, and
	let $E' = E - \braced{e_1,e_2} \cup \braced{e,e}$ be obtained by mixing $e_1$ and $e_2$.
	$E'$ satisfies~(I.2) by Lemma~$\ref{lem: at most one mod-unsafe pairs}$(c).
	If $e\in E$ then $e\in\braced{e_3,e_4}$, where both $e_3,e_4$ are odd,
	and one of $e_3,e_4$ is a non-singleton in $E'$.
	Otherwise, $e\notin E$ and thus $\nofconcentrations{E'} = 4$.
	Either way $E'$ satisfies~(I.1') and thus pair $(e_1,e_2)$ is (I')-safe.

	\smallskip\noindent
	\mycase{3} $e_2,e_3,e_4$ are all odd.
	By Lemma~$\ref{lem: at most one mod-unsafe pairs}$(a)
	there is at most one $5$-unsafe pair. Assume that $(e_2,e_3)$ is $5$-safe
	(otherwise use $e_4$ instead of $e_3$).
	Thus mixing $e_2$ and $e_3$ produces $E' = E - \braced{e_2,e_3} \cup \braced{e,e}$,
	for $e = \half(e_2+e_3)$, that satisfies condition~(I.2).
	Moreover, $e\in E$ would imply $\nofconcentrations{E'} = 2$
	(contradicting Lemma~$\ref{lem: mixability m = 2}$),
	so we must have $e \notin E$, and thus both $e_1$ and $e$ 
	are non-singletons in $E'$.
	Therefore, $E'$ preserves~(I.1') and pair $(e_2,e_3)$ is (I')-safe.
\end{proof}


\begin{lemma}
	\label{lem: m = 5, preserves invariant, n = 5}
	Assume that $\nofconcentrations{E}=5$.
	If $E$ satisfies Invariant~(I') then $E$ has an (I')-safe pair.
\end{lemma}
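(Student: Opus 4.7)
Since $n = 5 = \nofconcentrations{E}$, every concentration in $E$ appears exactly once, so $E = \braced{e_1,\ldots,e_5}$ is a set of five distinct integers. The plan is to restrict attention to same-parity pairs drawn from the majority parity class and to bound separately the pairs whose mixing violates Condition~(I.2) or creates a blocking configuration; both counts will turn out to be at most $1$, while the candidate set has at least $3$ elements.

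By pigeonhole at least three of $e_1,\ldots,e_5$ share the same parity. A uniform shift by $1$ preserves integrality, $5$-incongruence, and blocking status (the latter is defined by relative parities, which a uniform shift preserves), so by Observation~\ref{obs: offsetting does not affect reachability} we may assume WLOG that the majority is even. Let $S$ be the set of even elements with $|S|\in\braced{3,4,5}$, giving $\binom{|S|}{2}\ge 3$ candidate pairs drawn from $S$. For any such $(e_i,e_j)$, mixing produces $E' = E-\braced{e_i,e_j}\cup\braced{e,e}$ with $e=\half(e_i+e_j)\in\integers$, so Condition~(I.0) is automatic, and by Lemma~\ref{lem: at most one mod-unsafe pairs}(a) at most one candidate pair can violate Condition~(I.2).

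The key step is to bound the number of blocking candidates. Since $E$ is a set of singletons, a blocking $E'=\braced{3:a_1,a_2,a_3}$ arises only when $e=e_k$ for some $e_k\in E\setminus\braced{e_i,e_j}$, yielding $E'=\braced{3:e_k,e_l,e_m}$ with $\braced{e_l,e_m}=E\setminus\braced{e_i,e_j,e_k}$; blocking additionally requires $e_k$ to have parity opposite to both $e_l$ and $e_m$. I plan a short case analysis on $|S|$. When $|S|=5$, every remaining element is even, so blocking cannot occur. When $|S|=4$ with unique odd element $e^\star$, blocking forces $e_k=e^\star$, i.e.,\ $e_i+e_j=2e^\star$; I will argue that two disjoint even pairs with this property must partition $S$, and then for either pair the other pair also sums to $2e^\star$, which forces $e^\star=\half(e_l+e_m)$ and contradicts blocking, so at most one blocking pair survives. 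When $|S|=3$, having $e_k$ of opposite parity to both remaining elements forces $e_k\in S$, whence $S$ must form an arithmetic progression with $e_k$ as midpoint, uniquely determining the endpoint pair $(e_i,e_j)$.

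In every case, at most one $5$-unsafe plus at most one blocking candidate leaves at least $\binom{|S|}{2}-2\ge 1$ (I')-safe pair. The main obstacle is the tight case $|S|=3$, where the three candidates must absorb both bad counts; here the subtle parity check is that if $e=\half(e_i+e_j)$ happens to land on an odd $e_4$ or $e_5$, the resulting triple has $a_1$ sharing parity with one of $a_2,a_3$ and is therefore \emph{not} blocking, so blocking is genuinely confined to the single ``AP'' candidate, which is exactly what makes the counting argument tight but successful.
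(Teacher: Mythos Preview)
Your proof is correct. Both your argument and the paper's pivot on the same majority-parity set $S$ and the same case split on $|S|\in\{3,4,5\}$, but the organization differs. The paper proceeds constructively: it first fixes an element $e_1\in S$ not involved in any $5$-unsafe pair (such $e_1$ exists since at most one pair is $5$-unsafe), and then in each case exhibits a specific partner for $e_1$ chosen so that the resulting $E'$ either has $\nofconcentrations{E'}=4$ or has its triple-concentration of the ``wrong'' parity to be blocking. Your approach is instead a counting argument: among the $\binom{|S|}{2}\ge 3$ same-parity candidate pairs, at most one is $5$-unsafe and at most one produces a blocking $E'$, leaving at least one (I')-safe pair. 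Your bound on blocking pairs---in particular the observation for $|S|=4$ that two disjoint pairs summing to $2e^\star$ would force $e^\star=\tfrac12(e_l+e_m)$ and hence defeat the blocking condition, and for $|S|=3$ that the three arithmetic-progression relations on $S$ are mutually exclusive---is a clean pigeonhole-style argument that the paper does not make explicit. The trade-off: the paper's proof is constructive (it names the pair to mix), which fits better with the algorithmic aims in Section~\ref{sec: polynomial proof}, whereas your counting proof is slightly more uniform across the three cases but non-constructive.
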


\begin{proof}
Let $E = \braced{e_1, e_2,e_3,e_4,e_5}$.
By symmetry and reordering we can assume that $e_1,e_2,e_3$ have the same parity, say even.
Additionally, by Lemma~$\ref{lem: at most one mod-unsafe pairs}$(a), there is at most one
$5$-unsafe pair in $E$, so we can assume that it does not involve $e_1$.
In other words, all pairs involving $e_1$ and any other even concentration are $5$-safe.  
We now have three cases, given below. In each case we mix $e_1$ with some other even concentration,
so conditions~(I.0) and~(I.2) will be satisfied, and we only need to ensure that
(I.1') is satisfied as well in order to show an (I')-safe pair in $E$.

\smallskip\noindent
\mycase{1} $e_4$ and $e_5$ are even.
Choose $i\neq 1$ for which $|e_1 - e_i|$ is minimized, and let
$e = \half (e_1 + e_i)$. Then $e\notin E$.
Mixing $e_1$ and $e_i$ gives us configuration $E' = E - \braced{e_1,e_i}\cup \braced{e,e}$
with  $\nofconcentrations{E'}=4$, so $E'$ satisfies~(I.1').

\smallskip\noindent
\mycase{2} $e_4$ and $e_5$ have different parity. Say that $e_4$ is even and $e_5$ is odd.
Let $e =\half (e_1 + e_2)$ and, without loss of generality, 
assume that $e\neq e_5$ (otherwise, use $e_3$ instead of $e_2$).  
Mixing $e_1$ and $e_2$ gives us configuration $E' = E - \braced{e_1,e_2}\cup \braced{e,e}$.
If $e\notin E$, then $\nofconcentrations{E'}=4$.
Otherwise, $e\in\braced{e_3,e_4}$ and, as $e_3,e_4$ are both even, $E'$ is non-blocking.
Thus in both sub-cases $E'$ satisfies~(I.1').

\smallskip\noindent
\mycase{3} $e_4$ and $e_5$ are odd.
Without loss of generality, assume that $|e_1 - e_2|\le |e_1-e_3|$.
Then $e =\half (e_1 + e_2) \neq e_3$. Let $E'= E - \braced{e_1,e_2}\cup \braced{e,e}$ be obtained by mixing $e_1$ and $e_2$.
If $e\notin E$, then $\nofconcentrations{E'}=4$.
Otherwise, $e\in \braced{e_4,e_5}$ and, as $e_4,e_5$ are both odd, $E'$ is non-blocking. 
Thus in both sub-cases $E'$ satisfies~(I.1').   
\end{proof}


\myparagraph{Completing the proof}
We can now prove that Condition~{\MixCond} in 
Theorem~\ref{thm: miscibility characterization}(a) is sufficient 
for perfect mixability when $n=5$.
Assume that $C$ satisfies Condition~{\MixCond}. 
If $C$ is perfectly mixed, then we are done.
Otherwise, as described earlier in this section, 
we convert $C$ into configuration $\intermC\ingroundset\integers$
such that $C$ is perfectly mixable with precision at most $1$ if and only if
$\intermC$ is perfectly mixable with precision $0$.
This $\intermC$ is $5$-incongruent, all its concentrations are even, and
it satisfies $\intermmu = \average(\intermC)\in\integers$.
Thus, if we take $E  = \intermC$, this $E$ satisfies Invariant~(I').

If $E$ is near-final, then we use Lemma~\ref{lem: mixability for near-final}
to perfectly mix $E$.
If this $E$ is not near-final, then, depending on the value of $\nofconcentrations{E}$, 
we apply one of Lemmas~\ref{lem: m = 3, preserves invariant, n = 5}, 
\ref{lem: m = 4, preserves invariant, n = 5},
or \ref{lem: m = 5, preserves invariant, n = 5}, 
to show that $E$ has a pair of concentrations that is either (I')-safe or near-final.
As in Section~\ref{subsec: proof for n greater-equal than 7},
the value of $\potential(E)$ decreases at least by $1$ after each mixing operation.
So after a finite sequence of mixing operations involving (I')-safe pairs,
$E$ must become near-final (as in the cases for Lemma~\ref{lem: m = 3, preserves invariant, n = 5}), 
which we then perfectly mix using Lemma~\ref{lem: mixability for near-final}.


\subsection{Proof for $n=6$}
\label{subsec: proof for n equal 6}



We now sketch the proof that Condition~$\MixCond$ in 
Theorem~\ref{thm: miscibility characterization}(a) is sufficient 
for perfect mixability when $n=6$. This proof can be obtained by
modifying the proof for $n=5$ in Section~$\ref{subsec: proof for n equal 5}$.
This modification takes advantage of the fact that the proof for $n=5$
relies on $n=5$ having only one odd prime factor, which is also true for $n=6$.
As for $n=5$, Lemma~$\ref{lem: at most one mod-unsafe pairs}$(a) and~(c)
hold also for $n=6$ (with $\barp = \braced{3}$).

The overall structure of the argument is identical.
Let $C\cup \braced{\mu} \ingroundset \integers$ be the given configuration
satisfying Condition~{\MixCond}.
If $C$ is perfectly mixed, we are done.
Otherwise, construct $\intermC$ as in Section~\ref{subsec: proof for n greater-equal than 7}
and take $E = \intermC$. We then repeatedly mix (I')-safe pairs in $E$ until it becomes near-final,
which can be perfectly mixed using Lemma~\ref{lem: mixability for near-final}.

The definition of Invariant~(I') (and thus the corresponding concept of (I')-safe pairs) 
is adjusted in a natural way:
Condition~(I.2) now requires $E$ to be 3-incongruent.
Further, a configuration $A$ is now called \emph{blocking} if $A=\braced{4:a_1,a_2,a_3}$
where $a_1\neq\half(a_2+a_3)$ and $a_1$ has parity different than $a_2,a_3$.

Given a configuration $E$ that satisfies Invariant~(I'), we 
identify an (I')-safe pair in $E$ using appropriate analogues of 
Lemmas~\ref{lem: m = 3, preserves invariant, n = 5}, 
\ref{lem: m = 4, preserves invariant, n = 5},
and \ref{lem: m = 5, preserves invariant, n = 5} (with the last lemma extended to cover the
case when $\nofconcentrations{E} = 6$). Intuitively, these proofs are now in fact easier,
because there are more choices for (I')-safe pairs.

Here is one possible way to adapt these proofs, essentially by reducing the argument to the case
for $n=5$. Instead of $E$, consider the configuration $\tildeE = E -\braced{\tildee}$ 
where $\tildee\in E$ is a concentration with maximum multiplicity. Then
$\nofdroplets{\tildeE}=5$, and $E$ is non-blocking (for $n=6$) if and only if $\tildeE$ is
non-blocking (for $n=5$).
We now apply the analysis from Lemmas~\ref{lem: m = 3, preserves invariant, n = 5}, 
\ref{lem: m = 4, preserves invariant, n = 5},
and \ref{lem: m = 5, preserves invariant, n = 5} to identify an (I')-safe mixing pair for $\tildeE$ 
(except that we maintain $3$-incongruence instead of $5$-incongruence,
since the only odd prime factor of $6$ is $3$). We then use this pair for $E$.

There is one case where this suggested adaptation of the proof needs a modification,
namely Case~1.1 in Lemma~\ref{lem: m = 3, preserves invariant, n = 5}.
In this case we have $E=\braced{3:e_1, 2:e_2, e_3}$, so $\tildee = e_1$ and 
$\tildeE = \braced{2:e_1, 2:e_2, e_3}$.
We also have that $e_1$ and $e_2$ are even, and
the proof uses the fact that $\half(e_1+e_2) = e_3$ implies that $\tildeE$ is near-final,
and thus $(e_1,e_2)$ is a near-final pair.
However, this implication is not true for $E$.
Nevertheless, pair $(e_1,e_2)$ is (I')-safe for $n=6$ even if $\half(e_1+e_2)= e_3$,
by applying Lemma~\ref{lem: at most one mod-unsafe pairs}(c) to show that after mixing
condition~(I.2) holds, and using the fact that the new configuration has two non-singletons $e_1$ and $e_3$
to show that condition~(I.1') holds.


\section{Polynomial bound for the number of mixing operations}
\label{sec: polynomial proof}


Let $C\ingroundset\integers$ with $\average(C)\in\integers$ and 
$\nofdroplets{C}=n \ge 4$ be a configuration that satisfies Condition~$\MixCond$.
The existence of a perfect-mixing graph for $C$ was established in 
Section~\ref{sec: sufficiency of condition mc}. This graph, however, might be very
large -- it can be shown that if arbitrary droplets are mixed at each step then
it might take an exponential number of steps for the process to converge.
In this section we prove Theorem~\ref{thm: miscibility characterization}(b),
namely that $C$ can be perfectly mixed
with precision at most $1$ and in a polynomial number of mixing operations. 
The essence of the proof is
to show that in the construction in  Section~\ref{sec: sufficiency of condition mc}
it is possible to choose a mixing operation at each step so that the overall
number of steps will be polynomial in the input size.
We assume here that the reader is familiar with the results from  Section~\ref{sec: sufficiency of condition mc};
in fact, some of the lemmas or observations from that section may be used here occasionally
without an explicit reference.

It is sufficient to show that configuration $E$, constructed from $C$ 
in Section~\ref{subsec: proof for n greater-equal than 7}, 
is perfectly mixable with precision $0$ in a polynomial number of mixing operations.
(As described in Section~\ref{subsec: proof for n greater-equal than 7},
constructing such $E$ from $C$ requires only two mixing operations. Recall that this
construction also involves a linear mapping, but this mapping does not change the
constructed mixing graph.)
For this reason we will assume in this section that $E$ is the initial configuration.

Recall that $\potential(E)=\sum_{e\in E}(e-\hatmu)^2$, where
$\hatmu =  \average(E)\in\integers$. We use $\potential(E)$ to measure the progress of
the mixing process, ultimately showing that $\potential(E)$
can be decreased down to $0$ after a number of steps that is polynomial in  
the initial value of $\log\potential(E)$, and thus also in $s(E)$, the size of $E$.
(What we actually show is that after the mixing process we achieve $\potential(E)\leq 1$. 
Since $E\cup \braced{\hatmu} \subseteq \integers$, this implies that in fact $\potential(E)=0$.)

The general idea is to always mix two concentrations whose difference 
is large enough, so that after a polynomial number of mixing operations,
$\potential(E)$ decreases at least by a factor of $\half$. This is sufficient to 
establish a polynomial bound for the whole process.  Of course, the  pair of
concentrations that we mix must either preserve the corresponding invariant
or produce a near-final configuration.
When $n$ is a power of $2$, it is relatively
simple to identify good pairs to mix (see Section~\ref{sec: polynomial - near final}), 
because then we only need to ensure that the 
concentrations mixed at each step have the same parity. This also easily
extends to near-final configurations.
For arbitrary configurations $E$ (that satisfy the appropriate invariant), however,
identifying such good pairs to mix is more challenging (see 
Sections~\ref{sec: an exponential bound} and~\ref{sec: polynomial - arbitrary n}).

In the following sub-sections, we will present some auxiliary mixing sequences 
that will be later combined to construct a polynomial-length mixing sequence for $E$.
To streamline the arguments, we will focus on estimating the length of these sequences;
that these mixing sequences can actually be computed in polynomial time will be implicit 
in their construction. We will return to the time complexity analysis
in Section~\ref{sec: polynomial running time}.


\subsection{Auxiliary observations}
\label{sec: polynomial - auxiliary observations}

Let $A\ingroundset\integers$ with $\average(A)=\mu_A\in\integers$ and 
$\nofdroplets{A}=n_A$ be an arbitrary configuration (that is, a multiset of integers).
We define $\min(A)$ and $\max(A)$ as the lowest and highest concentrations in $A$, respectively.
We also define the diameter of $A$ as $\diameter(A) = \max(A) - \min(A)$.
Recall that the size of $A$ is defined as $s(A)=\sum_{a\in A} \log(|a|+2)$. 
We denote by $A_{even}$ and $A_{odd}$ two disjoint multisets containing
all even and odd concentrations in $A$, respectively.
We use notation $\pi\in\braced{even,odd}$ for a parity value, with
$\barpi\in\braced{even,odd}$ standing for the opposite parity (that is $\barpi\neq\pi$),
and we use these for subscripts, as in  $A_\pi$ and $A_{\barpi}$.   
In the contexts when $A$ is subjected to some mixing sequence,
we will use notation $\potential_{A,0}$ as the initial value of $\potential(A)$,
before any mixing has been performed on $A$.

The following observations show that repeatedly mixing two droplets with concentrations 
that are sufficiently far apart eventually decreases $\potential(A)$ at least by a factor of $\half$.


\begin{observation}\label{obs: far-appart mix}
Let $x,y\in A$ be two different concentrations with the same parity, and let $A'$ be
obtained from $A$ by mixing $x$ and $y$. Then:
\begin{description}
	
	\item{(a)} $\potential(A') < \potential(A)$.

	\item{(b)} If $\gamma\in\posintegers$ is a constant and
	$|x-y|\geq\diameter(A)/\gamma$, then
	$\potential(A') \;\leq\; \big(1 - \frac{1}{2\gamma^2n_A} \big)\potential(A)$.
	
\end{description}

\end{observation}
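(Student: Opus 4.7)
The plan is to derive a closed-form expression for the one-step decrease in $\potential$, and then compare the diameter of $A$ to $\potential(A)$ to turn an absolute gap into a multiplicative contraction.

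First I would write down the decrease explicitly. Let $z = \half(x+y)$, which is an integer because $x$ and $y$ share parity, and note that $A'$ differs from $A$ only by replacing the two slots holding $x,y$ with two slots holding $z$; in particular $\average(A')=\mu_A$. Setting $u = x-\mu_A$ and $v = y-\mu_A$, the change to $\potential$ is
\[
u^2 + v^2 - 2\bigl(\tfrac{u+v}{2}\bigr)^2 \;=\; \tfrac{1}{2}(u-v)^2 \;=\; \tfrac{1}{2}(x-y)^2,
\]
so $\potential(A) - \potential(A') = \half(x-y)^2$. Part~(a) is then immediate: since $x\neq y$ and they have the same parity, $|x-y|\ge 2$, so the decrease is strictly positive (in fact at least $2$).

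For part~(b), substituting the hypothesis $|x-y|\ge \diameter(A)/\gamma$ into this identity yields
\[
\potential(A) - \potential(A') \;\ge\; \frac{\diameter(A)^2}{2\gamma^2}.
\]
The remaining step is to bound $\diameter(A)^2$ from below in terms of $\potential(A)$. Because $\mu_A$ lies between $\min(A)$ and $\max(A)$, every $a\in A$ satisfies $|a-\mu_A|\le \diameter(A)$, and summing squares gives $\potential(A)\le n_A\cdot \diameter(A)^2$. Combining the two inequalities produces
\[
\potential(A) - \potential(A') \;\ge\; \frac{\potential(A)}{2\gamma^2\, n_A},
\]
which rearranges to the stated bound.

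I do not expect any real obstacle here; once the identity $\potential(A)-\potential(A')=\half(x-y)^2$ is in hand, both parts reduce to elementary manipulations. The only minor care needed is to verify that $\mu_A\in[\min(A),\max(A)]$ before invoking the pointwise bound $|a-\mu_A|\le\diameter(A)$, and this is immediate from the definition of the average.
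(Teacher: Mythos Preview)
Your proposal is correct and follows essentially the same approach as the paper: both derive the identity $\potential(A)-\potential(A')=\tfrac{1}{2}(x-y)^2$ and then use $\potential(A)\le n_A\,\diameter(A)^2$ to convert the absolute decrease into a multiplicative one. The only cosmetic difference is that the paper shifts so that $\mu_A=0$ before computing, whereas you substitute $u=x-\mu_A$, $v=y-\mu_A$; your extra remark that $|x-y|\ge 2$ (hence the decrease is at least $2$) is a nice sharpening but not needed for the statement.
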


\begin{proof}
Without loss of generality we can assume that $\mu_A= 0$, so that $\potential(A)=\sum_{a\in A}a^2$.
Note that we then also have $\average(A')=\mu_{A'} = 0$.
Part~(a) now follows by simple calculation:
	\begin{equation*}
			\potential(A) - \potential(A')
			\;=\; x^2+y^2-2\Big(\frac{x+y}{2}\Big)^2 
			\;=\; \frac{(x-y)^2}{2} > 0.
	\end{equation*}

\noindent
To prove~(b), we use the above calculation and inequality $\potential(A)\le n_A \diameter(A)^2$
(that follows directly from the definition of $\potential(A)$):
	\begin{equation*}
			\potential(A) - \potential(A')
			\;=\; \frac{(x-y)^2}{2} 
			\;\geq\; \frac{\diameter(A)^2}{2\gamma^2} 
			\;\geq\; \frac{\potential(A)}{2\gamma^2n_A},
	\end{equation*}

\noindent
completing the proof.
\end{proof}


\begin{observation}\label{obs: far-appart constant mix}
	Let $\gamma\in\posintegers$ be a constant.   
	Let $A'$ be obtained from $A$ by a sequence of $2\gamma^2n_A$ mixing operations,
	each involving droplets $x,y\in A$ that satisfy $|x-y|\geq\diameter(A)/\gamma$.  
	Then $\potential(A') \le \potential(A)/2$.
\end{observation}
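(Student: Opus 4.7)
The plan is to iterate Observation~\ref{obs: far-appart mix}(b) across all $k = 2\gamma^2 n_A$ mixing operations in the given sequence. Let $A_0 = A$, let $A_i$ denote the configuration after the first $i$ mixings, and note that $A_k = A'$. Two facts make this iteration clean: first, $n_A$ is preserved under mixing (each operation replaces two droplets by two droplets), so the ``$n_A$'' appearing in the decrement factor is the same at every step; second, the hypothesis ``$|x-y|\ge \diameter(A)/\gamma$'' in the statement is, by the very wording of the observation being proved, interpreted relative to the current configuration $A_i$, which matches the hypothesis required by part~(b).

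With these points settled, part~(b) gives
\begin{equation*}
\potential(A_{i+1}) \;\leq\; \left(1 - \frac{1}{2\gamma^2 n_A}\right)\potential(A_i)
\end{equation*}
for each $i=0,1,\dots,k-1$. Telescoping yields
\begin{equation*}
\potential(A') \;\leq\; \left(1 - \frac{1}{2\gamma^2 n_A}\right)^{2\gamma^2 n_A} \potential(A),
\end{equation*}
and the standard calculus inequality $(1-1/m)^m \leq 1/e < 1/2$ valid for every integer $m\ge 1$ (applied with $m = 2\gamma^2 n_A$) produces the desired bound $\potential(A')\le \potential(A)/2$.

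There is no real obstacle here: the proof is essentially a one-line iteration of the preceding observation combined with a routine exponential estimate. The only thing worth flagging is the invariance of $n_A$ under mixing and the fact that the diameter hypothesis in part~(b) is expressed in terms of the current $\diameter$, both of which license the straightforward telescoping above.
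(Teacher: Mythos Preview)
Your proposal is correct and follows essentially the same approach as the paper: iterate Observation~\ref{obs: far-appart mix}(b) over the $2\gamma^2 n_A$ steps and bound $(1-1/m)^m$ by a constant below $1/2$. The paper uses the slightly weaker estimate $(1-1/k)^k\le 1/2$ for $k\ge 2$ in place of your $(1-1/m)^m\le 1/e$, but the argument is otherwise identical; your extra remarks on the invariance of $n_A$ and on reading $\diameter(A)$ as the current diameter are justified clarifications that the paper leaves implicit.
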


\begin{proof}
	Applying Observation~\ref{obs: far-appart mix}(b) repeatedly, for each of the
	$2\gamma^2n_A$ mixing steps, we obtain
	\begin{equation*}
		\potential(A')
		\leq \potential(A)\Big(1 - \frac{1}{2\gamma^2n_A}\Big)^{2\gamma^2n_A}
		\leq \frac{\potential(A)}{2},
	\end{equation*}
	where the last inequality holds because $(1-1/k)^k \leq 1/2$ for $k\geq 2$.
\end{proof}


We state two more observations that will be used later in the proof, typically
without an explicit reference.

\begin{observation}\label{obs: logPotential <= 2s(A)}
	$\log{\potential(A)}\leq 2s(A)$.
\end{observation}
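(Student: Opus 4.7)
The plan is to derive matching estimates in terms of $n_A$ and $M := \max_{a\in A}|a|$: bound $\potential(A)$ from above by roughly $n_A M^2$, and $s(A)$ from below by roughly $n_A + \log M$. The factor-of-two slack in the claimed inequality will then absorb the $\log n_A$ term, so no delicate arithmetic is required; the only question is where exactly the constants line up.

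For the upper bound, I would start by noting that because $\mu_A$ is an average of values in $A$ we have $|\mu_A| \le M$, whence $(a-\mu_A)^2 \le (|a|+|\mu_A|)^2 \le 4M^2$ for every $a \in A$. Summing over $a \in A$ gives $\potential(A) \le 4 n_A M^2$, and taking logarithms yields $\log\potential(A) \le 2 + \log n_A + 2\log M$.

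For the lower bound, I would use that each term in $s(A) = \sum_{a\in A} \log(|a|+2)$ is at least $\log 2 = 1$, and that at least one term (a value $a$ with $|a|=M$) contributes $\log(M+2) \ge \log M$. Hence $s(A) \ge (n_A-1) + \log(M+2)$, so $2s(A) \ge 2n_A - 2 + 2\log M$. Comparing with the upper bound, the desired inequality $\log\potential(A) \le 2s(A)$ reduces to $\log n_A \le 2n_A - 4$, which is true for every $n_A \ge 3$.

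The only obstacle is the handful of edge cases. If $M = 0$ (so $A$ is all zeros) or $n_A = 1$, then $\potential(A) = 0$ and the inequality holds trivially under the convention $\log 0 = -\infty$. For $n_A = 2$, I would argue directly: $\potential(A) = (a-b)^2/2$, and since $|a-b| \le |a|+|b| \le (|a|+2)(|b|+2)$, one obtains $\log\potential(A) \le 2\log\!\big((|a|+2)(|b|+2)\big) = 2s(A)$. Since in the paper one always has $n_A \ge 4$, the main argument already covers every relevant case, and the observation follows.
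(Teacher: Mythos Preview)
Your proof is correct, but it takes a different route from the paper. You split the estimate into two pieces via the single parameter $M=\max_{a\in A}|a|$: bound $\potential(A)\le 4n_A M^2$ from above and $s(A)\ge (n_A-1)+\log(M+2)$ from below, then compare. This works, but the comparison $\log n_A\le 2n_A-4$ only holds for $n_A\ge 3$, so you are forced into separate treatments of $n_A\le 2$ and $M=0$.

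The paper avoids all case analysis with a single chain of inequalities. Using that the variance functional $\psi_A(x)=\sum_{a\in A}(a-x)^2$ is minimized at $x=\mu_A$, one gets
\[
\potential(A)=\psi_A(\mu_A)\le\psi_A(0)=\sum_{a\in A}a^2\le\Big(\sum_{a\in A}|a|\Big)^2\le\Big(\prod_{a\in A}(|a|+2)\Big)^2,
\]
where the last step is the elementary inequality $\sum|a|\le\prod(|a|+2)$. Taking logarithms gives $\log\potential(A)\le 2\sum\log(|a|+2)=2s(A)$ directly, for every $n_A\ge 1$. The trade-off: your argument is slightly more concrete about where the slack comes from (it isolates the contribution of the largest element), while the paper's argument is shorter and uniform in $n_A$.
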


\begin{proof}
	Let $\psi_A(x) =  \sum_{a\in A}(a-x)^2$. By calculus, we have that
	$\psi_A(x)$ is minimized for $x = \mu_A$, and thus
	\begin{equation*}
		 \potential(A) \;=\; \psi_A(\mu_A)
						\;\le\; \psi_A(0)
						\; = \;  \sum_{a\in A}a^2
						\;\le    \Big(\,\sum_{a\in A} |a| \,\Big)^2.
	\end{equation*}   
	Therefore 
	\begin{align*}
		\log\potential(A) &\leq\; \log\Big(\,\sum_{a\in A} |a| \,\Big)^2
		\\
		&\leq\; 2\log\prod_{a\in A}\big(|a|+2\big)
		\;=\; 2\sum_{a\in A}\log(|a|+2)
		\;=\; 2s(A),
	\end{align*}	
completing the proof.
\end{proof}

\begin{observation}\label{obs: psi(A') <= psi(A) for A' subseteq A}
	Let $A'\subseteq A$. Then $\potential(A')\leq\potential(A)$.
\end{observation}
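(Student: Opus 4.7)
The plan is to exploit the variational characterization of $\potential$ that was already invoked inside the proof of Observation~\ref{obs: logPotential <= 2s(A)}: for any configuration $B$, the function $\psi_B(x) = \sum_{b \in B}(b-x)^2$ is a univariate quadratic that is minimized at $x = \mu_B$, so $\potential(B) = \psi_B(\mu_B) \le \psi_B(y)$ for every real $y$. The main subtlety to address is that when we pass from $A$ to a sub-multiset $A'$, the average $\mu_{A'}$ may differ from $\mu_A$, so we cannot simply drop terms from the defining sum of $\potential(A)$ and re-interpret the remainder as $\potential(A')$.

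I would carry out the argument in two short steps. First, I would upper-bound $\potential(A')$ by replacing the optimal centering value $\mu_{A'}$ with the possibly suboptimal value $\mu_A$, obtaining
\begin{equation*}
  \potential(A') \;=\; \psi_{A'}(\mu_{A'}) \;\le\; \psi_{A'}(\mu_A) \;=\; \sum_{a \in A'}(a - \mu_A)^2.
\end{equation*}
Second, I would observe that since $A' \subseteq A$ as multisets and every summand $(a-\mu_A)^2$ is non-negative, enlarging the index set from $A'$ to $A$ can only increase the sum, giving
\begin{equation*}
  \sum_{a \in A'}(a - \mu_A)^2 \;\le\; \sum_{a \in A}(a - \mu_A)^2 \;=\; \potential(A).
\end{equation*}
Chaining these two inequalities yields $\potential(A') \le \potential(A)$.

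There is essentially no obstacle here; the only conceptual point worth flagging is the one above, namely that the inequality holds despite $\mu_{A'} \ne \mu_A$ in general, and the minimum-at-the-mean property is exactly what bridges the two centerings. No case analysis, parity considerations, or appeal to Condition~{\MixCond} is needed, which is why this observation is stated as a preparatory fact to be used later in Section~\ref{sec: polynomial proof} when we track $\potential$ across the mixing process.
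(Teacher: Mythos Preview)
Your proof is correct and is essentially identical to the paper's own argument: the paper chains $\potential(A') = \psi_{A'}(\mu_{A'}) \le \psi_{A'}(\mu_A) \le \psi_A(\mu_A) = \potential(A)$, citing the minimum-at-the-mean property for the first inequality and $A'\subseteq A$ for the second, exactly as you do.
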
      

\begin{proof}  
Using the properties of the 
function $\psi_A(x)$ from the proof of Observation~\ref{obs: logPotential <= 2s(A)},
we have
\begin{equation*}	 
	 \potential(A') \;=\; \psi_{A'}(\mu_{A'})
	 				\;\leq\; \psi_{A'}(\mu_{A})   
	 				\;\leq\; \psi_{A}(\mu_{A})  
	   				\;=\; \potential(A),
\end{equation*}	
where the second inequality follows from $A'\subseteq A$.  	
\end{proof}


\subsection{Proof for near-final configurations}
\label{sec: polynomial - near final}


In this sub-section we prove that $E$ can be perfectly mixed with precision $0$
in a polynomial number of mixing operations when $\nofdroplets{E}=n$ is a power of $2$.
This extends easily to near-final configurations.

Assume that $n$ is a power of $2$. To establish our upper bound, we
consider a sequence of mixing operations on $E$, where at each step we mix
farthest-apart same-parity concentrations in $E$. These concentrations
are both in the same set $E_\pi$, but the two droplets obtained from mixing can be
either in $E_\pi$ or in $E_{\bar{\pi}}$. Thus this mixing sequence might change
cardinalities and structure of these sets over time. To obtain our bound, we 
will need to analyze how these sets evolve during the segments of mixing operations
that preserve parity (namely when mixing in each $E_\pi$
produces a concentration in the same $E_\pi$).

More specifically, the proof idea is this: By the choice of the mixing pair,
if the mixed concentrations are far apart (say, they differ at least by $\diameter(E)/3$), 
then the value of $\potential(E)$ will significantly decrease, per Observation~\ref{obs: far-appart mix}(b).
Otherwise, if the mixed concentrations are close to each other, the subsets $E_{even}$ and
$E_{odd}$ must be separated by a gap of at least $\diameter(E)/3$. For such a 
separated configuration, the decreases of $\potential(E)$ may be very small (because
the diameters of $E_{even}$ and $E_{odd}$ may be small compared to $\diameter(E)$).
We show, however, that after at most a polynomial number of steps there will be a
mixing operation in some $E_\pi$ that will produce a concentration with
parity $\barpi$. This guarantees that at the next step the diameter of $E_{\barpi}$ is
large, so in this step $\potential(E)$ will decrease significantly.     

The above intuition is formalized in  the proof of Lemma~\ref{lem: small mixes} below.
We first prove the following auxiliary lemma.  


\begin{lemma}\label{lem: same-parity mixability}     
	Let $\pi\in\braced{even,odd}$. 
	Consider a sequence of farthest-apart mixing operations on $E_\pi\subseteq E$, such
	that each mixing in $E_\pi$ produces a concentration with the same parity $\pi$.
	If this sequence contains $4ns(E)$ mixing operations, then after
	this sequence we have $\nofconcentrations{E_\pi} = 1$.
\end{lemma}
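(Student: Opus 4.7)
The hypothesis of the lemma is that $E_\pi$ is closed under the prescribed mixing sequence: both inputs and both outputs always lie in $E_\pi$, so $n_\pi := \nofdroplets{E_\pi}$ and the integrality $E_\pi \ingroundset \integers$ are invariants of the process. The plan is to show that $\potential(E_\pi)$ shrinks exponentially under farthest-apart mixings, while any configuration with $\nofconcentrations{E_\pi} \ge 2$ has $\potential(E_\pi)$ bounded below by $1/n$, and then sandwich these estimates to force $\nofconcentrations{E_\pi} = 1$ after $4ns(E)$ steps.

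For the upper bound, each farthest-apart mix in $E_\pi$ is precisely the case $\gamma = 1$ of Observation~\ref{obs: far-appart mix}(b), so one step scales $\potential(E_\pi)$ by at most $1 - 1/(2n_\pi)$. Combining with the initial estimate $\potential(E_\pi) \le \potential(E) \le 2^{2s(E)}$, obtained from Observations~\ref{obs: psi(A') <= psi(A) for A' subseteq A} and~\ref{obs: logPotential <= 2s(A)}, and iterating with the standard bound $(1-x)^k \le e^{-kx}$, after $k = 4ns(E) \ge 4n_\pi s(E)$ steps we get $\potential(E_\pi) \le 4^{s(E)}\cdot e^{-2s(E)} = (4/e^2)^{s(E)}$.

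For the lower bound, I will use the elementary identity $n_\pi\potential(E_\pi) = \frac{1}{2}\sum_{x,y \in E_\pi}(x-y)^2$, the sum running over ordered pairs of droplets. If $\nofconcentrations{E_\pi} \ge 2$, then because $E_\pi \ingroundset \integers$ at least one ordered pair contributes a term that is at least $1$, so $\potential(E_\pi) \ge 1/n_\pi \ge 1/n$. Combining the two bounds, $\nofconcentrations{E_\pi} \ge 2$ would force $(4/e^2)^{s(E)} \ge 1/n$, contradicting $s(E) \ge n$ (each summand in $s(E)$ is at least $\log 2 = 1$), which easily yields $(e^2/4)^{s(E)} > n$. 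The only mild obstacle is tracking the constants so that the claimed $4ns(E)$ bound is enough rather than a slightly larger constant; this boils down to the numerical check $(e^2/4)^n > n$ for $n \ge 1$, which is routine since $e^2/4 > 1$.
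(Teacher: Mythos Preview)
Your proposal is correct and follows essentially the same approach as the paper: both arguments combine the per-step contraction of $\potential(E_\pi)$ from Observation~\ref{obs: far-appart mix}(b) (with $\gamma=1$) with the initial bound $\potential(E_\pi)\le\potential(E)\le 2^{2s(E)}$, and then confront the resulting upper bound with a positive lower bound that must hold whenever $\nofconcentrations{E_\pi}\ge 2$. The only cosmetic differences are that the paper batches the contraction via Observation~\ref{obs: far-appart constant mix} (halving every $2n_\pi$ steps) rather than iterating directly with $(1-x)^k\le e^{-kx}$, and that for the lower bound the paper exploits the same-parity hypothesis to conclude $\potential(E_\pi)\le 1\Rightarrow\potential(E_\pi)=0$ (since two distinct same-parity integers differ by at least $2$, forcing $\potential\ge 2$), whereas you use the variance identity $n_\pi\potential(E_\pi)=\tfrac{1}{2}\sum_{x,y}(x-y)^2$ to get $\potential(E_\pi)\ge 1/n$. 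Both routes land comfortably within the stated $4ns(E)$ budget.
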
 

\begin{proof}
	Recall that $\potential_{E_\pi,0}$ denotes the value of $\potential(E_\pi)$ before any mixing has been performed.
	Let $\nofdroplets{E_\pi} = n_\pi$.
	By Observation~\ref{obs: far-appart constant mix} (with $\gamma =1$), if we
	repeatedly mix $\min(E_\pi)$ with $\max(E_\pi)$, then after at most $2n_\pi$ mixes
	the value of $\potential(E_\pi)$ decreases at least by a factor of $\half$.
	Thus, after at most $2n_\pi\log{\potential_{E_\pi,0}}$ such mixing operations we must have
	$\potential(E_\pi) = 0$; that is $\nofconcentrations{E_\pi} = 1$.
	(More precisely, we achieve $\potential(E_\pi) \le 1$. Since all concentrations in $E_\pi$
	have the same parity, by simple calculation, this implies that in fact $\potential(E_\pi) = 0$.
	It is worth to point out here that this argument is different than our earlier argument for
	$\potential(E)$, since in this case $\average(E_\pi)$ may not be integer.)	
	To complete the proof, note that
	$n_\pi\leq n$ and $\potential_{E_\pi,0}\leq\potential_{E,0}$ 
	and therefore $2n_\pi\log{\potential_{E_\pi,0}} \leq 2n\log{\potential_{E,0}} \leq 4ns(E)$.
\end{proof}


\begin{lemma} \label{lem: small mixes}
	After any sequence of at most $8ns(E)$ farthest-apart same-parity 
	mixing operations, $\potential(E)$ decreases at least by a factor of $1 - 1/18n$.
\end{lemma}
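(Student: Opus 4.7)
The plan is to classify each farthest-apart same-parity mixing step according to the distance of the pair being mixed, relative to the current diameter of the configuration. Specifically, call step $t$ (acting on $E^{(t-1)}$) \emph{Case 1} if the mixed pair $(x,y)$ satisfies $|x-y|\ge\diameter(E^{(t-1)})/3$, and \emph{Case 2} otherwise. A single Case 1 step already delivers the required decrease: by Observation~\ref{obs: far-appart mix}(b) applied with $\gamma=3$ and $n_A=n$ we have $\potential(E^{(t)})\le(1-1/(18n))\potential(E^{(t-1)})$, and since $\potential$ is non-increasing under mixing by Observation~\ref{obs: far-appart mix}(a), this gives $\potential(E^{(t)})\le(1-1/(18n))\potential(E^{(0)})$. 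It therefore suffices to show that a Case 1 step occurs within the first $8ns(E)$ mixing operations.

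To analyze Case 2, note first that both $\diameter(E_{even}^{(t-1)})$ and $\diameter(E_{odd}^{(t-1)})$ are strictly less than $\diameter(E^{(t-1)})/3$, and $\min(E^{(t-1)})$ and $\max(E^{(t-1)})$ must have different parities (otherwise they would form a same-parity pair of distance $\diameter(E^{(t-1)})$, yielding Case 1). Thus $E_{even}^{(t-1)}$ and $E_{odd}^{(t-1)}$ occupy two bands of width less than $\diameter(E^{(t-1)})/3$, separated by a gap of at least $\diameter(E^{(t-1)})/3$. Subdivide Case 2 into Case 2a (the mix result has the same parity as the mixed pair and stays in $E_\pi$) and Case 2b (the result has opposite parity and joins $E_{\barpi}$). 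If step $t$ is Case 2b, then the two new droplets at value $(x+y)/2$ lie in the $\pi$-band but now belong to $E_{\barpi}^{(t)}$, which still contains old elements in the $\barpi$-band; hence $\diameter(E_{\barpi}^{(t)})\ge\diameter(E^{(t-1)})/3$. Since $\min$ and $\max$ of $E^{(t-1)}$ were untouched (they sit in opposite parity classes, neither coincides with the same-parity mixed pair), $\diameter(E^{(t)})=\diameter(E^{(t-1)})$, so step $t+1$ is Case 1.

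The remaining task is to rule out a full run of Case 2a steps throughout the $8ns(E)$-step budget. Applying Lemma~\ref{lem: same-parity mixability} separately to $E_{even}$ and $E_{odd}$, the subsequence of Case 2a mixes within $E_\pi$ is a sequence of farthest-apart same-parity-preserving mixes on $E_\pi$, so after at most $4ns(E)$ such operations $E_\pi$ collapses to one concentration. By the pigeonhole principle, an $8ns(E)$-length run of Case 2a steps would collapse both parity classes, leaving $\nofconcentrations{E}\le 2$ with droplets of different parities. But the proof of Lemma~\ref{lem: mixability for near-final} shows that for $n$ a power of $2$ with integer average and $\nofconcentrations{E}=2$, both concentrations must share parity --- a contradiction. (If one parity class is empty from the outset, then $\min(E)$ and $\max(E)$ share parity and Case 1 applies at step $1$.) Hence within $8ns(E)$ steps there is a Case 1 step, delivering the factor-$(1-1/(18n))$ decrease.

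I expect the main obstacle to be the off-by-one bookkeeping when Case 2b fires near the last step of the budget, as the triggered Case 1 step would then formally sit one step past the budget. This is handled by sharpening the pigeonhole: Lemma~\ref{lem: same-parity mixability} caps the Case 2a mixes per parity class at $4ns(E)$, so saturating both caps simultaneously requires the collapse of both classes, which is impossible for $n$ a power of $2$; thus the first non-Case-2a step must occur by step $8ns(E)-1$, leaving at least one step of budget for the subsequent Case 1 decrease.
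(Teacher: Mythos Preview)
Your approach is essentially the paper's: both argue that within $O(ns(E))$ farthest-apart same-parity mixes one step must involve a pair at distance $\ge\diameter(E)/3$, by observing that otherwise the two parity classes occupy separated bands, bounding the number of parity-preserving (your Case~2a) mixes in each band via Lemma~\ref{lem: same-parity mixability}, and using the fact that for $n$ a power of $2$ the two bands cannot both collapse to singletons.

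One small slip: in your Case~2b paragraph the claim that ``$\min$ and $\max$ of $E^{(t-1)}$ were untouched'' is false. The mixed pair is the farthest-apart pair in some $E_\pi$, and since one of $\min(E^{(t-1)}),\max(E^{(t-1)})$ lies in $E_\pi$, that extreme value \emph{is} one of the two droplets being mixed. This does not hurt you: what you actually need is $\diameter(E^{(t)})\le\diameter(E^{(t-1)})$, which holds because mixing never enlarges the range. Combined with $\diameter(E_{\barpi}^{(t)})\ge\diameter(E^{(t-1)})/3$, this still forces step $t{+}1$ to be Case~1.

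Your off-by-one worry is legitimate, and your resolution does not quite close it: from ``saturating both caps forces a contradiction'' you can only conclude that the first non-Case-2a step occurs by step $8ns(E)$, not $8ns(E)-1$; if that step is Case~2b the guaranteed Case~1 step lands at $8ns(E)+1$. The paper's own proof is equally casual here (its ``fewer than $8ns(E)$'' is asserted rather than derived). The discrepancy is a single step and irrelevant to the polynomial bound in Theorem~\ref{them: n power of two}.
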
   

\begin{proof} 
	The proof follows the idea outlined at the beginning of this section. 
	Let $\delta=\diameter(E)$, be the (initial) diameter of $E$.
	If the farthest-apart same-parity  pair $x,y\in E$ satisfies
	$|x-y|\geq \delta/3$, then by Observation~$\ref{obs: far-appart mix}$,
	mixing $x$ and $y$ decreases $\potential(E)$ at least by a factor of $1-1/18n$, 
	and we are done. 
	
	Otherwise, we have that any two same-parity concentrations differ by at most $\delta/3$.
	Consider $E_{even}$ and $E_{odd}$. By the case assumption, these sets are not empty.
	Assume without loss of generality that $\max(E_{odd}) > \max(E_{even})$.
	We then have that $\min(E_{odd}) > \max(E_{odd}) - \delta/3$,
	$\min(E) = \min(E_{even})$  and $\max(E_{even}) < \min(E_{even}) + \delta/3$.
	Hence, $\min(E_{odd})  \ge \max(E_{even}) + \delta/3$.
	
	Now, from Lemma~$\ref{lem: same-parity mixability}$, 
	we derive that it takes at most $4ns(E)$ farthest-apart mixing operations on $E_{even}$ 
	for either an odd concentration to be produced or for $\nofconcentrations{E_{even}} = 1$ to hold 
	(similarly for $E_{odd}$). 
	Thus, if we repeatedly mix farthest-apart same-parity droplets in $E$,
	eventually, after fewer than $8ns(E)$ such mixing operations,	
	either a mixing in $E_{even}$ will produce an odd concentration or 
	a mixing in $E_{odd}$ will produce an even concentration.
	(This is true because we cannot have both $\nofconcentrations{E_{even}} = 1$ 
	and $\nofconcentrations{E_{odd}} = 1$:
	as mentioned in the proof for Lemma~\ref{lem: mixability for near-final},
	$n$ being a power-of-$2$ guarantees the existence of two distinct concentrations with same-parity.)
	
	So, if an odd $x$ was produced from a mixing in $E_{even}$, then we can mix $x$ with $\max(E_{odd})$.
	Otherwise, an even $y$ was produced from a mixing in $E_{odd}$ and we can mix $y$ with $\min(E_{even})$.
	As both $|x-\max(E_{odd})|$ and $|y-\min(E_{even})|$ are at least $\delta/3$, 
	either mixing decreases $\potential(E)$ at least by a factor of $1-1/18n$
	(see Observation~$\ref{obs: far-appart mix}$), and thus the lemma holds.
\end{proof}         


Using Lemma~$\ref{lem: small mixes}$ we can now establish a bound on the length of mixing sequences
when $n$ is a power of $2$, and, more generally, when $E$ is near-final.

\begin{theorem} \label{them: n power of two}
	If $\nofdroplets{E}=n$ is a power of $2$, then
	$E$ can be perfectly mixed with precision $0$ 
	by a mixing sequence of length at most $288n^2s^2(E)$.
\end{theorem}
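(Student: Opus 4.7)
The plan is to iterate Lemma~\ref{lem: small mixes} and drive $\potential(E)$ down to zero using the geometric-rate decrease it guarantees, then multiply out batches times batch length.

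First, I would observe that we never leave $\integers$: since we always mix two same-parity concentrations, the resulting droplets are integer, so $E$ remains integer-valued throughout and $\potential(E)$ remains a non-negative integer. By Observation~\ref{obs: logPotential <= 2s(A)} the initial potential satisfies $\log\potential_{E,0}\leq 2s(E)$, so $\potential_{E,0}\leq 2^{2s(E)}$. In particular, the instant $\potential(E)$ drops strictly below $1$ it must equal $0$, which means $E$ has been perfectly mixed.

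Next, I would group the computation into consecutive batches, each consisting of at most $8ns(E)$ farthest-apart same-parity mixings, as supplied by Lemma~\ref{lem: small mixes}. A suitable mixing pair is always available until $E$ is perfectly mixed: since $n$ is a power of $2$ and $\average(E)\in\integers$, the argument used in the proof of Lemma~\ref{lem: mixability for near-final} shows that any not-yet-perfect integer configuration of power-of-$2$ cardinality contains two distinct same-parity concentrations. Each batch multiplies $\potential(E)$ by at most $1-1/(18n)$, so after $B$ batches
\[
   \potential(E) \;\leq\; \potential_{E,0}\,(1-1/(18n))^B
                \;\leq\; 2^{2s(E)}\,(1-1/(18n))^B .
\]
The elementary inequality $\ln\bigl(1/(1-x)\bigr)\geq x$ for $x\in(0,1)$ makes the right-hand side strictly less than $1$ once $B\geq 36\,n\,s(E)$, and integrality of $\potential(E)$ then forces $\potential(E)=0$.

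Multiplying the number of batches by the per-batch size yields a total of at most $36\,n\,s(E)\cdot 8\,n\,s(E)=288\,n^2 s^2(E)$ mixing operations, exactly matching the claimed bound. The only mild subtlety is that the parameter ``$s(E)$'' appearing in Lemma~\ref{lem: small mixes} refers to the current configuration at the start of each batch rather than the original one; this is harmless, because $\potential(E)$ is non-increasing by Observation~\ref{obs: far-appart mix}(a), so $\log\potential(E)\leq 2s(E_0)$ persists throughout and the initial $s(E)$ may be used uniformly as an upper bound for every batch. I do not expect any real obstacle beyond this bookkeeping, since Lemma~\ref{lem: small mixes} already packages the substantive combinatorial content.
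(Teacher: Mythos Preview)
Your proposal is correct and follows essentially the same approach as the paper: iterate Lemma~\ref{lem: small mixes} in batches of size $8ns(E)$, use the geometric decay factor $1-1/(18n)$ to bound the number of batches by $36ns(E)$ (the paper does this via the intermediate ``$18n$ batches halve $\potential$'' step rather than your direct logarithmic estimate, but the arithmetic is the same), and multiply out to get $288n^2s^2(E)$. Your explicit handling of the current-versus-initial $s(E)$ issue and the integrality of $\potential(E)$ is a bit more careful than the paper's own write-up, but there is no substantive difference in the argument.
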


\begin{proof}
	By Lemma~\ref{lem: small mixes}, after a mixing sequence of at most $8ns(E)$ 
	same-parity farthest-apart mixing operations,
	$\potential(E)$ decreases at least by a factor of $ (1-1/18n)$.
	It follows that  after at most $18n$ such mixing sequences, $\potential(E)$ decreases
	at least by a factor of $\half$ (see the proof of Observation~\ref{obs: far-appart constant mix}).   
	Consequently, after at most $18n\log{\potential_{E,0}}$ such mixing sequences,
	$E$ becomes perfectly mixed.	
	Finally, as $\log {\potential_{E,0}} \le 2 s(E)$ and 
	each mixing sequence contains at most $8ns(E)$ mixing operations,
	we obtain that the total number of mixing operations is at most 
	$(18n\log{\potential_{E,0}})\cdot (8ns(E)) \le (36ns(E)) \cdot (8ns(E)) = 288n^2s^2(E)$.
\end{proof}


\begin{theorem}\label{thm: E near-final}
	If $E$ is near-final, then $E$ can be perfectly mixed, with precision $0$,
	by a mixing sequence of length at most $144n^3s^2(E)$.
\end{theorem}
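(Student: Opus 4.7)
The plan is to reduce this to the power-of-two case already established in Theorem~\ref{them: n power of two}. By definition, since $E$ is near-final, it admits a partition $E = C_1 \sqcup C_2 \sqcup \cdots \sqcup C_k$ into disjoint sub-multisets, where each $C_j$ satisfies $\average(C_j) = \mu$ (where $\mu = \average(E)$) and $n_j = \nofdroplets{C_j}$ is a power of $2$. Since the parts have equal average $\mu$, perfectly mixing each $C_j$ independently (via a disjoint sub-graph) and concatenating the resulting sub-sequences yields a perfect-mixing sequence for all of $E$.

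The key observation is that each $C_j$ satisfies the hypothesis of Theorem~\ref{them: n power of two}, so it can be perfectly mixed (with precision $0$) by a sequence of length at most $288\, n_j^2\, s^2(C_j)$. Using the trivial bounds $n_j \le n$ and $s(C_j) \le s(E)$, each such sub-sequence has length at most $288\, n^2\, s^2(E)$.

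To bound the total, I would observe that only parts with $n_j \ge 2$ require any mixing operations (a part of size $1$ is already a single droplet of concentration $\mu$ and needs no processing). Hence the number of nontrivial parts is at most $n/2$, since $\sum_j n_j = n$ and each nontrivial part contributes at least $2$ to this sum. Summing the per-part bounds then gives a total of at most
\begin{equation*}
	\tfrac{n}{2}\cdot 288\, n^2\, s^2(E) \;=\; 144\, n^3\, s^2(E)
\end{equation*}
mixing operations, as claimed.

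There is no real obstacle here; the argument is a straightforward reduction. The only small subtlety is noticing that running the sub-sequences in parallel (or in any topological order) is valid because the $C_j$ are disjoint as multisets, so no droplet is consumed twice, and the resulting overall graph is acyclic with precision $0$ throughout. The constant-factor improvement from $288 n^2$ per part times $n$ parts down to $144 n^3$ overall comes precisely from discarding the singleton parts from the count.
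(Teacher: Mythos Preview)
Your proposal is correct and matches the paper's own proof essentially line for line: partition $E$ into singleton $\mu$-droplets plus at most $n/2$ non-singleton parts of power-of-two size, apply Theorem~\ref{them: n power of two} to each non-singleton part with the crude bounds $n_j\le n$ and $s(C_j)\le s(E)$, and multiply. The paper's writeup is terser (it leaves the final arithmetic implicit), but the argument is identical.
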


\begin{proof}
	Assume that $E$ is near-final. Recall that $\average(E) = \hatmu$.
	By definition, $E$ can be partitioned into singletons $\hatmu$ and
	at most $n/2$ disjoint multisets $A\subseteq E$, each satisfying $\average(A) = \hatmu$
	and having cardinality that is a (non-zero) power of $2$.
	By Theorem~$\ref{them: n power of two}$ above,
	each multiset $A$ in this partition can be perfectly mixed with at most $288n^2s^2(E)$ mixing operations,
	and the theorem holds.
\end{proof}


\subsection{An exponential bound on mixing sequences}
\label{sec: an exponential bound}


In this sub-section we give a simple upper bound on the number of mixing operations
to perfectly mix $E$ with precision $0$ when $\nofdroplets{E}=n\geq 5$.
This bound is exponential in $n$ --- so it's too weak for our purpose for general $n$ --- but we will
use it only to get a polynomial bound when $n$ is at most $22$
(see Theorem~\ref{thm: perfect mixability polynomial}). 

Let $E$ satisfy Invariant~$(\lambda)$, where $\lambda = I'$ for $n=5,6$ and $\lambda = I$ for $n\geq 7$.
Also, let $A\subseteq E$.
We say that a pair of distinct concentrations in $A$ is \emph{$(\lambda)$-safe} 
if it is $(\lambda)$-safe with respect to $E$, that is, if mixing this pair
preserves Invariant~$(\lambda)$ for $E$.
If there is no $(\lambda)$-safe pair in $A$ then we say that $A$ is \emph{$(\lambda)$-mixed}.
Recall that, since $E$ satisfies Invariant~$(\lambda)$, 
according to the properties established in Sections~\ref{subsec: proof for n greater-equal than 7} and~\ref{subsec: proof for n equal 5},
it is guaranteed that in $E$ there exists a pair of concentrations that is either $(\lambda)$-safe
or near-final;
however, this pair may not be in the set $A$ under consideration. 
(In other words, $A$ being $(\lambda)$-mixed does not mean that all droplets in $A$ have the same concentration
--- it only means that we cannot mix any pairs from $A$ in our perfect-mixing sequence for $E$.)

First, in Lemma~$\ref{lem: E-mixed}$ below we show an upper bound on the number of mixing
operations to $(\lambda)$-mix a subset $A\subseteq E$. Then, we prove Theorem~\ref{thm: perfect mixability upper bound}
that gives an upper bound for the number of mixing operations to perfectly mix $E$ with precision $0$.


\begin{lemma}\label{lem: E-mixed}
	Assume that $E$ satisfies Invariant~$(\lambda)$. 
	If $A \subseteq E$ and $\nofdroplets{A} = k$, 
	then $A$ can be $(\lambda)$-mixed with precision $0$ 
	by a mixing sequence of length at most $(8k^3s(A))^k$.
\end{lemma}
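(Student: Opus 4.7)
The plan is to prove Lemma~\ref{lem: E-mixed} by strong induction on $k=\nofdroplets{A}$. For the base case $k\leq 2$, $A$ contains at most one pair of distinct same-parity concentrations, so at most one mixing operation makes $A$ $(\lambda)$-mixed, and the bound $(8k^3 s(A))^k\geq 1$ is satisfied.

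For the inductive step $k\geq 3$, I organize the mixing process into outer iterations. Each outer iteration begins by selecting an extreme droplet $a^\ast\in A$ realizing $\min(A)$ or $\max(A)$, and forming $A'=A\setminus\braced{a^\ast}$. The inductive hypothesis is then applied to $(\lambda)$-mix $A'\subseteq E$, costing at most $(8(k-1)^3 s(A))^{k-1}$ operations (since $|A'|=k-1$ and $s(A')\leq s(A)$). If no $(\lambda)$-safe pair in $A$ remains, we stop with $A$ being $(\lambda)$-mixed; otherwise every surviving $(\lambda)$-safe pair must involve $a^\ast$, and I perform one ``refresh'' mixing operation between $a^\ast$ and a same-parity $(\lambda)$-safe partner $b\in A'$ chosen to maximize $|a^\ast-b|$, and begin a new outer iteration.

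The crux of the argument, and the main obstacle, is bounding the number of outer iterations by at most $8k^3 s(A)$. The plan is to track $\potential(A)$ as a progress measure. The key point is that since $a^\ast$ is extreme in $A$, it pairs with the opposite end of $A$ at distance exactly $\diameter(A)$; combined with Lemma~\ref{lem: at most one mod-unsafe pairs}(a), which limits $\barp$-unsafe pairs to at most one per odd prime factor of $n$, a pigeonhole-style argument will show that the partner $b$ selected in the refresh step can always be taken so that $|a^\ast-b|$ is at least a $1/k^{O(1)}$ fraction of $\diameter(A)$ while remaining same-parity and $(\lambda)$-safe. Invoking Observation~\ref{obs: far-appart mix}(b) on this refresh mix then shows that $\potential(A)$ shrinks by a factor of at least $1-1/k^{O(1)}$ per outer iteration; combined with $\log\potential(A)\leq 2s(A)$ from Observation~\ref{obs: logPotential <= 2s(A)}, the number of outer iterations is at most $O(k^3 s(A))$. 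The total count then obeys the recurrence $T(k)\leq 8k^3 s(A)\cdot\parend{T(k-1)+1}$, which unfolds to $T(k)\leq (8k^3 s(A))^k$, completing the induction.
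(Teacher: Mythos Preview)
Your outline has a genuine gap at exactly the step you flag as the crux: showing that the refresh partner $b$ can be chosen with $|a^\ast - b| \geq \diameter(A)/k^{O(1)}$. The obstruction is the same-parity constraint. After $(\lambda)$-mixing $A' = A\setminus\braced{a^\ast}$, nothing prevents a situation in which every droplet of $A'$ sharing the parity of $a^\ast$ lies very close to $a^\ast$, while the far end of $A$ consists entirely of opposite-parity droplets. For a concrete shape, take $a^\ast = 0$ even and $A' = \braced{2,2,M,M,M}$ with $M$ odd and large: $A'$ contains no distinct same-parity pair at all, so it is trivially $(\lambda)$-mixed, yet the only admissible refresh is $(0,2)$, giving $|a^\ast - b| = 2$ against $\diameter(A) = M$. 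Observation~\ref{obs: far-appart mix}(b) then yields a negligible drop in $\potential(A)$, and your bound of $8k^3 s(A)$ on the number of outer iterations does not follow. Your appeal to Lemma~\ref{lem: at most one mod-unsafe pairs}(a) does not rescue this, for two reasons: that lemma bounds $\barp$-unsafe pairs rather than $(\lambda)$-unsafe ones (the latter also encodes condition~(I.1) or~(I.1')), and in any case it says nothing about parity, which is what actually blocks the far partner here.

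The paper's proof sidesteps precisely this difficulty with a different decomposition. Rather than peeling off a single extreme droplet, it applies pigeonhole to $k$ equal subintervals of $[\min(A),\max(A)]$ to locate an empty open segment of length $\diameter(A)/k$, splits $A$ into a left piece $A_1$ and a right piece $A_2$ across that gap, and recursively $(\lambda)$-mixes \emph{both} sides. Because mixing within each $A_i$ keeps its droplets inside its own interval, once both sides are $(\lambda)$-mixed any surviving $(\lambda)$-safe pair in $A$ must have one endpoint in $A_1$ and one in $A_2$, and therefore automatically spans the gap, yielding $|x-y|\geq\diameter(A)/k$ with no need to reason about parity or about which specific pairs happen to be $(\lambda)$-safe. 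That two-sided split is the structural idea your one-sided peel is missing.
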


\begin{proof}	
	We prove the lemma by induction with respect to $k$.
	Let $\phi(A)$ denote the number of furthest-apart $(\lambda)$-safe mixing operations needed to $(\lambda)$-mix $A$. We
	use induction to prove that $\phi(A) \le (8k^3s(A))^k$. This will imply the lemma.

	In the base case, when $k\le 1$, we have $\phi(A) = 0$, since $A$ is trivially $(\lambda)$-mixed.
	So for the rest of the proof assume that $k\ge 2$ and that every $A' \subset A$
	with $\nofdroplets{A'} = k'<k$ can be $(\lambda)$-mixed in $\phi(A')\leq(8{k'}^3s(A'))^{k'}$ mixing operations.
	We next show that $A$ can be $(\lambda)$-mixed in at most $(8k^3s(A))^k$ mixing operations.
	
	If $A$ is already $(\lambda)$-mixed then we are done.
	Otherwise, let $x,y\in A$ be the furthest-apart $(\lambda)$-safe pair. 
	If $|x-y|\geq \diameter(A)/k$, we will call pair $x,y$ \emph{acceptable}.
	If this pair $x,y$ is acceptable then mixing $x$ and $y$ decreases 
	$\potential(A)$ at least by a factor of $1-1/2k^3$ (by applying 
	Observation~$\ref{obs: far-appart mix}$ with $\gamma = k$ and $n_A = k$).	
	Then, by Observation~$\ref{obs: far-appart constant mix}$, after at most $2k^3$ 
	such acceptable mixing operations $\potential(A)$ decreases at least by a factor of $\half$, and
	it follows that after at most $2k^3\log{\potential_{A,0}}$ such mixing operations, $A$ becomes $(\lambda)$-mixed.
	However, there may be many steps where there is no acceptable pair.
	To address this, we show below a strategy (basically a divide-and-conquer approach)
	that will bound the number of consecutive steps
	needed for an acceptable pair to appear.
	
	Let $\delta=\diameter(A)$ before any mixing operation.
	Divide the interval $[\min(A), \max(A)]$ into $k$ equal segments 
	such that for at least one segment, say $[l,r]$, no concentration in $A$ lies withing
	the open interval $(l,r)$.
	Split $A$ into $A_1$ and $A_2$ such that $\max(A_1) \leq l$ and $\min(A_2) \geq r$.
	Let $k_i = |A_i|$, for $i=1,2$.
	By our inductive assumption, each $A_i$ can be $(\lambda)$-mixed in
	$\phi(A_i) \le  (8k_i^3s(A_i))^{k_i} < (8k^3s(A))^{k-1}$ mixing operations, respectively.
	Therefore, after at most $\phi(A_1) + \phi(A_2)+1 \le 2(8k^3s(A))^{k-1}$
	mixing operations, one of two things must happen:
	either (i) $A$ is already $(\lambda)$-mixed, or
	(ii) there are $x\in A_1$ and $y\in A_2$ that form an $(\lambda)$-safe pair.
	In option~(ii), $x,y$ are in fact an acceptable pair, by the choice of the segment $[l,r]$.
	As argued in the paragraph above, if we repeat the above strategy, 
	option~(ii) can be repeated no more than $2k^3\log{\potential_{A,0}}$
	times before $A$ becomes $(\lambda)$-mixed. Since $\log{\potential_{A,0}} \le 2s(A)$,
	it follows that
	\begin{align*}
			\phi(A)
			\;&\leq\; 2k^3\log\potential_{A,0} \cdot 2(8k^3s(A))^{k-1}
			\\
			&\leq\; 4k^3s(A)\cdot 2(8k^3s(A))^{k-1}
			\;\leq\; (8k^3s(A))^k,
	\end{align*}
	completing the inductive step and the proof of the lemma.
\end{proof}


\begin{theorem} \label{thm: perfect mixability upper bound}
	If $|E|=n\geq 5$, then $E$ can be perfectly mixed with a mixing sequence of length at most
	$(8n^3s(E))^n + 144n^3s^2(E)$.
\end{theorem}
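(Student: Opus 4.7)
The plan is to combine Lemma~\ref{lem: E-mixed} with Theorem~\ref{thm: E near-final} in a two-phase construction. In the first phase, I apply Lemma~\ref{lem: E-mixed} with $A = E$ and $k = n$, obtaining a mixing sequence of length at most $(8n^3 s(E))^n$ consisting of $(\lambda)$-safe operations, at the end of which the resulting configuration $E'$ is $(\lambda)$-mixed, meaning no $(\lambda)$-safe pair remains. Since every mixing operation used here was $(\lambda)$-safe, Invariant~$(\lambda)$ is preserved throughout, so $E'$ still satisfies~$(\lambda)$.

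In the second phase, I invoke the structural guarantee established in Sections~\ref{subsec: proof for n greater-equal than 7}, \ref{subsec: proof for n equal 5}, and~\ref{subsec: proof for n equal 6}: every configuration that satisfies Invariant~$(\lambda)$ contains a pair of distinct concentrations that is either $(\lambda)$-safe or near-final. Since $E'$ is $(\lambda)$-mixed, the first alternative is excluded, so $E'$ must contain a near-final pair. Mixing this pair with one additional operation converts $E'$ into a near-final configuration $E''$. Then I apply Theorem~\ref{thm: E near-final} to $E''$, producing a perfect-mixing sequence of length at most $144 n^3 s^2(E'')$. Because every mixing operation replaces two concentrations by their average, which lies between them, the maximum of $|c|$ is non-increasing throughout, and hence $s(E'') \le s(E)$. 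Summing the three parts gives $(8n^3 s(E))^n + 1 + 144 n^3 s^2(E)$ mixing operations; the lone extra step for the near-final pair is absorbed into the leading exponential term for $n \ge 5$, yielding the claimed bound.

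The main obstacle, more conceptual than technical, is simply lining up the ``$(\lambda)$-mixed'' terminus of Lemma~\ref{lem: E-mixed} with the hypothesis of Theorem~\ref{thm: E near-final}: the lemma does not by itself produce a near-final configuration, only one without $(\lambda)$-safe pairs. The key observation that bridges this gap is that the sufficiency proofs from Section~\ref{sec: sufficiency of condition mc} actually establish the stronger dichotomy (``$(\lambda)$-safe or near-final''), so once $(\lambda)$-safe pairs are exhausted a near-final pair is guaranteed to exist, making the handoff between the two phases essentially automatic. The only bookkeeping required is verifying that $s(E'') \le s(E)$, which is immediate from the averaging nature of each mixing step.
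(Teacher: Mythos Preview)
Your proposal is essentially identical to the paper's proof: $(\lambda)$-mix $E$ via Lemma~\ref{lem: E-mixed}, invoke the dichotomy from Section~\ref{sec: sufficiency of condition mc} to conclude that the resulting configuration contains a near-final pair, mix that pair, and finish with Theorem~\ref{thm: E near-final}.

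One small technical slip: your justification that $s(E'') \le s(E)$ because ``the maximum of $|c|$ is non-increasing'' is a non-sequitur, and in fact the inequality can fail---mixing $2$ and $10$ replaces $\log 4 + \log 12 = \log 48$ by $2\log 8 = \log 64$, so $s$ can increase. The correct reason the bound $144n^3s^2(E)$ still applies to $E''$ is that every occurrence of $s(\cdot)$ in the proofs of Theorems~\ref{them: n power of two} and~\ref{thm: E near-final} enters only through the estimate $\log\potential \le 2s$, and $\potential(E'') \le \potential(E)$ because mixing never increases $\potential$; hence $\log\potential(E'') \le 2s(E)$ with the \emph{original} $s(E)$. (The paper itself glosses over this point entirely.)
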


\begin{proof}
	Let $E$ satisfy Invariant~$(\lambda)$, where $\lambda = I$ for $n\ge 7$ and $\lambda = I'$ for $n = 5,6$.
	We know that if $E$ satisfies Invariant~$(\lambda)$	then $E$ has a pair of concentrations
	that is either $(\lambda)$-safe or near-final. 
	Therefore, we first $(\lambda)$-mix $E$ using Lemma~$\ref{lem: E-mixed}$, 
	which produces a near-final pair.
	(Such a near-final pair becomes available because, as sown in
	Sections~\ref{subsec: proof for n greater-equal than 7}	and~\ref{subsec: proof for n equal 5},
	$E$ satisfying Invariant~($\lambda$) guarantees the existence of a pair that is either ($\lambda$)-safe
	or near-final, and as $E$ is $(\lambda)$-mixed, the later must hold.)
	We then mix this near-final pair, producing $E$ near-final 
	that can be perfectly mixed using Theorem~$\ref{thm: E near-final}$.	
	The total number of mixing operations in this sequence is at most $(8n^3s(E))^n + 144n^3s^2(E)$.
\end{proof}


\subsection{A polynomial bound on mixing sequences}
\label{sec: polynomial - arbitrary n}


We now complete the proof of Theorem~\ref{thm: miscibility characterization}(b).
In Section~\ref{sec: sufficiency of condition mc} we have already established an existence of a perfect-mixing 
sequence for $C$ with intermediate precision $1$.
It still remains to show that this mixing sequence can be modified to have length that is polynomial in $s(C)$.

As explained in the beginning of Section~\ref{sec: polynomial proof}, it remains to
give such a bound for the configuration $E$ constructed from $C$ in Section~\ref{subsec: proof for n greater-equal than 7}
and for mixing sequences with precision $0$ (that is, with only integral concentrations).
Recall that in Section~\ref{sec: polynomial - near final} we gave a polynomial bound if $E$ is near-final.
In Section~\ref{sec: an exponential bound} we gave a bound for arbitrary configurations $E$ that is
exponential only in $n$, thus yielding a polynomial bound for constant $n$.
We will be using these properties in our construction in this section. 
If $E$ is not near-final and $n$ is arbitrary, the general idea for achieving a polynomial bound
is similar to that from Sections~\ref{sec: polynomial - near final} and~\ref{sec: an exponential bound}:
we try to mix a pair of concentrations that are far apart,
to guarantee that we make a quick progress, as measured by the decrease of $\potential(E)$. 
(This is captured by Observation~\ref{obs: far-appart mix}, that will be used frequently and without
explicit reference.) Although this is not posible in each step, we (essentially)
show that for any $E$ there is a polynomial-length sequence of mix operations after which such a
far-apart pair will exist, yielding an overall polynomial bound on a perfect-mixing sequence.
Additionally, to obtain such a perfect-mixing sequence,
at each step we must mix pairs that are either $(\lambda)$-safe, 
for corresponding $\lambda\in\braced{I,I'}$, or near-final.

In this section, for better clarity, we chose to use a top-down presentation style, starting with the statement
of the main theorem (Theorem~\ref{thm: perfect mixability polynomial} below) and its proof, while
all the necessary lemmas that cover different cases from that proof are presented later.
Also, throughout this section, $\gamma$ denotes a small integral constant; for
concreteness we can assume that $\gamma = 2$.


\begin{theorem} \label{thm: perfect mixability polynomial}
	$E$ can be perfectly mixed with precision $0$ 
	in a polynomial number of mixing operations.
\end{theorem}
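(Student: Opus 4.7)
The plan is to combine the exponential-in-$n$ bound of Theorem~\ref{thm: perfect mixability upper bound} with a far-apart mixing strategy that is tailored to the invariant. If $n$ is bounded by a constant (as suggested in the introduction of Section~\ref{sec: polynomial proof}, a threshold around $n\le 22$ works), then Theorem~\ref{thm: perfect mixability upper bound} already yields a mixing sequence of length $(8n^3 s(E))^n + 144n^3 s^2(E)$, which is polynomial in $s(E)$ because the exponential factor is a constant. So henceforth we may assume $n$ is larger than that constant threshold.

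For large $n$, the goal is to exhibit, for every $E$ that satisfies Invariant~$(\lambda)$ and is not near-final, a mixing operation that (i) either preserves Invariant~$(\lambda)$ or turns $E$ into a near-final configuration, and (ii) involves two concentrations $x,y\in E$ with $|x-y|\ge \diameter(E)/\gamma$ for a small integer constant $\gamma$. By Observation~\ref{obs: far-appart mix}(b) each such mixing drops $\potential(E)$ by a factor of at least $1-1/(2\gamma^2 n)$, so Observation~\ref{obs: far-appart constant mix} tells us that $O(n)$ of them cut $\potential(E)$ in half. Combined with the bound $\log\potential_{E,0}\le 2s(E)$ from Observation~\ref{obs: logPotential <= 2s(A)}, this gives $O(n\, s(E))$ mixings before $E$ becomes near-final, after which Theorem~\ref{thm: E near-final} finishes the job in $O(n^3 s^2(E))$ additional steps.

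To produce the far-apart $(\lambda)$-safe (or near-final) pair I would revisit the case analyses of Lemmas~\ref{lem: m = 3, preserves invariant}, \ref{lem: m = 4, preserves invariant}, \ref{lem: m => 5, preserves invariant} (and their analogues for $n=5,6$), using the following splitting device. Partition the interval $[\min(E),\max(E)]$ into $\gamma$ equal-length sub-intervals; some sub-interval contains no concentration of $E$ in its interior, and this cut divides $E$ into non-empty multisets $E_1,E_2$ with $\min(E_2)-\max(E_1)\ge \diameter(E)/\gamma$. For $n$ large, both $E_1$ and $E_2$ still contain plentiful same-parity droplets, and by Lemma~\ref{lem: at most one mod-unsafe pairs}(b) at most $O(\log n)$ droplets are involved in $\barp$-unsafe pairs. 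The intended claim is that under these conditions the quantifier ``there exists an $(\lambda)$-safe pair $(e_i,e_j)$'' supplied by each of the invariant-preservation lemmas can be strengthened to ``there exists such a pair with $e_i\in E_1$ and $e_j\in E_2$''. Such a cross-cut pair automatically satisfies $|e_i-e_j|\ge \diameter(E)/\gamma$, so mixing it delivers the required halving progress while preserving Invariant~$(\lambda)$.

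The main obstacle will be the case-by-case bookkeeping: each branch of the invariant-preservation lemmas restricts the eligible mixing pair (same parity, sufficient multiplicity, $\barp$-safety, avoidance of a specific produced value $e$), and we must verify, for each branch and for $n$ above the chosen constant threshold, that these restrictions still leave room to choose one endpoint in $E_1$ and the other in $E_2$. The residual ``pathological'' configurations, where the splitting device is blocked (e.g.\ when one of $E_1,E_2$ has too few droplets of some parity, or when the $\barp$-unsafe pair straddles the cut), are exactly those that force $n$ to be small; these are absorbed into the small-$n$ regime handled by Theorem~\ref{thm: perfect mixability upper bound}. Assembling the halving bound with the near-final finish of Theorem~\ref{thm: E near-final} then yields a perfect-mixing sequence for $E$ of length polynomial in $s(E)$, proving Theorem~\ref{thm: perfect mixability polynomial}.
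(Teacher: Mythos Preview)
Your plan has two concrete gaps that prevent it from going through as written.

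First, the splitting device is broken for constant $\gamma$. You assert that after cutting $[\min(E),\max(E)]$ into $\gamma$ equal sub-intervals, ``some sub-interval contains no concentration of $E$ in its interior''. This is only guaranteed when the number of intervals exceeds the number of distinct concentrations; with a fixed constant $\gamma$ and $\nofconcentrations{E}$ arbitrarily large, every sub-interval can contain a concentration. (The paper's Lemma~\ref{lem: E-mixed} uses this device with $k=\nofdroplets{A}$ intervals precisely so the pigeonhole argument works, but that is what produces the exponential-in-$k$ recursion you are trying to avoid.)

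Second, and more seriously, even granting a gap of width $\diameter(E)/\gamma$ between $E_1$ and $E_2$, there need not be any same-parity cross-cut pair at all, regardless of how large $n$ is. Take $E_{even}$ clustered near $0$ and $E_{odd}$ clustered near some large $M$: the only same-parity pairs live entirely inside one cluster, so no $(\lambda)$-safe mixing can have difference $\ge\diameter(E)/\gamma$. This configuration is not ``pathological small-$n$''; it persists for arbitrary $n$. The paper's proof is organised around exactly this obstacle: it splits by \emph{parity} rather than by value range, distinguishes whether the majority parity class $E_\pi$ has large or small diameter (Cases~2 and~3), and in the small-diameter case invokes a separate polynomial bound (Lemma~\ref{lem: E-mix c_pi n >= 22}) to $(\textrm{I})$-mix $E_\pi$ until a concentration of the opposite parity is produced, which then creates the far-apart pair. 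Your outline is missing this mechanism entirely; strengthening the invariant-preservation lemmas to ``one endpoint in $E_1$, one in $E_2$'' simply fails in the parity-separated case, and no threshold on $n$ rescues it.
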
 

\begin{proof}
	If $n = 4$ (and, more generally, if $n$ is a power of $2$), we can simply use 
	Theorem~\ref{them: n power of two} and we are done.
	Thus, we can assume that $n\geq 5$.	
	If $n<22$, we can perfectly mix $E$ using Theorem~\ref{thm: perfect mixability upper bound},
	where the bound on the length of the mixing sequence is polynomial if $n$ is constant.
	Thus in the rest of the argument below we assume that $n \geq 22$.
	
	We show that as long as $E$ satisfies Invariant~(I) (which is the invariant that applies when $n\ge 22$),
	we can find a polynomial-length sequence of mixing operations that will
	decrease $\potential(E)$ by a factor of $1-\Omega(1/n)$. This will be sufficient to prove the theorem.
	
	So assume that $E$ is not yet perfectly mixed and let $\delta=\diameter(E)$. Let $\pi$ be the parity
	for which $\nofdroplets{E_{\pi}}\geq\nofdroplets{E_{\barpi}}$.
	(Recall that we say that $A\subseteq E$ is (I)-mixed if there is no (I)-safe pair in $A$,
	that is, there no pair of distinct concentrations in $A$ whose mixing preserves Invariant~(I) for $E$.
	We will often use this terminology when mixing pairs of concentrations in $E_\pi,E_\barpi\subseteq E$.)
	We consider several cases.
	
	
	\medskip\noindent
	\mycase{1} $\nofconcentrations{E_{\pi}}=1$.
	Let $a\in E_{\pi}$ (that is, $a$ is the only concentration in $E_{\pi}$). 
	Observe that $E_{\barpi}\neq\emptyset$. We have two sub-cases:
	
	\begin{description}

		\item\mycase{1.1} $\min(E_{\bar{\pi}}) < a < \max(E_{\bar{\pi}})$.
		In this case, per Lemma~\ref{lem: farm mix in E odd overlapping E even},
		there are at most two consecutive mixing operations 
		(involving either (I)-safe or near-final pairs)
		after which $\potential(E)$ decreases at least by a factor of $1 - 1/32n$.

		\item \mycase{1.2} $a < \min(E_{\bar{\pi}})$ or $a > \max(E_{\bar{\pi}})$. In this case,
		per Lemma~\ref{lem: far mix in E odd non-overlapping E even}, there is a sequence of
		at most $128ns(E)+1$ (I)-safe mixing operations after which
		$\potential(E)$ decreases at least by a factor of $1 - 1/32\gamma^2n$.
		
	\end{description}	

	
	\medskip\noindent
	\mycase{2} $\nofconcentrations{E_{\pi}}\geq 2$ and $\diameter(E_\pi)\geq\delta/\gamma$.
	In this case, per Lemma~\ref{lem: far mix in E even},	
	there is an (I)-safe pair in $E_{\pi}$ whose mixing decreases
	$\potential(E)$ at least by a factor of $1 - 1/8\gamma^2n$.
	
	
	\medskip\noindent
	\mycase{3} $\nofconcentrations{E_{\pi}}\geq 2$ and $\diameter(E_\pi)<\delta/\gamma$.
	Using Lemma~\ref{lem: E-mix c_pi n >= 22}, there is a sequence of at most
	$128ns(E)$ (I)-safe mixing operations in $E_\pi$, such that after this sequence
	$E_{\pi}$ is (I)-mixed. Let $E'$ be the configuration obtained from $E$
	after applying this sequence. We consider two sub-cases:	
	
	\begin{description}

		\item\mycase{3.1} $\nofdroplets{E'_{\pi}}\geq\nofdroplets{E'_{\bar{\pi}}}$.
		By Observation~\ref{obs: number of same-parity unsafe pairs} below and
		since $E'_{\pi}$ is (I)-mixed, we have that $\nofconcentrations{E'_{\pi}}=1$.		
		Hence, as in Case~$1$ above, there is a sequence of at most $128ns(E')+1$ mixing operations 
		(involving either (I)-safe or near-final pairs)
		after which $\potential(E')$ decreases at least by a factor of $1-1/32\gamma^2n$.
		Thus, $\potential(E)$ also decreases at least by a factor of $1 - 1/32\gamma^2n$.

		\item\mycase{3.2} $\nofdroplets{E'_{\pi}}<\nofdroplets{E'_{\bar{\pi}}}$.
		As $\gamma=2$, $\diameter(E'_\pi)<\delta/\gamma$, and 
		because	the mixing operations on $E_{\pi}$ produced at least one droplet 
		with parity $\bar{\pi}$, we have that 
		$\diameter(E'_{\bar{\pi}})>\delta/\gamma$.
		Hence, as in Case~$2$ above, 
		there is an (I)-safe pair whose mixing decreases
		$\potential(E')$ at least by a factor of $1 - 1/8\gamma^2n$.
		Thus, such mixing also decreases $\potential(E)$ at least by a factor of $1 - 1/8\gamma^2n$.				
	\end{description}
	
	Applying a sequence of mixing operations specified by the cases above
	results in a decrease of $\potential(E)$ at least by a factor of $1 - 1/32\gamma^2n$.
	Thus, by a simple extension to Observation~\ref{obs: far-appart constant mix},
	we obtain that after at most $32\gamma^2n$ such mixing sequences $\potential(E)$ decreases at least by half.
	It follows that after at most $32\gamma^2n\log\potential(E)$ of these mixing sequences (where $\potential(E)$
	denotes the initial potential value), $E$ becomes (I)-mixed, and,
	as $E$ satisfies Invariant~(I), there is a near-final pair in $E$ that we then mix to make $E$ near-final.
	Since $\log\potential(E)\le 2s(E)$, the length of this sequence is at most $64\gamma^2ns(E)$ (where $s(E)$ 
	represents the original size of $E$).
	Each such mixing sequence involves at most $256ns(E)+1$ mixing operations and,
	by~Theorem~\ref{thm: E near-final}, if $E$ is near-final then it can be perfectly mixed 
	by a sequence of at most $144n^3s^2(E)$	mixing operations. 
	Therefore the total number of mixing operations to perfectly mix $E$ is
	at most $2^{14}\gamma^2n^2s^2(E) + 64\gamma^2ns(E) + 144n^3s^2(E)$.
\end{proof}


\begin{observation}\label{obs: number of same-parity unsafe pairs}
	Assume that $E$ with $\nofdroplets{E}=n\geq 22$ satisfies Invariant~(I)
	and let $\pi$ be the parity for which $\nofdroplets{E_{\pi}}\geq\nofdroplets{E_{\bar{\pi}}}$.
	There is at least one droplet in $E_{\pi}$ such that, when paired with any other
	droplet in $E_{\pi}$, the resulting pair is (I)-safe.
\end{observation}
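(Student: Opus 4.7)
The plan is to explicitly construct the droplet $a \in E_\pi$, splitting on the maximum multiplicity of concentrations in $E_\pi$.

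First I would handle the case where some $a \in E_\pi$ has multiplicity $f_a \geq 3$ in $E$: this $a$ itself works, because for every other $b \in E_\pi$, Lemma~\ref{lem: at most one mod-unsafe pairs}(c) gives that $(a,b)$ is $\barp$-safe (preserving (I.2)), while after mixing, $a$ retains multiplicity $f_a - 1 \geq 2$ and $c = \half(a+b) \neq a$ is also a non-singleton in $E'$, so (I.1) is preserved as well.

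Otherwise every concentration in $E_\pi$ has multiplicity at most $2$, which forces $\nofconcentrations{E_\pi} \geq \nofdroplets{E_\pi}/2 \geq n/4 \geq 6$ since $n \geq 22$. I would then bound the set $B \subseteq E_\pi$ of ``bad'' concentrations---those $a \in E_\pi$ for which some $(a,b)$ with $b \in E_\pi \setminus \braced{a}$ fails to be (I)-safe---and show $|B| < \nofconcentrations{E_\pi}$. Bad pairs are of two types: $\barp$-unsafe pairs, of which there are at most $s$ by Lemma~\ref{lem: at most one mod-unsafe pairs}(a) (contributing at most $2s$ concentrations to $B$); and pairs violating (I.1). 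The structural observation for the latter is that after mixing, every non-singleton of $E$ outside $\braced{a,b}$ remains non-singleton in $E'$, so (I.1) can fail only when $N(E) \subseteq \braced{a,b,c}$, which constrains $|N(E)| \leq 3$ and imposes rigid multiplicity conditions ($f_a, f_b \leq 2$ for the elements of $N(E) \cap \braced{a,b}$).

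Three sub-cases on $|N(E)|$ finish the argument. If $|N(E)| \geq 4$, there are no (I.1) failures and $|B| \leq 2s$, which is smaller than $\nofconcentrations{E_\pi} \geq n/4$ using $n \geq 22$ combined with the inherent lower bound $n \geq 2\prod_{i=1}^s p_i$. If $|N(E)| = 3$ with $N(E) = \braced{x,y,z}$, failure forces $\braced{a,b,c} = \braced{x,y,z}$, and a short computation shows at most one such unordered pair can exist (two of the equations $2z=x+y$, $2y=x+z$, $2x=y+z$ cannot simultaneously hold), so (I.1) contributes at most $2$ to $|B|$; the extra droplets consumed by up to three doubletons in $E_\pi$ yield $\nofconcentrations{E_\pi} \geq 8$, beating $2s+2$. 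If $|N(E)| = 2$ with $N(E) = \braced{x,y}$, enumerating the failure sub-cases shows the (I.1)-bad concentrations in $E_\pi$ lie in $\braced{x, y, 2x-y, 2y-x}$, giving at most $4$; meanwhile $\nofconcentrations{E_\pi} \geq 9$ when $x,y \in E_\pi$ (and strictly larger otherwise), beating $2s+4$.

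The main obstacle is the delicate bookkeeping in the secondary case: one must simultaneously enumerate the rigid structural conditions under which (I.1) can fail and exploit the fact that non-singletons in $E_\pi$ consume extra droplets to boost $\nofconcentrations{E_\pi}$ enough to dominate $|B|$. The threshold $n \geq 22$ is precisely what lets these competing counts balance across all admissible values of the odd-prime-factor count $s$.
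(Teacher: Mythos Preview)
Your argument is correct, but the route differs from the paper's. The paper proves this in three lines by invoking a separate result (Observation~\ref{obs: decrease number of non-singletons}) that at most $6$ droplets can be involved in mixings that reduce the non-singleton count to one; adding the $2\lfloor\log_3 n\rfloor$ bound on droplets in $\barp$-unsafe pairs from Lemma~\ref{lem: at most one mod-unsafe pairs}(a) gives at most $2\lfloor\log_3 n\rfloor + 6 < n/2 \le \nofdroplets{E_\pi}$ bad droplets, so some droplet in $E_\pi$ is universally (I)-safe. Your decomposition is different: you first dispatch the case where some concentration in $E_\pi$ has multiplicity $\ge 3$ by directly exhibiting the safe droplet---a shortcut the paper does not single out---and in the remaining case you work at the concentration level, bounding $B$ via a structural case analysis on $|N(E)|$ that in effect reproves Observation~\ref{obs: decrease number of non-singletons} inline. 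The paper's version is shorter and packages the (I.1) analysis as a reusable black box; yours is more self-contained and constructive. One caution: the numerical inequalities you assert (e.g.\ $2s+4 < \nofconcentrations{E_\pi}$) do hold for all $n\ge 22$, but you should make the verification for general $s$ explicit (using $s \le \lfloor\log_3 n\rfloor$ or the primorial lower bound on $n$) rather than implicitly checking only the boundary case.
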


\begin{proof}
	Lemma~$\ref{lem: at most one mod-unsafe pairs}$(a) and the number of distinct odd 
	prime factors of $n$ being less than $\log_3{n}$ imply that $E$ has at most
	$2\lfloor\log_3{n}\rfloor$ droplets that are $\barp$-unsafe when paired with other droplets in $E$.
	This, and Observation~$\ref{obs: decrease number of non-singletons}$ below
	give that the number of droplets that, when mixed with other droplets in $E$
	violate Invariant~(I), is at most 
	$2\lfloor\log_3{n}\rfloor+6< n/2\leq\nofdroplets{E_{\pi}}$, since $n\geq22$.
	Thus, the lemma holds.
\end{proof}

\begin{observation}\label{obs: decrease number of non-singletons}
Assume that $E$ with $\nofdroplets{E}=n\geq 7$ satisfies Invariant~(I).
The number of droplets involved in mixing operations that decrease the number of
non-singletons in $E$ down to one is at most $6$.
\end{observation}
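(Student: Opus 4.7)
The plan is a short case analysis on the number $N$ of distinct non-singleton concentrations in $E$. Invariant~(I.1) gives $N\ge 2$, and the goal is to bound, counted with multiplicities, the total set of droplets $e\in E$ that appear in some mixing that drives $N$ below~$2$.

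The key bookkeeping device will be the following: mixing two distinct concentrations $e_i,e_j\in E$ changes $N$ by $c-d$, where $d\in\braced{0,1,2}$ counts how many of $e_i,e_j$ are non-singletons of multiplicity \emph{exactly}~$2$ (these become singletons, i.e.\ are ``destroyed'' from the non-singleton count), and $c\in\braced{0,1}$ equals $1$ iff $\half(e_i+e_j)$ is not already a non-singleton in $E$. Since $d\le 2$, the constraint $d-c\ge N-1$ immediately restricts bad mixings to $N\in\braced{2,3}$.

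For $N=3$ with non-singletons $a,b,c$, the only bad option is $(d,c)=(2,0)$: mix two multiplicity-$2$ non-singletons whose average coincides with the third. But any two of the equations $\half(a+b)=c$, $\half(a+c)=b$, $\half(b+c)=a$ combine to force $a=b$ or $b=c$, so at most one can hold; this yields a single bad pair, contributing $2+2=4$ droplets.

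For $N=2$ with non-singletons $a,b$, the relation $d-c=1$ splits into $(d,c)=(2,1)$ and $(d,c)=(1,0)$. The first branch is exactly the mixing $(a,b)$ when $f_a=f_b=2$, contributing $2+2$ droplets. In the second branch, since $\half(a+b)\notin\braced{a,b}$, the non-destroyed partner must be a singleton $s\in E$ with $\half(\cdot+s)$ equal to the other non-singleton; this pins $s$ down to $2b-a$ (mixing $a$ when $f_a=2$) or to $2a-b$ (mixing $b$ when $f_b=2$). Thus every bad droplet lies in $\braced{a,a,b,b,\,2b-a,\,2a-b}$, and a routine check that the four concentrations $a,b,2b-a,2a-b$ are pairwise distinct when they all lie in $E$ (each coincidence forces $a=b$) yields the claimed bound of at most~$6$. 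The mildly subtle point worth verifying is that no other mixing type (two singletons, or a multiplicity-$\ge 3$ non-singleton with anything) can drop $N$ below~$2$; this is immediate, since in every such case $d=0$ and hence $c-d\ge 0$, so $N$ cannot decrease at all.
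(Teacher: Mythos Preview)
Your proof is correct and follows essentially the same approach as the paper's: a short case analysis on the structure of mixings that can drive the non-singleton count down to one. Your $(d,c)$ bookkeeping is a clean way to organize the cases --- arguably cleaner than the paper's two ad-hoc cases (non-singleton $+$ non-singleton versus non-singleton $+$ singleton), and your uniqueness argument for $N=3$ (at most one of the three averaging equations can hold) makes explicit a point the paper only sketches in a parenthetical remark. One minor wording slip: in your final sentence, ``a multiplicity-$\ge 3$ non-singleton with anything'' should exclude doubleton partners (since pairing with a doubleton gives $d=1$, not $d=0$); but that pairing is already handled in your $(1,\cdot)$ branch, so the argument is unaffected.
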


\begin{proof}
	First of all, if the number of non-singletons in $E$ is more than three, 
	then no mixing decreases the number of non-singletons down to one;
	similarly when mixing a non-singleton with frequency higher than two.
	Additionally, mixing two singletons does not decrease the number of non-singletons.	
	Now, $E$ satisfying Invariant~(I) implies that there are 
	at least two non-singletons $a,b\in E$.	
	We consider two types of situations where a mixing decreases the number 
	of non-singletons in $E$ down to one:
	
	\medskip\noindent\mycase{1} Mixing two non-singletons, say $a$ and $b$.
	This can happen when the frequency of both $a$ and $b$ is two each,
	leading to a total of $4$ droplets involved.
	(There could be another non-singleton $e=\half(a+b)$ in $E$, however no mixing 
	involving $e$ would decrease the number of non-singletons down to one
	because either $a$ or $b$ would remain non-singleton after the mixing.)	
	
	\medskip\noindent\mycase{2} Mixing a non-singleton with a singleton.
	(This can happen when $E$ has exactly two non-singletons, $a$ and $b$ respectively.)
	Let $a$ and $c$ be the non-singleton and singleton, respectively.	
	If $a$ is a doubleton and $b=\half(a+c)$, then mixing $a$ and $c$ decreases
	the number of non-singletons down to one.
	Similarly, if $b$ is a doubleton and $a=\half(b+d)$, for some singleton $d\in E$,
	then mixing $b$ and $d$ decreases the number of non-singletons down to one.
	Therefore, the total number of droplets (excluding $a$ and $b$, which were already
	counted in Case~$1$ above) is at most $2$.
	
	\medskip
	Therefore, the number of droplets involved in mixing operations that decrease 
	the number of non-singletons down to one is at most $6$.
\end{proof}


\begin{lemma}\label{lem: farm mix in E odd overlapping E even}
	Assume that $E$ with $\nofdroplets{E}=n\geq 7$ satisfies Invariant~(I).
	Also, assume that $\nofdroplets{E_{\pi}}\geq\nofdroplets{E_{\bar{\pi}}}\geq 1$ and
	that $\nofconcentrations{E_{\pi}} = 1$, with $a\in E_{\pi}$.
	If $\min(E_{\bar{\pi}})<a<\max(E_{\bar{\pi}})$ then there exist at most
	two consecutive mixing operations (involving either (I)-safe or near-final pairs) that
	decrease $\potential(E)$ at least by a factor of $1-1/32n$.
\end{lemma}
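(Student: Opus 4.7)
Set $\ell = \min(E_{\bar{\pi}})$, $h = \max(E_{\bar{\pi}})$, $m = \half(\ell+h)$, and $\delta = \diameter(E)$. The hypothesis $\ell < a < h$ gives $\delta = h - \ell$, and $\ell, h \in E_{\bar{\pi}}$ share parity $\bar{\pi}$, so $m$ is an integer and $(\ell, h)$ is a natural same-parity, far-apart mixing pair. My plan is to mix $\ell$ with $h$ whenever the result is (I)-safe or near-final (one operation), and otherwise to perform a short preparatory mix followed by a far-apart mix (two operations). The required decrease follows from Observation~\ref{obs: far-appart mix}(b): any same-parity (I)-safe pair at distance $\geq \delta/4$ yields, with $\gamma = 4$, a decrease factor of $1 - 1/(32n)$; the pair $(\ell, h)$ at distance $\delta$ yields an even stronger bound, since $\potential(E) \leq n\delta^2$.

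First I would try mixing $(\ell, h)$ directly and verify Invariant~(I) on the resulting $E'$. Condition~(I.0) is automatic since mixing preserves the average. Condition~(I.2) holds whenever at least one of $\ell, h$ is a non-singleton, by Lemma~\ref{lem: at most one mod-unsafe pairs}(c). Condition~(I.1) holds whenever $m \neq a$, since $a$ retains multiplicity at least $4$ (as $\nofdroplets{E_{\pi}} \geq n/2 \geq 4$) and $m$ becomes a new doubleton distinct from $a$. If some combination of these yields (I)-safety, or if $E'$ happens to be near-final, a single operation suffices.

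In the remaining cases either both $\ell$ and $h$ are singletons (so $(\ell,h)$ might be $\barp$-unsafe), or $m = a$ (so $(\ell, h)$ might fail (I.1)). Now Invariant~(I.1) on $E$ forces a second distinct non-singleton, which must lie in $E_{\bar{\pi}}$ since $\nofconcentrations{E_{\pi}} = 1$; call it $b$. From $|b - \ell| + |b - h| \geq |h - \ell| = \delta$ we get that at least one of the two distances is $\geq \delta/2$; let $c \in \braced{\ell, h}$ achieve this. Then $(b, c)$ is a same-parity pair at distance $\geq \delta/2$, and is $\barp$-safe by Lemma~\ref{lem: at most one mod-unsafe pairs}(c) because $b$ is a non-singleton. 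If mixing $(b, c)$ is itself (I)-safe or near-final, we are done in one operation; otherwise I would use $(b, c)$ as the preparatory step, and then mix one of the far-apart same-parity pairs in the resulting configuration -- either involving the fresh doubleton $\half(b+c)$ or the surviving extremes -- as the second step. In either option the combined potential decrease is at least $(\delta/2)^2/2 = \delta^2/8 \geq \potential(E)/(8n)$, comfortably exceeding $\potential(E)/(32n)$.

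The main obstacle is the subcase analysis needed to certify that the preparatory mix plus the follow-up mix preserve Invariant~(I) throughout, tracking in particular which concentrations retain multiplicity at least $2$ as droplets are consumed, and that their combined mixing distance is never smaller than $\delta/4$. The most delicate situation arises when $m = a$ and $b$ lies close to the midpoint $m$, because then a naive mix of $(b, c)$ could again produce a configuration with only one non-singleton; I would resolve this by choosing $c$ on the far side of $b$ from $a$, so that the doubleton $\half(b+c)$ is distinct from $a$ and serves as the needed second non-singleton for condition~(I.1).
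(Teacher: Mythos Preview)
Your overall strategy---try a far-apart same-parity pair, and if it is (I)-safe or near-final, mix it; otherwise fall back to a case analysis---matches the paper's. However, there is a genuine gap in your fallback.

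You write: ``If mixing $(b,c)$ is itself (I)-safe or near-final, we are done in one operation; otherwise I would use $(b,c)$ as the preparatory step.'' But the lemma requires \emph{every} mixing operation to be (I)-safe or near-final, so you cannot use a pair that is neither. This situation does arise: if $b$ is the unique non-singleton in $E_{\bar\pi}$, $b$ is a doubleton, $\half(b+c)=a$, and $\nofconcentrations{E}\ge 4$, then after mixing $(b,c)$ the only non-singleton is $a$ (so (I.1) fails), yet $E'$ need not be near-final. Your proposed fix---replace $c$ by the extreme on the far side of $b$ from $a$---does guarantee $\half(b+c')\neq a$ and hence (I.1), but you give no lower bound on $|b-c'|$; in fact $b$ can be arbitrarily close to that extreme, so this single mix need not yield the required $1-1/32n$ decrease, and you have not explained what the second mix would then be or why it would be (I)-safe. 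Your fix is also phrased only for the sub-case $m=a$, but $\half(b+c)=a$ can occur when $m\neq a$ as well.

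The paper handles exactly this obstruction differently. After deducing $\half(b+c)=a$ and that $b$ is a doubleton, it does not mix $(b,c)$. If $\nofconcentrations{E}=3$ it shows the mix is actually near-final. If $\nofconcentrations{E}\ge 4$ it picks a \emph{third} concentration $c'\in E_{\bar\pi}$ strictly between the extremes, mixes $(b,c')$ first (this is (I)-safe because the new doubleton $\half(b+c')$ is distinct from $a$), and then, depending on the parity of the output and its position relative to $a$, performs a second mix that is (I)-safe and has distance at least $\delta/4$. That two-step detour is what certifies both the invariant and the distance bound simultaneously; your sketch is missing this ingredient.
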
   

\begin{proof}
	Let $\delta=\diameter(E)$ before any mixing operation.
	Assume without loss of generality that $\pi=even$.
	Since $E$ satisfies Invariant~(I), there is at least one non-singleton $b\in E_{odd}$.
	Either $\min(E_{odd})$ or $\max(E_{odd})$ is furthest from $b$, so
	assume without loss of generality that $\min(E_{odd})$ is furthest from $b$; 
	$|b-\min(E_{odd})|\geq\delta/2$.	
	If pair $(b,\min(E_{odd}))$ is (I)-safe, then we are done.
	So, assume otherwise, namely that pair $(b,\min(E_{odd}))$ is not (I)-safe;
	$a=\half(b+\min(E_{odd}))$ and $b$ (which is a doubleton) are non-singletons in $E$.
	(There could be at most one more non-singleton in $E$, namely $\min(E_{odd})$ strictly doubleton;
	otherwise $(b,\min(E_{odd}))$ would be in fact (I)-safe.)
	Consider the following cases:
	
	\medskip\noindent\mycase{1} $b = \max(E_{odd})$.
	We further consider the following two sub-cases based on the number
	of distinct concentrations in $E$:
	\begin{description}
		\item\mycase{1.1} $\nofconcentrations{E}=3$.
		Then $E=\braced{\nofdroplets{E_{odd}}-2:\min(E_{odd}),\nofdroplets{E_{even}}:a,2:b}$,
		with $\nofdroplets{E_{odd}}\leq 4$.
		If $\nofdroplets{E_{odd}}=3$, then $\min(E_{odd})$ is a singleton, and, 
		after the mixing, $\nofconcentrations{E}=2$ holds,
		which implies that $E$ is now near-final, by Lemma~\ref{lem: mixability m = 2}
		and thus $(b,\min(E_{odd}))$ is actually a near-final pair.		
		Otherwise, $\nofdroplets{E_{odd}}=4$ and $\min(E_{odd})$ is a doubleton.
		This trivially implies that after (and before) the mixing $E$ is near-final,
		so $(b,\min(E_{odd}))$ is a near-final pair.
		(Recall that $|b-\min(E_{odd})|\geq\delta/2$, so the lemma holds
		by Observation~\ref{obs: far-appart mix}.)

		\item\mycase{1.2} $\nofconcentrations{E}\geq 4$.
		This implies that there is $c\in E_{odd}$ such that $\min(E_{odd})<c<b$;
		mixing $b$ and $c$ produces a non-singleton other than $a$,
		which preserves Invariant~(I), so $(b,c)$ is (I)-safe.
		We analyze the following sub-cases:
		\begin{description}
			\vspace{-0.03in}
			\item\mycase{1.2.1} $c<a$.
			Since $|b-c|\geq\delta/2$, mixing $b$ and $c$ decreases $\potential(E)$ 
			at least by a factor of $1-1/8n$, so we mix them.
			
			\item\mycase{1.2.2} $c>a$.
			Let $d=\half(b+c)$ be the output of the mixing between $b$ and $c$.
			If $d$ is odd, then mixing $d$ and $\min(E_{odd})$ decreases $\potential(E)$ 
			at least by a factor of $1-1/8n$. (Pair $(d, \min(E_{odd}))$ is (I)-safe
			because its mixing produces a non-singleton other than $a$.)
			Otherwise, $d$ is even and, as $|d-a|\geq\delta/4$, 
			mixing $a$ and $d$ decreases $\potential(E)$ at least by a factor of $1-1/32n$.
			(Note that $(a,d)$ is (I)-safe because, after its mixing, $a$ remains non-singleton.)
		\end{description}
		
	\end{description}

	\noindent\mycase{2} $b < \max(E_{odd})$.
	This is similar to Case~$1.2.2$ above under the assumption that $b < \max(E_{odd})$
	and using $c=\max(E_{odd})$.
\end{proof}


\begin{lemma}\label{lem: far mix in E odd non-overlapping E even}
	Assume that $E$ with $\nofdroplets{E}=n\geq 7$ satisfies Invariant~(I).
	Also, assume that $\nofdroplets{E_{\pi}}\geq\nofdroplets{E_{\bar{\pi}}}\geq1$ and
	that $\nofconcentrations{E_\pi} = 1$ with $a\in E_\pi$.
	If either $a < \min(E_{\bar{\pi}})$ or $a>\max(E_{\bar{\pi}})$,	then there exists
	a mixing sequence (of (I)-safe pairs) of length at most $128ns(E) + 1$
	that decreases $\potential(E)$ at least by a factor of $1-1/32\gamma^2n$, 
	for constant $\gamma\in\posintegers$.
\end{lemma}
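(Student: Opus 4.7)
The plan is as follows. By symmetry (e.g., negating all concentrations if $a < \min(E_{\barpi})$), we may assume without loss of generality that $\pi = $ even and $a > \max(E_{\barpi})$; set $\delta = \diameter(E) = a - \min(E_{\barpi})$. The argument splits into two regimes according to whether $\diameter(E_{\barpi}) \ge \delta/\gamma$ (the ``wide'' regime) or $\diameter(E_{\barpi}) < \delta/\gamma$ (the ``narrow'' regime).

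In the wide regime, Invariant~(I.1) combined with $\nofconcentrations{E_\pi} = 1$ forces $E_{\barpi}$ to contain at least one non-singleton $c$. Whichever of $\min(E_{\barpi}), \max(E_{\barpi})$ is farther from $c$ lies at distance at least $\diameter(E_{\barpi})/2 \ge \delta/(2\gamma)$. By Lemma~\ref{lem: at most one mod-unsafe pairs}(c) this pair is $\barp$-safe, and a short case analysis, paralleling Case~1.2 of Lemma~\ref{lem: farm mix in E odd overlapping E even}, preserves~(I.1) after mixing (the potentially tricky sub-case being $\nofdroplets{E_\pi} = 1$ with $E_{\barpi}$ having exactly two non-singletons both of multiplicity two, which is handled by swapping which endpoint is mixed so as not to drain both non-singletons at once). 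A single (I)-safe mix then reduces $\potential(E)$ by a factor of at least $1 - 1/(8\gamma^2 n)$ via Observation~\ref{obs: far-appart mix}(b) applied with $2\gamma$, comfortably beating the target $1 - 1/(32\gamma^2 n)$.

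In the narrow regime, $a - \max(E_{\barpi}) > \delta(\gamma-1)/\gamma \ge \delta/\gamma$ (for $\gamma \ge 2$), so any even droplet $e$ whose value lies in $[\min(E_{\barpi}), \max(E_{\barpi})]$ satisfies $|a-e| \ge \delta/\gamma$, and one closing mix $(a,e)$ would decrease $\potential(E)$ by a factor of at least $1 - 1/(2\gamma^2 n)$ via Observation~\ref{obs: far-appart mix}(b). To fabricate such an $e$, apply Lemma~\ref{lem: E-mix c_pi n >= 22} to $E_{\barpi}$ in place of $E_\pi$ (the proof goes through symmetrically), obtaining a sequence of at most $128 n s(E)$ (I)-safe mixings internal to $E_{\barpi}$. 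Interrupt the sequence at the first mix whose output is even, take that output as $e$, and perform the closing mix $(a, e)$; the total is at most $128 n s(E) + 1$ operations.

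The main obstacle is to show (i) that an even droplet is indeed produced before the (I)-mix of $E_{\barpi}$ terminates and (ii) that the closing mix $(a,e)$ is itself (I)-safe. For (i), if every internal output were odd the mixings would be same-parity-preserving on $E_{\barpi}$, and a Lemma~\ref{lem: same-parity mixability}-style argument would drive $\nofconcentrations{E_{\barpi}}$ down to one in at most $4 n s(E)$ steps, forcing $\nofconcentrations{E} = 2$; by Lemma~\ref{lem: mixability m = 2} this would make $E$ near-final, contradicting the fact that the enclosing Theorem~\ref{thm: perfect mixability polynomial} invokes this lemma only while $E$ is not yet near-final. For (ii), $\barp$-safety of $(a,e)$ follows from Lemma~\ref{lem: at most one mod-unsafe pairs}(c) when $\nofdroplets{E_\pi} \ge 2$ (so that $a$ is a non-singleton) and from the single-$\barp$-unsafe-pair count of Lemma~\ref{lem: at most one mod-unsafe pairs}(a) when $\nofdroplets{E_\pi} = 1$; preservation of (I.1) is ensured by the freshly created doubleton $\half(a+e)$ together with a non-singleton that Invariant~(I.1) continues to pin inside $E_{\barpi}$ throughout the internal (I)-mix process.
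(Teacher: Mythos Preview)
Your two-regime split (wide: $\diameter(E_{\barpi})$ large, do one mix; narrow: $\diameter(E_{\barpi})$ small, (I)-mix $E_{\barpi}$ until an opposite-parity droplet appears, then do one closing mix with $a$) is exactly the paper's approach, modulo the inessential choice of threshold $\delta/\gamma$ versus the paper's $\delta/(2\gamma)$.

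Where your execution diverges is that you never use the hypothesis $\nofdroplets{E_\pi}\ge\nofdroplets{E_{\barpi}}$. Combined with $n\ge 7$ this forces $\nofdroplets{E_\pi}\ge\lceil n/2\rceil\ge 4$, so $a$ has multiplicity at least $4$. This single observation eliminates almost all of your case work: your ``potentially tricky sub-case $\nofdroplets{E_\pi}=1$'' is vacuous (it cannot occur), and for the closing mix $(a,e)$ the paper simply notes that $a$ keeps multiplicity $\ge 3$ afterwards, so $a$ and the new doubleton $\half(a+e)$ witness (I.1), while Lemma~\ref{lem: at most one mod-unsafe pairs}(c) gives $\barp$-safety. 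Likewise in the wide regime, the paper exploits that $a$ lies strictly outside the interval spanned by $E_{\barpi}$: the midpoint of any two values in $E_{\barpi}$ therefore cannot equal $a$, so (I.1) holds immediately after the mix---no case analysis ``paralleling Case~1.2 of Lemma~\ref{lem: farm mix in E odd overlapping E even}'' is needed.

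Your justification of obstacle~(i) has a genuine gap. You invoke a ``Lemma~\ref{lem: same-parity mixability}-style argument'' to conclude $\nofconcentrations{E_{\barpi}}\to 1$, but that lemma concerns \emph{farthest-apart} mixing, whereas Lemma~\ref{lem: E-mix c_pi n >= 22} performs (I)-safe mixing; being (I)-mixed does not by itself force a single concentration. You then patch this by appealing to the fact that ``Theorem~\ref{thm: perfect mixability polynomial} invokes this lemma only while $E$ is not yet near-final''---but a lemma must be provable from its stated hypotheses, not from its call site. The paper's argument is that Invariant~(I) together with the structural results of Section~\ref{subsec: proof for n greater-equal than 7} guarantees that (I)-safe pairs keep existing in $E_{\barpi}$ (any non-singleton there paired with any other droplet is $\barp$-safe by Lemma~\ref{lem: at most one mod-unsafe pairs}(c), and the midpoint differs from $a$ as above), so the process cannot stall before an opposite-parity droplet is produced.
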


\begin{proof}
	Let $\delta=\diameter(E)$ before any mixing operation.
	Assume without loss of generality that $\pi=even$ and that $a<\min(E_{odd})$.
	We analyze two sub-cases:
	
	\medskip\noindent\mycase{1} $\diameter(E_{odd})\geq\delta/2\gamma$.
	$E$ satisfying Invariant~(I) implies that there is a non-singleton $b\in E_{odd}$.
	Either $\min(E_{odd})$ or $\max(E_{odd})$ is furthest from $b$, so
	assume without loss of generality that $\min(E_{odd})$ is furthest from $b$.
	As $a<\half(b+\min(E_{odd}))$, pair $(b,\min(E_{odd}))$ is (I)-safe 
	(mixing $b$ and $\min(E_{odd})$	produces a non-singleton other than $a$)
	that satisfies $|b-\min(E_{odd})|\geq\delta/4\gamma$.
	Therefore, mixing $b$ and $\min(E_{odd})$ decreases $\potential(E)$ 
	at least by a factor of $1-1/32\gamma^2n$.
	
	\medskip\noindent\mycase{2} $\diameter(E_{odd})<\delta/2\gamma$.
	This implies that $|\min(E_{odd})-a|\geq\delta/2\gamma$.
	(I)-mix $E_{odd}$ using Lemma~$\ref{lem: E-mix c_pi n >= 22}$;
	$E$ satisfying Invariant~(I) (and our proof in 
	Section~\ref{subsec: proof for n greater-equal than 7}) 
	implies that an even concentration $b$ is eventually produced.
	Since $a$'s frequency is at least $\ceil{n/2}\geq 3$, pair $(a,b)$ is (I)-safe;
	$a$ remains non-singleton after the mixing.
	Additionally, since $b > \min(E_{odd})$, $|b-a|>\delta/2\gamma$ and thus
	mixing $a$ and $b$ decreases $\potential(E)$ at least by a factor of $1-1/8\gamma^2n$.
	Finally, Lemma~$\ref{lem: E-mix c_pi n >= 22}$ takes at most 
	$128ns(E)$ mixing operations, so the lemma holds.
\end{proof}


\begin{lemma}\label{lem: far mix in E even}
	Assume that $E$ with $\nofdroplets{E}=n\geq 22$ satisfies Invariant~$(I)$.
	Also, assume that $\nofdroplets{E_{\pi}}\geq\nofdroplets{E_{\bar{\pi}}}$
	and that $\nofconcentrations{E_{\pi}}\geq 2$.
	If $\diameter(E_\pi) \geq \diameter(E)/\gamma$, for constant $\gamma\in\posintegers$, 
	then there exists an (I)-safe pair whose mixing decreases $\potential(E)$ 
	at least by a factor of $1-1/8\gamma^2n$.
\end{lemma}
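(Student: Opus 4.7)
The plan is to identify a single ``universal'' (I)-safe partner $e^* \in E_\pi$ and mix it with the more distant of the two extremal concentrations of $E_\pi$. The key enabling fact is Observation~\ref{obs: number of same-parity unsafe pairs}: under the standing hypotheses ($n \geq 22$, $E$ satisfying Invariant~(I), and $\pi$ being the majority parity) there exists $e^* \in E_\pi$ such that the pair $(e^*, e)$ is (I)-safe for every other $e \in E_\pi$. Since both droplets lie in $E_\pi$, they automatically share parity, so condition~(I.0) poses no issue, and the observation takes care of (I.1) and (I.2) uniformly.

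The hypothesis $\nofconcentrations{E_\pi} \geq 2$ ensures $\min(E_\pi) \neq \max(E_\pi)$, so at least one of these two concentrations differs from $e^*$. I would let $e'$ be whichever of $\min(E_\pi), \max(E_\pi)$ is farther from $e^*$. By the triangle inequality,
\begin{equation*}
|e^* - e'| \;\geq\; \frac{\max(E_\pi) - \min(E_\pi)}{2} \;=\; \frac{\diameter(E_\pi)}{2} \;\geq\; \frac{\diameter(E)}{2\gamma}.
\end{equation*}
Applying Observation~\ref{obs: far-appart mix}(b) to $E$ with the pair $(e^*, e')$, using $2\gamma$ in place of the constant in that observation, then gives a decrease of $\potential(E)$ by a factor of at least $1 - 1/(2(2\gamma)^2 n) = 1 - 1/(8\gamma^2 n)$, which is exactly the claim.

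The only real obstacle is coupling (I)-safety with the demand for a sufficiently large gap: naively mixing $\min(E_\pi)$ with $\max(E_\pi)$ would maximize the gap but carries no (I)-safety guarantee. Observation~\ref{obs: number of same-parity unsafe pairs} is what decouples these two requirements, and the triangle-inequality step shows that picking the universal safe partner $e^*$ costs at most a factor of two in the distance, which is exactly the source of the extra factor of $4$ in the denominator (from $1/(2\gamma^2 n)$ down to $1/(8\gamma^2 n)$).
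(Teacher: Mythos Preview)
Your proof is correct and follows essentially the same approach as the paper: both arguments invoke Observation~\ref{obs: number of same-parity unsafe pairs} to obtain a ``universally safe'' concentration in $E_\pi$ and then pair it with whichever of $\min(E_\pi),\max(E_\pi)$ lies farther away, applying Observation~\ref{obs: far-appart mix}(b) with $2\gamma$ in place of $\gamma$. The only difference is cosmetic: the paper first checks whether $(\min(E_\pi),\max(E_\pi))$ itself is (I)-safe (yielding the stronger factor $1-1/2\gamma^2 n$) before falling back to the universal element, whereas you go directly to the universal element; since the lemma only requires the $1-1/8\gamma^2 n$ bound, your streamlined version suffices.
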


\begin{proof}
	Assume without loss of generality that $\pi=even$ and let $a=\min(E_{even})$ and $b=\max(E_{even})$.
	Mixing $a$ and $b$ decreases $\potential(E)$ at least by a factor of $1-1/2\gamma^2n$.
	So, if $(a,b)$ is (I)-safe then we are done.
	Instead, assume otherwise; namely, that $(a,b)$ is not (I)-safe.
	
	The above assumption, and Observation~$\ref{obs: number of same-parity unsafe pairs}$,
	imply that there is $c\in E_{even}$ with $c\notin\braced{a,b}$ 
	for which any pair involving $c$ is (I)-safe.
	Now, either $a$ or $b$ is furthest from $c$, so	assume without loss of generality 
	that $a$ is furthest from $c$.
	Then, $|c-a|\geq\diameter(E)/2\gamma$ and thus mixing $a$ and $c$ decreases 
	$\potential(E)$ at least by a factor of $1-1/8\gamma^2n$.
\end{proof}


\begin{lemma}\label{lem: E-mix c_pi n >= 22}
	Assume that $E$ with $\nofdroplets{E}=n\geq 7$ satisfies Invariant~$(I)$.
	Then, there exists a mixing sequence (of (I)-safe pairs) 
	of length at most $128ns(E)$ that (I)-mixes $E_\pi$.
\end{lemma}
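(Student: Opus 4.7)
The plan is to (I)-mix $E_\pi$ by repeatedly mixing the farthest-apart (I)-safe pair inside $E_\pi$, tracking progress through the potential $\potential(E_\pi) = \sum_{e\in E_\pi}(e-\average(E_\pi))^2$. Every mix within $E_\pi$ involves same-parity droplets, so condition~(I.0) is preserved automatically, and by Observation~\ref{obs: far-appart mix}(a) such a mix strictly decreases $\potential(E_\pi)$. Since $\log\potential_{E_\pi,0} \le 2s(E_\pi) \le 2s(E)$ by Observation~\ref{obs: logPotential <= 2s(A)} and Observation~\ref{obs: psi(A') <= psi(A) for A' subseteq A}, only $O(ns(E))$ ``substantial'' mixes should be needed if each one trims $\potential(E_\pi)$ by a factor of $1-\Omega(1/n)$.

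First I would bound the number of (I)-unsafe pairs available inside $E_\pi$. By Lemma~\ref{lem: at most one mod-unsafe pairs}(a) at most one pair is $p_r$-unsafe for each of the $s\le\log_3 n$ odd prime factors of $n$, and by Observation~\ref{obs: decrease number of non-singletons} at most six droplets can be responsible for a mixing that violates~(I.1); let $B\subseteq E_\pi$ be the (small) set of droplets that participate in \emph{some} (I)-unsafe pair. Next, I would invoke the farthest-apart strategy: mix the (I)-safe pair $(x,y)\in E_\pi$ maximizing $|x-y|$. Whenever this distance satisfies $|x-y|\ge\diameter(E_\pi)/\gamma$ for a fixed constant $\gamma$, Observation~\ref{obs: far-appart mix}(b) gives a $1-1/(2\gamma^2 n)$ multiplicative decrease in $\potential(E_\pi)$, and chaining these with Observation~\ref{obs: far-appart constant mix} yields the target bound $128ns(E)$ after adjusting constants.

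To produce such a far-apart (I)-safe pair even when the extremes of $E_\pi$ are contaminated by $B$, I would use the fact that $|B|$ is $O(\log n)$ while $|E_\pi|=\Omega(n)$: sort the elements of $E_\pi$, and observe that some element among the $|B|+1$ lowest and some element among the $|B|+1$ highest lie outside $B$; pairing these yields an (I)-safe pair of distance $\diameter(E_\pi)-O(\log n)\cdot\diameter(E_\pi)/|E_\pi|$, which is at least $\diameter(E_\pi)/\gamma$ for any fixed $\gamma>1$ once $n$ is large enough (the case of small $n\ge 7$ being absorbed into constants or handled by Theorem~\ref{thm: perfect mixability upper bound}). Once a mix produces an element of the opposite parity, it leaves $E_\pi$ entirely and simply decreases $|E_\pi|$, which only helps.

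The main obstacle will be the structural step of exhibiting the far-apart (I)-safe replacement when $\min(E_\pi)$ or $\max(E_\pi)$ itself participates in an (I)-unsafe pair: one must argue, using Lemma~\ref{lem: at most one mod-unsafe pairs}(c) and the bound on $|B|$, that a nearby substitute droplet is available whose pairing with the opposite extreme is simultaneously $\barp$-safe and preserves~(I.1). A second, more bookkeeping-style obstacle is to verify that the constant is small enough to yield precisely $128ns(E)$; I expect this to amount to choosing $\gamma=2$ and absorbing the $O(\log n)$ loss from $B$ into the slack already present in the factor $1-1/(32n)$ appearing in the parent theorem.
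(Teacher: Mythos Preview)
Your high-level plan---track $\potential(E_\pi)$ and repeatedly find an (I)-safe pair in $E_\pi$ whose separation is a constant fraction of $\diameter(E_\pi)$---is exactly what the paper does, and your bookkeeping via Observations~\ref{obs: far-appart mix}, \ref{obs: far-appart constant mix}, \ref{obs: logPotential <= 2s(A)} is fine.  The gap is precisely where you flag ``the main obstacle'': your pigeonhole argument for producing a far-apart safe pair does not go through.

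Concretely, your distance estimate $\diameter(E_\pi)-O(\log n)\cdot\diameter(E_\pi)/|E_\pi|$ implicitly assumes the $|B|$ droplets you skip near each extreme are within $O(|B|)$ \emph{average} gaps of the extreme; nothing prevents a single huge gap right after $\min(E_\pi)$ (e.g.\ $E_\pi=\{0,M,M{+}2,M{+}4,\dots\}$ with $0\in B$), which kills the bound.  The cleaner variant---pick any $x\in E_\pi\setminus B$ and pair it with whichever of $\min(E_\pi),\max(E_\pi)$ is farther---does give distance $\ge\diameter(E_\pi)/2$, but it needs $E_\pi\not\subseteq B$.  Since the lemma is stated for arbitrary $\pi$ (it is invoked in Lemma~\ref{lem: far mix in E odd non-overlapping E even} on the \emph{minority} parity) and only $n\ge 7$, you cannot assume $|E_\pi|>|B|$; and even when you can, $B$ changes after every mix, which your sketch does not track.

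The paper sidesteps the bad-set bookkeeping entirely with a structural two-step trick.  Writing $a=\min(E_\pi)$, $b=\max(E_\pi)$, it does a short case analysis on the non-singleton content of $E_\pi$.  The key move (your hint about Lemma~\ref{lem: at most one mod-unsafe pairs}(c) is the right one, but it needs to be used \emph{constructively}): if $(a,b)$ is not (I)-safe, first mix \emph{some} (I)-safe pair in $E_\pi$---possibly a close one---to manufacture a doubleton $x$; then $(x,a)$ or $(x,b)$ is automatically $\barp$-safe by Lemma~\ref{lem: at most one mod-unsafe pairs}(c), and a small amount of casework on which non-singletons already exist in $E$ shows one of these also preserves (I.1).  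One of the two has separation $\ge\diameter(E_\pi)/4$.  Thus every block of at most two (I)-safe mixes either produces a concentration of parity $\barpi$ (at most $|E_\pi|/2$ times total) or cuts $\potential(E_\pi)$ by a factor $1-1/32|E_\pi|$, which yields the $128ns(E)$ bound.  Replacing your pigeonhole paragraph by this two-mix case analysis closes the gap.
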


\begin{proof}
	Assume without loss of generality that $\pi = even$ and let $\nofdroplets{E_{even}}=n'$ and
	$\delta=\diameter(E_{even})$ before any mixing operation.
	If a mixing in $E_{even}$ produces droplets with an odd concentration,
	then these droplets are excluded from $E_{even}$ and included in $E_{odd}$;
	this can happen at most $n'/2$ times before $E_{even}$ becomes (I)-mixed.
	
	Assume that the number of (I)-safe pairs in $E_{even}$ is non-zero and 
	let $a=\min(E_{even})$ and $b=\max(E_{even})$.
	If an (I)-safe pair $x,y\in E_{even}$ satisfies $|x-y|\geq \delta/4$,
	then mixing $x$ and $y$ decreases 
	$\potential(E_{even})$ at least by a factor of $1-1/32n'$.
	It follows from Observation~$\ref{obs: far-appart constant mix}$
	that after at most $32n'$ such mixing operations, $\potential(E_{even})$
	decreases at least by a factor of $\half$.
	Hence, $E_{even}$ can be (I)-mixed after at most $32n'\log{\potential(E_{even})}$
	such mixing operations.
	
	We next show that if $E_{even}$ has not been (I)-mixed, then
	after at most two consecutive (I)-safe mixing operations
	either an odd concentration is produced 
	or $\potential(E_{even})$ decreases at least by a factor of $1-1/32n'$.
	Consequently, after at most $128ns(E)$ (I)-safe mixing operations, $E_{even}$ becomes (I)-mixed.
	
	So, if $(a,b)$ is (I)-safe then we are done; $|a-b|= \delta$
	and thus mixing $a$ and $b$ decreases $\potential(E)$ at least by a factor of $1-1/2n$.
	Instead, assume that $(a,b)$ is not (I)-safe and consider the following cases:
	
	\medskip\noindent\mycase{1} $E_{even}$ has only singletons.
	This implies that no mixing in $E_{even}$ decreases the number of non-singletons in $E$.
	So, let $c,d\in E_{even}$ be the furthest-apart (I)-safe pair
	and $x=\half(c+d)$ be the output of their mixing.
	If either $|c-d|\geq \delta/2$ or $x$ odd holds, then we are done.
	Otherwise, let $y\in\braced{a,b}$ be furthest from $x$.
	By the choice of $y$, we have that $|x-y|\geq \delta/2$,
	and as $x$ is a doubleton, $(x,y)$ is (I)-safe and thus we mix it.
	
	\medskip\noindent\mycase{2} $E_{even}$ has exactly one non-singleton $c$.
	Either $a$ or $b$ is furthest from $c$, so assume without loss of generality 
	that $b$ is furthest from $c$; $|b-c|\geq \delta/2$.
	If $(b,c)$ is (I)-safe then we mix it and we are done.
	Otherwise, $(b,c)$ is not (I)-safe, therefore there are exactly two non-singletons in $E$,
	$c\in E_{even}$ (which is a doubleton) and $e=(b+c)/2\notin E_{even}$.
	Since there is an (I)-safe pair in $E_{even}$, there is some $d\in E_{even}$
	such that $d\notin\braced{b,c}$. Then, $(c,d)$ is (I)-safe because 
	mixing $c$ and $d$ produces a non-singleton $x=\half(c+d)$ other than $e$.
	So, mix $c$ and $d$. If $|c-d|\geq\delta/2$ then we are done; similarly if $x$ is odd.
	Otherwise, $(b,x)$ is (I)-safe ($e\neq\half(b+x)$) and $|x-b|\geq\delta/4$, so we mix it.
	
	\medskip\noindent\mycase{3} $E_{even}$ has at least two non-singletons $c<d$.
	We analyze two sub-cases:
	\begin{description}
		\item{\mycase{3.1}} $a=c$.
		Since $(a,b)$ is not (I)-safe, then mixing $b$ and $c$ decreases
		the number of non-singletons down to one. This gives us two sub-cases:
		\begin{description}
			\item{\mycase{3.1.1}} $d=b$.
			$E_{even}$ having an (I)-safe pair implies that there is some $x\in E_{even}$ such that
			$x\notin\braced{c,d}$. Either $c$ or $d$ is furthest from $x$, so			
			assume without loss of generality that $d$ is furthest from $x$; $|d-x|\geq\delta/2$.
			Mixing $x$ and $d$ produces a non-singleton other than $c$, so $(x,d)$ is (I)-safe
			and we mix it.			

			\item{\mycase{3.1.2}} $d<b$.
			This implies that $d=\half(c+b)$, $b-d=\delta/2$ and $(b,d)$ is (I)-safe
			(mixing $b$ and $d$ produces a non-singleton other than $c$), so we mix $b$ and $d$.
		\end{description}

		\item{\mycase{3.2}} $a<c$.
		If $|a-c|\geq\delta/2$ then we mix $a$ and $c$ and we are done;
		$\half(a+c) < d$ implies that mixing $a$ and $c$ produces a non-singleton other than $d$,
		so $(a,c)$ is (I)-safe.
		Instead, let $|a-c|<\delta/2$ which implies $|b-c|\geq\delta/2$.
		Now, if pair $(b,c)$ is (I)-safe then we mix it and we are done, 
		so also assume that $(b,c)$ is not (I)-safe.
		(We cannot have $b=d$ because that would contradict $(a,b)$ not being (I)-safe;
		$|a-c|<\delta/2$ implies $c < \half(a+b)$, so mixing $a$ and $b$ would produce
		a non-singleton other than $c$.)
		Therefore, as $(b,c)$ is not (I)-safe, we have that $d = \half(b+c)$ and mixing $b$ and $d$ 
		produces a non-singleton other than $c$. Hence, $(b,d)$ is (I)-safe
		satisfying $|b-d|\geq\delta/4$, and we mix it.
	\end{description}

	All the pairs mentioned above have difference at least $\delta/4$, so their mixing decreases 
	$\potential(E_{even})$	at least by a factor of $1-1/32n'$.

\end{proof}


\section{Polynomial running time}
\label{sec: polynomial running time}



	
	In this section we address Theorem~\ref{thm: miscibility characterization}(c), namely
	the claim that in polynomial time we can test 
	whether a given configuration $C$ is perfectly mixable and, if so, we can
	compute a polynomial-size perfect-mixing graph for $C$ --- also in polynomial time.
	
	Let $C$ be the input configuration. To test whether $C$ is perfectly mixable we just
	need to test whether it satisfies Condition~{\MixCond}.
	That this can be done in polynomial
	time follows directly from Corollary~\ref{cor: mc prime power factors} 
	in Section~\ref{sec: some auxiliary lemmas}.
	To justify this, recall that
	the input size is $s(C) = \sum_{c\in C} \log(|c|+2)\ge n$.
	Thus the factoring of $n$ can be computed in time polynomial in the input size.
	As $n$ has at most $\log n$ distinct odd prime factors and
	each such prime factor has at most $\log{c_{max}}$ powers that are no bigger than $c_{max}$,
	the total number of $b$'s that need to be considered 
	is at most $\log{n}\log{c_{max}}$, which is polynomial in $s(C)$.
	Putting it all together (see the pseudo-code in Algorithm~\ref{alg:perfect mixability}), 
	we obtain that testing perfect-mixability can be indeed accomplished in polynomial time, 
	thus proving the first part of Theorem~\ref{thm: miscibility characterization}(c).

	\begin{algorithm}[H]
		\caption{PerfectMixabilityTesting($C$)}
		\label{alg:perfect mixability}
		\begin{algorithmic}[1]
			\State{$n\assign\nofdroplets{C}$}
			\State{$\mu\assign \average(C)$}
			\State{$c_{max}\assign$ maximum absolute concentration in $C$}
			\State{$P\assign$ powers of odd prime factors of $n$ that are at most $c_{max}$}
			\ForAll{$p\in P$}
			\If{$C$ is $p$-congruent but $C\cup\braced{\mu}$ is not}
			\State{return false}
			\EndIf
			\EndFor
			\State{return true}
		\end{algorithmic}
	\end{algorithm}

	Now, assume that $C$ is perfectly mixable.
	To prove the second part of Theorem~\ref{thm: miscibility characterization}(c),
	namely that a polynomial-size perfect-mixing graph for $C$ can be constructed in polynomial time,
	the idea is to follow the construction from the proof in Section~\ref{sec: polynomial proof} 
	for Theorem~\ref{thm: miscibility characterization}(b).
	In this proof, at each step we choose a mixing pair in $E_\pi\subseteq E$, for $\pi\in\braced{even,odd}$,
	that is $(\lambda)$-safe (for the appropriate invariant $\lambda\in\braced{I,I'}$), 
	and sufficiently far apart (with respect to either 
	the entire configuration or a subset of the same), until $E$ becomes near-final.
	The former conditions can be checked in polynomial time 
	and there are only quadratically many pairs to try. 
	
	The only remaining obstacle is that, the way the argument is presented above, we would
	have verify at each step whether $E$ is near-final and, if it is, to
	find its near-final partition.  This is not possible in polynomial time (see the remarks below).
	We now explain how to circumvent this problem.
	For $n<22$, then this is simple, since $n$ is a constant, so we focus on the case when $n\ge 22$.
    For our construction in Section~\ref{sec: polynomial - arbitrary n},
	we can just continuously mix $(\lambda)$-safe pairs until $E$ becomes $(\lambda)$-mixed.
	When this happens, as $E$ satisfies Invariant~($\lambda$),
	it holds that there is a near-final pair in $E$.
	Furthermore, as explained in Case~1.1 of the proof of Lemma~\ref{lem: farm mix in E odd overlapping E even},
	this $E$ has the form $E=\braced{f_1:c_1,f_2:c_2,f_3:c_3}$ 
	with $f_1\leq 2$, $f_2=2$ and $\half(c_1+c_2)=c_3$,
	so $(c_1,c_2)$ is a near-final pair.
	In this case, computing the corresponding ``near-final''
	partition of $E$ in polynomial-time is easy:
	if $f_1=1$, then mixing $(c_1,c_2)$ produces $E$ satisfying $\nofconcentrations{E}=2$
	and the near-final partition is given in Lemma~\ref{lem: mixability m = 2}.
	Else $f_1=2$ and the near-final partition involves sets of the form 
	$\braced{c_1,c_2}$ and $\braced{c_3}$, where $c_3=\average(E)$.


\emph{Remarks.}
The last two paragraphs bring up a question of whether checking the near-final property can be
done in polynomial time. While this is not essential to our algorithm, we address it here
for the sake of interested readers.

Let $\nofdroplets{E} = n$. Observe first that if $E$ is near-final then it has a
near-final partition where all sets in the partition have different cardinalities.
The reason is simple: if two sets have the same cardinality, we can merge them into
one set, retaining the average value and the property that all cardinalities are powers of $2$.
As a consequence of this, the cardinalities of the sets in $E$'s partition are
uniquely determined by $n$, namely these are the powers of $2$ that appear in the binary
representation of $n$. Let $\omega = \mysum(E) = n\cdot\hatmu$
be the sum of all concentrations in $E$, and let
$k$ be some cardinality (a power of $2$) in this partition of $E$. 
A subset $F\subseteq E$ of cardinality $k$ has $\average(F) = \hatmu$ if and only if
$\mysum(F) = \omega k/n$. 

The above properties lead to an $\NP$-completeness proof.
The idea is this. Consider a restricted variant of the problem where
$n = 3\cdot 2^l$, for some $l$. Using the notation from the paragraph above, we need to
determine whether $E$ has a subset $F$ of cardinality $k = n/3$ and
total sum $\mysum(F) = \omega/3$. This is a variant of the $\Partition$
problem that can easily be shown to be $\NP$-complete. (Details are left to the reader.)


\section{Final Comments}
\label{sec: final comments}

In this paper we gave a complete characterization of perfectly mixable sets,
as well as a polynomial-time algorithm that tests perfect mixability and,
for perfectly mixable sets, computes a polynomial-size perfect-mixing graph.   

The computational complexity of {\MixReachability} remains wide open, even for
the special case of {\MixProducibility}, where
the inputs consist of pure reactant and buffer droplets.
The only hardness result related to {\MixReachability} that we can prove is
that its modified variant, where we ask whether $T$ is reachable from $I$ via
a graph with a fixed (constant) depth is $\NP$-hard (see Appendix~\ref{sec: NP-hardness}).

There are a number of other open questions about computing mixing graphs. For the
case when the target consists of just 
one droplet~\cite{thies2008abstraction,roy2010optimization,huang2012reactant,chiang2013graph},
it is not known whether minimizing waste can be done efficiently, nor even whether
the minimum-waste value is computable.
Unsurprisingly, no complexity results are known for other objective functions, like
minimizing reactant usage or minimizing the number of micro-mixers.

As another example, one can consider the problem of computing a mixing graph
that realizes a given linear mapping. More specifically,
with each mixing graph $G$ we can associate a linear mapping $L_G$ that
maps the vector of concentration values on its inputs to the vector of
concentration values on its outputs. In this problem,
given a linear mapping $L$, we ask whether there exists a mixing graph $G$ with $L_G = L$. 
An algorithm for this problem would allow us to construct mixing graphs that 
can produce a collection of target sets, by changing the buffer/reactant combinations on input.




\appendix

\section{Counter-Example for the Algorithm of Dinh~{\etal}~\cite{dinh2014network}}
\label{sec: counter-example for algorihtm of Dinh}

In Section~\ref{sec: introduction} we gave an outline of the algorithm
proposed by Dinh~\etal~\cite{dinh2014network} for computing mixing graphs
with minimum waste. The algorithm relies on the assumption that an
optimum mixing graph (that is, the graph that minimizes waste)
for a target set with maximum precision $d$ has depth at most $d$.
This seems indeed intuitive --- yet we show that this assumption is not valid,
by constructing a target set $T$ with maximum precision $d$ that requires
depth at least $2d-1$ to be produced by a mixing graph (without waste).

Our target set is
$T = \braced{ \, 2^{-d} \,,\, ((d-1)2^d+1): (1-2^{-d})\, }$,
that is, $T$ has one droplet with concentration $2^{-d}$ and $(d-1)2^d+1$ droplets with
concentration $1-2^{-d}$. 

We first observe that there is a mixing graph of depth $2d-1$ that produces $T$ without waste.
(See Figure~\ref{fig: dinh counterexample}.)
This graph first mixes $0$ and $1$, producing two droplets $\half$.
One droplet $\half$ is mixed $d-1$ times with $0$. This creates a path with
$d-1$ mixers that produce
droplets $2^{-2}, 2^{-3},...,2^{-d}$, plus one more droplet $2^{-d}$ that is sent to the output.
The other droplet $\half$, symmetrically, is mixed repeatedly with $1$,
producing
droplets $1-2^{-2}, 1-2^{-3},...,1-2^{-d}$, plus one more droplet $1-2^{-d}$ that is sent to the output.
For $a = 2,...,d$, we can then mix each droplet 
$2^{-a}$ with $1-2^{-a}$, which produces $2(d-1)$ droplets $\half$.
Each droplet $\half$ can be then mixed repeatedly ($d-1$ times) with $1$'s,
creating a tree-like subgraph of depth $d-1$ that produces
$2^{d-1}$ droplets $1-2^{-d}$. Thus, overall, we produce
one droplet $2^{-d}$ and $(d-1)2^d+1$ droplets $1-2^{-d}$, which is exactly $T$.
The depth of this graph is $2d-1$.

	
\begin{figure}[ht]
	\begin{center}
		\includegraphics[width = 5.2in]{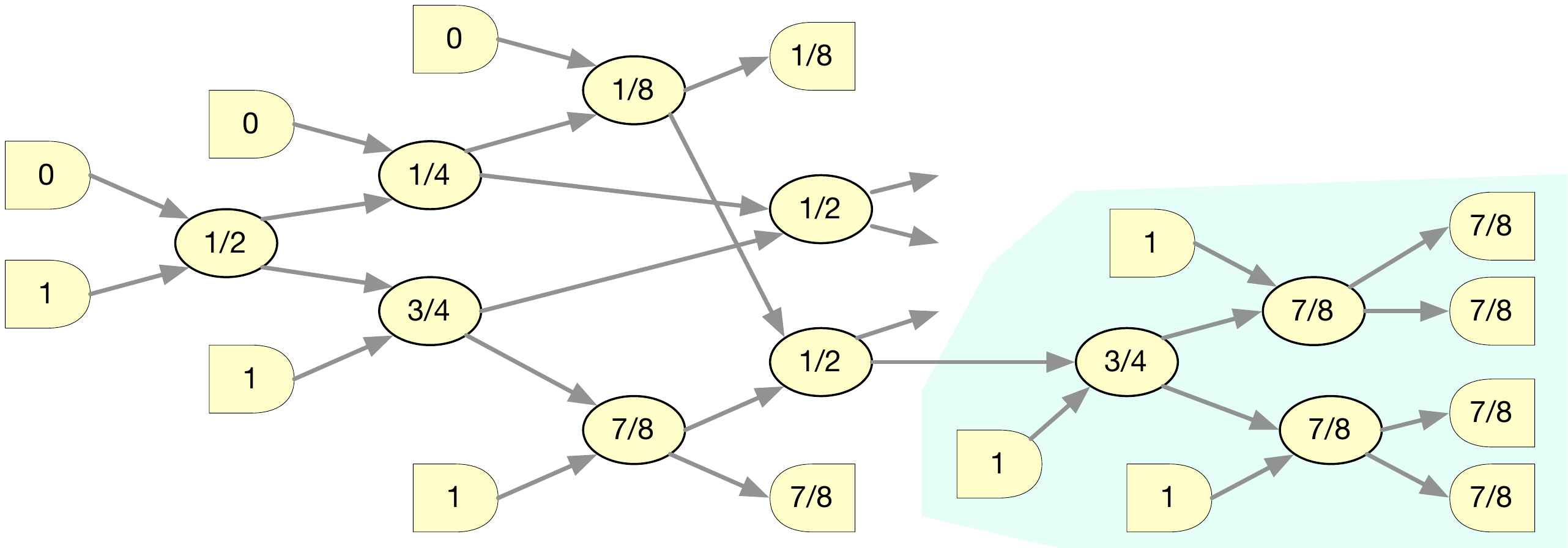}
		\caption{A mixing graph of depth $d=5$ that
		 	produces $T = \braced{ \, \oneeighth \, ,\, 17 : \seveneighths \,}$.
			Only one subgraph converting $\half$ into droplets $\seveneighths$ is shown (on shaded background).}
		\label{fig: dinh counterexample}
	\end{center}
\end{figure}

Next, we claim that producing $T$ without any waste requires depth $2d-1$.
The argument is as follows. Consider a micro-mixer that produces 
droplet $2^{-d}$ of $T$. This mixer actually produces \emph{two} droplets $2^{-d}$
and is at depth at least $d$, because it takes $d$ steps to dilute $1$ to
concentration $2^{-d}$. The fluid in the second droplet $2^{-d}$ (the one not in $T$) 
produced by this mixer
must also end up in some droplets of $T$ and its concentration will increase
to $1-2^{-d}$. The \emph{buffer concentration} in this droplet is $1-2^{-d}$ and it
takes at least $d-1$ steps to reduce it to $2^{-d}$, in order to
obtain a droplet with reactant concentration $1-2^{-d}$. We can therefore
conclude that the depth of any mixing graph for $T$ is at least $2d-1$.


\section{$\NP$-hardness proof of \MixReachability~for constant-depth mixing graphs}
\label{sec: NP-hardness}



In this section we show that the variant of {\MixReachability}
where a mixing graph of constant depth is sought is $\NP$-hard.
First we prove that \MixReachability~is $\NP$-hard for mixing graphs
of depth at most $1$, and later we explain how to extend it to
graphs of arbitrary constant depth.
	
Formally, let the \textsc{Depth-1-\MixReachability} problem be the following: 
Given two configurations $I$ and $T$, determine whether $T$ is reachable from $I$ 
via a mixing graph of depth at most one. (The depth is defined as the maximum
number of nodes on a path from an input to an output. So a mixing graph of depth $1$
does not have any edges between mixers --- each mixer is connected to two input
nodes and two output nodes.)

Recall that \textsc{Numerical-3D-Matching} is defined as follows: 
Given three multisets $X, Y, Z$ of non-negative integers such that $|X| = |Y| = |Z| = m$, 
and a non-negative integer $S$, determine whether $(X,Y,Z)$ 
has a 3D-matching consisting of triples each adding up to $S$.
(A 3D-matching of $(X,Y,Z)$ is defined as a partition $M$ of $X\cup Y\cup Z$
into $m$ triplets of the form $(x,y,z)\in X\times Y \times Z$.)
\textsc{Numerical-3D-Matching} is well-known to be $\NP$-complete.


\begin{theorem}
	\label{thm: np-hardness depth 1}
	The \textsc{Depth-1-\MixReachability} problem is $\NP$-hard.
\end{theorem}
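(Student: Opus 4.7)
The plan is to reduce from \textsc{Numerical-3D-Matching}. Given an instance $(X,Y,Z,S)$ with $|X|=|Y|=|Z|=m$, let $V$ denote an upper bound on all the integer values involved. We will construct configurations $I,T$ with $|I|=|T|=2m$ so that $T$ is reachable from $I$ via a depth-$1$ mixing graph if and only if $(X,Y,Z)$ admits a 3D-matching with every triple summing to $S$. The starting observation is that a depth-$1$ mixing graph is fully specified by a partial matching of $I$: each mixer pairs two inputs $a,b$ and emits two copies of $\half(a+b)$, while any unmatched input passes through to the output unchanged.

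The reduction uses scale separation to force all mixing to occur between an ``$X$-side'' and a ``$Y$-side''. Choose a power of $2$, $M = 2^K$, with $M \ge 4V$. For each $x\in X$ place a droplet of concentration $x/(2M)$ in $I$, for each $y\in Y$ place a droplet of concentration $\half+y/(2M)$ in $I$, and for each $z\in Z$ place two droplets of concentration $\onefourth+(S-z)/(4M)$ in $T$. All concentrations are binary rationals of precision at most $K+2$, so $I$ and $T$ have polynomial size in the input.

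For the forward direction, a 3D-matching with triples $(x_i,y_{\pi(i)},z_{\sigma(i)})$ of sum $S$ yields mixers pairing the $x_i$-droplet with the $y_{\pi(i)}$-droplet, each producing the required pair of droplets of concentration $\onefourth+(S-z_{\sigma(i)})/(4M)$. For the backward direction, by the choice of $M$ the $X$-droplets lie in $[0,\onefourth)$, the $Y$-droplets in $[\half,\threefourths)$, and the $T$-droplets in $[\onefourth,\half)$; hence any $X$-$X$ average, $Y$-$Y$ average, or unmatched droplet lies outside the $T$-range, forcing every mixer to pair exactly one $X$-droplet with one $Y$-droplet. Mixing $x/(2M)$ with $\half+y/(2M)$ produces $\onefourth+(x+y)/(4M)$, which equals $\onefourth+(S-z)/(4M)\in T$ if and only if $x+y+z=S$, and the multiplicity requirement in $T$ then forces the pairing to witness a valid 3D-matching.

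The main obstacle will be pinning down the scale $M$ carefully so that no spurious combination of droplets (same-side pairs or unmatched droplets) can accidentally produce a target concentration; everything else is a routine verification. The extension to arbitrary constant depth $d$ mentioned after the theorem can then be obtained by a padding scheme: replicate each input $2^{d-1}$ times, add matching target copies, and use the first $d-1$ layers to interleave trivial self-mixings (mixing two identical droplets is a no-op on concentration) before invoking the depth-$1$ argument at the final layer.
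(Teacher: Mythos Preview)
Your depth-$1$ argument is correct and follows the same overall strategy as the paper: both reduce from \textsc{Numerical-3D-Matching}, encode $X$ and $Y$ as the input droplets and $S-z$ as (doubled) target droplets, and use a structural constraint to force every mixer to pair one $X$-droplet with one $Y$-droplet. The only difference is the mechanism enforcing that pairing. The paper encodes $a_i = 2x_i + \tfrac12$, $b_j = 2y_j + 1$, $c_k = S - z_k + \tfrac34$ and argues via fractional parts: only an $(a,b)$-average can have fractional part $\tfrac34$. You instead scale by a large $M$ and separate the $X$-, $Y$-, and $T$-droplets into disjoint intervals, so that only an $X$--$Y$ average can land in the target range. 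Both devices work; the paper's is slightly slicker (no need to calibrate $M$), while yours keeps all concentrations in $[0,1]$. One small point worth stating explicitly in your write-up: the claim that $T$-droplets lie in $[\tfrac14,\tfrac12)$ uses $S\ge z$ for all $z\in Z$, which is without loss of generality since otherwise the \textsc{Numerical-3D-Matching} instance is trivially a no-instance.

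Your final paragraph on the constant-depth-$d$ extension is too optimistic. Padding with self-mixings gives the easy direction, but you have not argued why an arbitrary depth-$d$ graph on the replicated instance must decompose back into a depth-$1$ solution; a depth-$d$ graph could intermix copies in nontrivial ways. The paper's extension is different: it changes the encoding to $a_i = 2^\sigma x_i + 2^{-\sigma}$, $b_i = 2^\sigma y_i + 1$, adds $m(2^\sigma-2)$ zero droplets, and sets $c_i = S - z_i + 2^{-\sigma} + 2^{-2\sigma}$, again using a bit-pattern argument. Since this lies outside the theorem you were asked to prove, it does not affect the correctness of your main argument.
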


\begin{proof}
We prove the theorem by giving a polynomial-time reduction from \textsc{Numerical-3D-Matching}.
Let
$X = \braced{x_i}_i$, $Y = \braced{y_i}_i$, and $Z = \braced{z_i}_i$ be the sets from an
instance of \textsc{Numerical-3D-Matching}, as defined above.
We construct two configurations $I$ and $T$ as follows.
For each $i = 1, 2, \dots, m$:
	\begin{itemize}
		\item $I$ contains one droplet $a_i = 2 x_i + \frac{1}{2}$ and one droplet $b_i = 2  y_i + 1$.
		\item $T$ contains two droplets with concentration $c_i = S - z_i + \frac{3}{4}$.
	\end{itemize}
	
We claim that there exists a 3D-Matching $M$ of $(X,Y,Z)$ consisting of triples that
add up to $S$ if and only if $T$ is reachable from $I$ via a mixing graph of depth at most one.


\smallskip
\noindent
 ($\Longrightarrow$)	
	Assume $M$ is a 3D-Matching of $(X, Y, Z)$ where each triplet $(x_i, y_j, z_k)\in M$ adds up to $S$.
	For each $(x_i,y_j,z_k)\in M$, we have 
	\begin{equation*}
		\textstyle
			\half (a_i + b_j) \; = \; \half [\, (2x_i+\frac{1}{2}) + (2y_j + 1)\, ]
			\; = \; x_i + y_j + \frac{3}{4}
			\; = \; S - z_k + \frac{3}{4}
			\; = \; c_k.
	\end{equation*}
	Create a mixing graph $G$ where
	for each $(x_i, y_j, z_k)\in M$ we create a mixer node with inputs $a_i$ and $b_j$
	and two outputs $c_k$. Then $G$ converts $I$ into $T$.

\smallskip
\noindent
($\Longleftarrow$)	
	Assume that there is a mixing graph $G$ of depth at most $1$ that converts $I$ into $T$.
	All numbers in $I$ are either half-integral $a_i$'s or integral $b_j$'s,
	and all numbers $c_k$ in $T$ have fractional part $\frac{3}{4}$. So $G$ must consist of
	$m$ mixer nodes, where each node has incoming edges from some $a_i \in I$ and some $b_j\in I$ and outgoing
	edges to two $c_k$'s in $T$, where $c_k = \half (a_i + b_j)$.	
	Create a 3D-Matching $M$ as follows: for each such node include 
	the corresponding triple $(x_i, y_j, z_k)$ in $M$.
	By simple calculation (reversing the calculation in implication ($\Longrightarrow$)), 
	we get that $x_i + y_j + z_k = S$.
Thus $M$ is indeed a correct solution to the instance of \textsc{Numerical-3D-Matching}.
\end{proof}

The proof in Theorem~\ref{thm: np-hardness depth 1} 
can be extended to mixing graphs with any constant depth $\sigma\ge 2$.
The idea is to modify the reduction from the theorem above, by
including in $I$ droplets $a_i = 2^\sigma x_i + 2^{-\sigma}$ and $b_i = 2^\sigma y_i + 1$,
for $i=1,...,m$, plus $m(2^\sigma-2)$ droplets with concentration $0$.
In $T$ we include $2^\sigma$ droplets of concentration 
$c_i = S - z_i + 2^{-\delta} + 2^{-2\delta}$, for $i=1,...,m$. 
The correctness proof of this reduction is left to the reader.

\end{document}